\newenvironment{pfof}[1]
  {\trivlist\PRstyle\item[]{\bfseries Proof of #1:}\newline}{\QED\endtrivlist}
\newcommand\qed{\QED}
\newif\ifdraft\draftfalse
\newif\ifshort\shortfalse
\newif\ifnarrow\narrowtrue
\newcommand\nk[1]{\textcolor{red}{[#1 -nk]}}
\newcommand\asd[1]{{\color{blue}[#1 -asd]}}
\newcommand\asdm[1]{{\footnotesize {\color{blue}[memo (ignorable): #1 -asd]}}} % unimportant comment
\newcommand\nk[1]{}
\newcommand\asd[1]{}
\newcommand\asdm[1]{}
\newcommand\penv{\seq{x}_{1,\ldots,k}}
\newcommand\toform[1]{#1^{\mu}}
\newcommand\DEFEQ{\mathbin{:=}}
\newcommand\Sum{\mathit{sum}}
\newcommand\Sumex{\mathit{ex1}}
\newcommand\evC{E}
\newcommand\hole{[\,]}
\newcommand\red{\longrightarrow}
\newcommand\reds{\red^*}
\newcommand\todisjeq[1]{\sim_{#1}}
\newcommand\todisj[1]{#1^{\#}}
\newcommand\todisjaux[1]{#1^{\#'}}
\newcommand\fromdisj[1]{#1^{\flat}}
\newcommand\fromdisjT[2]{#1^{\flat,#2}}
\newcommand\fromdisjTE[1]{#1^{\flat}}
\newcommand\fromdisjTEp[1]{#1^{\flat'}}
\newcommand\Mark{*}
\newcommand\maxar{M}
\newcommand\NT{\Theta} %\mathcal{N}}
\newcommand\muhflz{$\mu$HFL(Z)}
\newcommand\hflz{HFL(Z)}
\newcommand\nuhflz{$\nu$HFL(Z)}
\newcommand\rethfl{\textsc{ReTHFL}}
\newcommand\pcsat{PCSat}
\newcommand\hoice{\textsc{HoIce}}
\newcommand\horus{Horus}
\newcommand\dual[1]{\overline{#1}}
\newcommand\decomp{\mathtt{decomp}}
\newcommand\decomparg{\mathtt{decomparg}}
\newcommand\form{\varphi}
\newcommand\formalt{\psi}
\newcommand\formaltp{\xi}
\newcommand\fty{\tau}
\newcommand\sty{\kappa}
\newcommand\INT{\mathtt{Int}}
\newcommand\Prop{\star}
\newcommand\ord{\mathtt{ord}}
\newcommand\ar{\mathtt{ar}}
\newcommand\stenv{\mathcal{K}}
\newcommand\lstenv{{\Sigma}}
\newcommand\toST[1]{{#1}\downarrow_{\INT}}
\newcommand\lsty{\sigma}
\newcommand\lrsty{\eta}
\newcommand\COL{\mathbin{:}}
\newcommand\p{\vdash}
\newcommand\pn{\p_{\NT}}
\newcommand\tr{\leadsto}
\newcommand\seq[1]{\widetilde{#1}}
\newcommand\bra[2]{#1\land #2}
\newcommand\nc{\lor}
\newcommand\gar{\mathtt{gar}}
\newcommand\pST{\p_{\mathtt{ST}}}
\newcommand\pLST{\p_{\mathtt{LST}}}
\newcommand\sem[1]{\mathop{[\![}#1\mathop{]\!]}}
\newcommand\semd[1]{\mathop{(\!|}#1\mathop{|\!)}}
\newcommand\TRUE{\mathtt{true}}
\newcommand\FALSE{\mathtt{false}}
\newcommand\set[1]{\{#1\}}
\newcommand\LEQ{\sqsubseteq}
\newcommand\Z{\mathbf{Z}}
\newcommand\imp{\Rightarrow}
\newcommand\IFF{\Leftrightarrow}
\newcommand\GLB{\sqcap}
\newcommand\LUB{\sqcup}
\newcommand\LFP{\mathbf{LFP}}
\newcommand\dom{\mathit{dom}}
\newcommand\err{\mathbf{succ}}
\newcommand\dnt{\blacksquare}
\newcommand\ant{\square}
\newcommand\Eunit{(\,)}
\newcommand\Efix[3]{\mathbf{fix}^{#3}(#1,#2)}
\newcommand\Eass{\mathbf{assume}}
\newcommand\Eif[3]{\mathbf{if}\ #1\ \mathbf{then}\ #2\ \mathbf{else}\ #3}
\newcommand\Tunit{\mathtt{unit}}
\newcommand\tohflz[1]{#1^\dagger}
\newcommand\PlayerP{\mathtt{P}}
\newcommand\PlayerO{\mathtt{O}}
\newcommand\mochi{\textsc{MoCHi}}
   \def\@citecolor{blue}%
   \def\@urlcolor{blue}%
   \def\@linkcolor{blue}%
\def\orcidID#1{\smash{\href{http://orcid.org/#1}{\protect\raisebox{-1.25pt}{\protect\includegraphics{ORCID_Color}}}}}
\crefname{theorem}{Theorem}{Theorems}
\crefname{lemma}{Lemma}{Lemmas}
\crefname{figure}{Figure}{Figures}
\crefname{corollary}{Corollary}{Corollaries}
\crefname{remark}{Remark}{Remarks} % remark might be redefined above
\crefname{remk}{Remark}{Remarks}
\crefname{definition}{Definition}{Definitions} % definition is redefined above
\crefname{defi}{Definition}{Definitions}
\crefname{example}{Example}{Examples}
\crefname{proposition}{Proposition}{Propositions}
\crefname{section}{Section}{Sections}
\renewcommand\cref{\Cref}
\newlength{\negph@wd}
\DeclareRobustCommand{\negphantom}[1]{%
  \ifmmode
    \mathpalette\negph@math{#1}%
  \else
    \negph@do{#1}%
  \fi
}
\newcommand{\negph@math}[2]{\negph@do{$\m@th#1#2$}}
\newcommand{\negph@do}[1]{%
  \settowidth{\negph@wd}{#1}%
  \hspace*{-\negph@wd}%
}
\newcommand{\defe}{:=}
\newcommand\rnp[1]{\ensuremath{\mathrm{\rn{#1}}}} % rule name with paren
\newcommand\dep[1]{\succ_{#1}}
\newcommand\tree{v}
\renewcommand\pi{\tree}
\newcommand\esubst[2]{\{#2/#1\}}
\newcommand\ess[4]{\{ #2 / #1 ,\dots, #4 / #3 \}}
\newcommand\subst[2]{[#2/#1]}
\newcommand\oss[4]{[ #2 / #1 ,\dots, #4 / #3 ]} % ordinary ss
\newcommand{\eqdpara}[1]{=_{#1}}
\newcommand{\eqd}{\eqdpara{D}}
\newcommand{\eqdp}{\eqdpara{D'}}
\newcommand{\reddpara}[1]{\red_{#1}}
\newcommand{\redd}{\reddpara{D}}
\newcommand{\redds}{\redd^*}
\newcommand{\nf}{\zeta}
\newcommand{\fm}{\form}
\newcommand{\fa}{\formalt}
\newcommand{\fp}{\formaltp}
\newcommand{\myitem}{\(\bullet\)\ }
\newcommand\decompA{\mathtt{decompArg}}
\newcommand\fea{\alpha}
\newcommand\feb{\beta}
\newcommand\fec{\gamma}
\newcommand\Nat{\mathbb{N}}
\newcommand\btype[2]{{#1}^{@#2}} % boty type
\newcommand\caa[3]{\seq{#1}_{\setminus #2}^{#3}} % common (Tr-App) arguments
\newcommand\seqe[1]{\seq{e}_{(#1)}}
\newcommand\seqz[1]{\seq{z}_{(#1)}}
\newcommand\semes[2]{\sem{(#1,#2)}}
\newcommand\envx{\penv}
\newcommand\varf[2]{#1}
\newcommand\beeq{\eqdp}
\newcommand\tags[2][]{\tag*{\((#2)\)#1}}
\newcommand\FVf{\mathrm{FV}'}
\newcommand\maar[1]{\maxar} % max arity auto replaced
\begin{document}

%%
%% The "title" command has an optional parameter,
%% allowing the author to define a "short title" to be used in page headers.
\title{On Higher-Order Reachability Games vs May Reachability}

\author{Kazuyuki Asada\\
  Tohoku University\\
  \and Hiroyuki Kastura\\
  The University of Tokyo\\
  \and Naoki Kobayashi\\
  The University of Tokyo}
\maketitle

\runninghead{K. Asada et al.}{Higher-Order Reachability Games vs May Reachability}
%%
%% By default, the full list of authors will be used in the page
%% headers. Often, this list is too long, and will overlap
%% other information printed in the page headers. This command allows
%% the author to define a more concise list
%% of authors' names for this purpose.
%%\renewcommand{\shortauthors}{Asada and Kobayashi}

%%
%% The abstract is a short summary of the work to be presented in the
%% article.
\begin{abstract}
We consider the reachability problem for higher-order functional
programs and study the relationship between reachability games (i.e.,
the reachability problem for programs with angelic and demonic
nondeterminism) and may-reachability (i.e., the reachability problem
for programs with only angelic nondeterminism).  We show that
reachability games for order-\(n\) programs can be reduced to
may-reachability problems for order-(\(n+1\)) programs, and vice versa.
We formalize the reductions by using higher-order fixpoint logic
and prove their correctness. We also discuss applications \ifshort\else of the reductions \fi
to higher-order program verification.
\end{abstract}

\begin{keywords}
Higher-order programs, reachability games, may-reachability
\end{keywords}

\section{Introduction}
\label{sec:intro}

This paper considers the reachability problem for simply-typed, call-by-name
higher-order functional programs with integers, recursion, and two kinds of non-deterministic branches
(angelic and demonic ones). The problem of solving reachability games (hereafter,
simply called the reachability game problem) asks, given a higher-order functional
program and a specific control point \(\err\) of the program,
whether there exists a sequence of choices on angelic non-determinism
that makes the program reach \(\err\)
no matter what choices are made on demonic non-determinism.
Thus, our reachability game problem is just a special case of the notion of two-player reachability
games~\cite{Automata}, where the game arena is specified as a higher-order functional program.
\ifshort\else
(An important restriction compared to the general notion of reachability games is that
each vertex may have only a finite number of outgoing edges, although there can be infinitely
many vertices.)\fi
Various program verification problems can be reduced to the reachability game problem.
For example, the termination problem, which asks whether a given program terminates for
any sequence of non-deterministic choices, is a special case of
the reachability game problem, where all the non-deterministic branches are demonic,
and all the termination points are expressed by \(\err\). The safety verification problem,
which asks whether a given program may fall into an error state after some sequence of
non-deterministic choices, is also a special case, where all the non-deterministic branches are
angelic, and error states are expressed by \(\err\).

We establish relations between the reachability game problem and the
\emph{may-reachability} problem, a special case of the reachability game problem
where all the non-deterministic choices are angelic (hence, may-reachability 
is a one-player game).
We show mutual translations between
the reachability game problem for order-\(n\) programs and
the may-reachability problem for order-(\(n+1\)) programs. (Here, the order of a program
is defined as the type-theoretic order; the order of a function that takes only integers is 0,
and the order of a function that takes an order-0 function is 1, etc.)
The translations are size-preserving in the sense that for any order-\(n\) program \(M\),
one can effectively construct an order-(\(n+1\)) program \(M'\) such that
the answer to the reachability game problem for \(M\) is the same as 
the answer to the may-reachability problem for \(M'\), and the size of \(M'\) is polynomial
in that of \(M\); and vice versa.

The translation from reachability games to may-reachability allows us
to use higher-order
program verification tools specialized to may-reachability (or, unreachability to error states)
such as \mochi{}~\cite{KSU11PLDI} and Liquid types~\cite{Jhala08}
to check a wider class of properties represented as reachability games. Conversely,
the translation from may-reachability to reachability games allows
us, for example, to use verification tools that can solve reachability games for order-0 programs,
such as CHC solvers~\cite{DBLP:journals/fmsd/KomuravelliGC16,Eldarica,DBLP:journals/jar/ChampionCKS20}
to check may-reachability of order-1 programs.

We formalize our translations for \muhflz{}, which is a fragment
\hflz{}~\cite{KTW18ESOP} without greatest fixpoint operators and modal operators, where
\hflz{} is an extension of Viswanathan and Viswanathan's higher-order fixpoint
logic~\cite{Viswanathan04} with integers. 
The use of higher-order fixpoint formulas rather than higher-order programs in the formalization
of the translations is justified by
the result of Kobayashi et al.~\cite{KTW18ESOP,DBLP:conf/pepm/WatanabeTO019}, 
 that there is a direct correspondence between the reachability problem for higher-order programs
and the validity problem for the corresponding higher-order fixpoint formulas,
where angelic and demonic branches in programs correspond to disjunctions and conjunctions
respectively.

The rest of this paper is structured as follows.
\ifshort
Section~\ref{sec:pre} introduces \muhflz{}, and
explains its relationship with the reachability problem for higher-order programs.
Section~\ref{sec:todisj} gives a reduction from
the reachability game problem to
may-reachability problem, and 
Section~\ref{sec:fromdisj} gives a reduction in the opposite direction.
Section~\ref{sec:app} discusses applications and reports some experimental results.
Section~\ref{sec:related} discusses related work and Section~\ref{sec:conc} concludes the paper.
The proofs and definitions omitted in this paper are found in the longer version~\cite{DBLP:journals/corr/abs-2203-08416}.
\else
Section~\ref{sec:pre} introduces \muhflz{}, and clarifies the relationship between
the validity checking problem for \muhflz{} and the reachability problem for higher-order programs.
Section~\ref{sec:todisj} formalizes a reduction from
the order-\(n\) reachability game problem to
the order-(\(n+1\)) may-reachability problem (as a translation of \muhflz{} formulas),
and proves its correctness.
Section~\ref{sec:fromdisj} formalizes a reduction in the opposite direction,
from the order-(\(n+1\)) may-reachability problem
to the order-\(n\) reachability game problem,
and proves its correctness.
Section~\ref{sec:app} discusses applications of the reductions and reports some experimental results.
Section~\ref{sec:related} discusses related work and Section~\ref{sec:conc} concludes the paper.
This is an extended and revised version of the paper that appeared in Proceedings of RP 2022~\cite{DBLP:conf/rp/AsadaKK22}.
We have added definitions, examples and full proofs.
\fi

\section{\muhflz{} and Reachability Problems}
\label{sec:pre}
In this section, we first introduce \muhflz{}, a fragment of higher-order fixpoint logic
\hflz{}~\cite{KTW18ESOP} (which is in turn an extension of Viswanathan and Viswanathan's
higher-order fixpoint logic~\cite{Viswanathan04} with integers)
without greatest fixpoint operators. We then review the relationship between \muhflz{} and
reachability problems, and state the main theorem of this paper.
\ifshort
\subsection{\muhflz{}}
\else
\subsection{Syntax}
\label{sec:syntax}
\fi
The set of \emph{(simple) types}, ranged over by \(\sty\), is given by:
\ifshort
\[
\begin{array}{l}
\sty \mbox{ (types)} ::= \INT \mid \fty\qquad
\fty \mbox{ (predicate types)} ::= \Prop \mid \sty \to \fty.
\end{array}
\]
\else
\[
\begin{array}{l}
\sty \mbox{ (types)} ::= \INT \mid \fty\\
\fty \mbox{ (predicate types)} ::= \Prop \mid \sty \to \fty.
\end{array}
\]
\fi
For a type \(\sty\), the \emph{order} and \emph{arity} of \(\sty\), written
\(\ord(\sty)\) and \(\ar(\sty)\) respectively, are defined by:
\ifshort
\(\ord(\INT)=-1\), \(\ord(\Prop)=0\),
\(\ord(\sty\to\fty) = \max(\ord(\fty), \ord(\sty)+1)\),
\(\ar(\INT)=\ar(\Prop)=0\), and
\(\ar(\sty\to\fty) = \ar(\fty)+1\).
\else
\[
\begin{array}{l}
\ord(\INT)=-1 \qquad \ord(\Prop)=0\\
\ord(\sty\to\fty) = \max(\ord(\fty), \ord(\sty)+1)\\
\ar(\INT)=\ar(\Prop)=0\qquad
\ar(\sty\to\fty) = \ar(\fty)+1.
\end{array}
\]
For example, \(\ord(\INT\to\INT\to\Prop)=0\) and
\(\ord((\INT\to \Prop)\to\Prop)=1\).\footnote{Defining the order
  of \(\INT\) as \(-1\) is a bit unusual, but convenient for stating
  our technical result.}
\fi
  
The set of \emph{\muhflz{} formulas}, ranged over by \(\form\), is given by:
\begin{align*}
&    \form \mbox{ (formulas) } ::= %%\\\quad\ 
    x \mid \form_1\lor \form_2 \mid \form_1\land\form_2 %%\\\quad
     \mid \mu x^\fty.\form  %%\quad\mbox{ (least fixpoint operator) }\\\quad
\mid \form_1\form_2 \mid \lambda x^\sty.\form\ %%quad \mbox{ (\(\lambda\)-abstractions and applications)}\\\quad
  \mid \form\, e \mid e_1\le e_2 \\ %%\quad  \mbox{ (extension with integers)}\\
&    e \mbox{ (integer expressions) }::= n \mid x \mid  e_1+e_2 \mid e_1\times e_2.
\end{align*}
  Intuitively,
  \(\mu x^\fty.\form\) denotes the least predicate \(x\) of type \(\fty\) such
  that \(x=\form\).
  We write \(\TRUE\) and \(\FALSE\) for \(0\le 0\) and \(1\le 0\) respectively.
For a formula \(\form\), the \emph{order} of \(\form\) is defined as:
\(\max(\set{0}\cup\set{\ord(\fty)\mid \mu x^\fty.\form'\mbox{ occurs in }\form})\).
We call a \muhflz{} formula \(\form\) \emph{disjunctive} if
the conjunction \(\land\) occurs in \(\form\) only in the form of
\(e_1\le e_2\land \form_1\) (i.e., the left-hand side of \(\form\) is
a primitive constraint on integers).

We write \(\seq{\form}_{j,\ldots,k}\) for a sequence of formulas \(\form_j,\ldots,\form_k\);
it denotes an empty sequence if \(k<j\).
We often omit the subscript and just write \(\seq{\form}\) for \(\seq{\form}_{j,\ldots,k}\)
when the subscript
%length of the sequence
is not important. Similarly, we also write
\(\seq{e}\) and \(\seq{\sty}\) for sequences of expressions and types respectively.
We use the metavariables
\(\fea\), \(\feb\), and \(\fec\) to denote
either a formula or an integer expression.

The simple type system for \muhflz{} formulas is 
defined in Figure~\ref{fig:st}.
Henceforth, we consider only well-typed formulas (i.e., formulas
\(\form\) such that \(\stenv\pST\form:\sty\) for some \(\stenv\) and \(\sty\)).
A formula \(\form\) is called a \emph{closed} formula of type
\(\sty\) if \(\emptyset\pST\form:\sty\).

\begin{figure*}[tbp]
  \begin{multicols}{2}
    \typicallabel{T-Plus}
  \infrule[T-Var]{}{\stenv,x\COL\sty \pST x:\sty}
  \infrule[T-Or]{\stenv\pST \form_1:\Prop\andalso \stenv\pST \form_2:\Prop}
  {\stenv\pST\form_1\lor\form_2:\Prop}
  \infrule[T-And]{\stenv\pST \form_1:\Prop\andalso \stenv\pST \form_2:\Prop}
  {\stenv\pST\form_1\land\form_2:\Prop}
  \infrule[T-Mu]{\stenv,x\COL\fty\pST \form:\fty}
          {\stenv\pST\mu x^\fty.\form:\fty}
  \infrule[T-App]{\stenv\pST \form_1:\fty_2\to\fty\andalso \stenv\pST\form_2:\fty_2}
          {\stenv\pST\form_1\form_2:\fty}
  \infrule[T-Abs]{\stenv,x\COL\sty\pST\form:\fty}
          {\stenv\pST\lambda x^\sty.\form:\sty\to\fty}
  \infrule[T-AppInt]{\stenv\pST \form:\INT\to\fty\andalso \stenv\pST e:\INT}
          {\stenv\pST\form\,e:\fty}
  \infrule[T-Le]{\stenv\pST e_1:\INT\andalso \stenv\pST e_2:\INT}
          {\stenv\pST e_1\le e_2:\Prop}
  \infrule[T-Int]{}
          {\stenv\pST n:\INT}
  \infrule[T-Plus]{\stenv\pST e_1:\INT\andalso \stenv\pST e_2:\INT}
          {\stenv\pST e_1+ e_2:\INT}
  \infrule[T-Mult]{\stenv\pST e_1:\INT\andalso \stenv\pST e_2:\INT}
          {\stenv\pST e_1\times e_2:\INT}
\end{multicols}
  \caption{Simple Type System for \muhflz{}}
  \label{fig:st}
\end{figure*}

\ifshort
For a closed formula \(\form\), we write \(\sem{\form}\) for the semantics
of \(\form\). If \(\form\) has type \(\Prop\), then
\(\sem{\form}\) is either \(\top\) (meaning that the formula is valid)
or \(\bot\) (invalid). Similarly, for a closed expression \(e\),
we write \(\sem{e}\) for the integer value of \(e\). The formal semantics
of formulas is found in the full version of
this paper~\cite{DBLP:journals/corr/abs-2203-08416}.
The \emph{validity checking problem} for \muhflz{} is
the problem of deciding whether
\(\sem{\form}=\top\),
given a closed \muhflz{} formula \(\form\) of type \(\Prop\).
\else
\subsection{Semantics of \muhflz{} Formulas}
For each simple type \(\sty\), we define the partially ordered set
\(\sem{\sty}=(\semd{\sty}, \LEQ_{\sty})\) where \(\LEQ_{\sty}\subseteq
\semd{\sty}\times\semd{\sty}\) by:
\[
\begin{array}{l}
  \semd{\INT}=\Z \quad m\LEQ_{\INT}n \IFF m=n\quad
  \semd{\Prop}=\set{\bot,\top}\quad
  x\LEQ_{\Prop}y \IFF x=\bot\lor y=\top\\
  \semd{\sty\to\fty} = %\\\ 
  \set{f\in \semd{\sty}\to\semd{\fty}\mid
    \forall x,y\in \semd{\sty}.x\LEQ_{\sty}y \imp f(x)\LEQ_{\fty} f(y)}\\
  f\LEQ_{\sty\to\fty}g \IFF
    \forall x\in \semd{\sty}.f(x)\LEQ_{\fty} g(y).\\
%\sem{\INT} = (\Z, \set{(n,n)\mid n\in \Z})\\
\end{array}
\]
Here, \(\Z\) denotes the set of integers.
For each \(\fty\), \(\sem{\fty}\) (but not \(\sem{\INT}\)) forms a complete lattice.
We write \(\bot_\fty\) (\(\top_\fty\)) for
the least (greatest, resp.) element of \(\sem{\fty}\), and
\(\GLB_\fty\) (\(\LUB_\fty\), resp.) for the greatest lower bound (least upper bound, resp.) operation with respect to \(\LEQ_\fty\).
We also define the least fixpoint operator
\(\LFP_\fty\in \semd{(\fty\to\fty)\to\fty}\) by:
\[
\begin{array}{l}
  \LFP_\fty(f) = \GLB_\fty \set{g\in\semd{\fty}\mid f(g)\LEQ_\fty g}\\
%%  \GFP_\fty(f) = \LUB \set{g\in\semd{\fty}\mid g\LEQ_\fty f(g)}.
\end{array}
\]

For a simple type environment \(\stenv\), we write \(\semd{\stenv}\)
for the set of maps \(\rho\) such that
\(\dom(\rho)=\dom(\stenv)\) and \(\rho(x)\in\semd{\stenv(x)}\) for each
\(x\in\dom(\rho)\).

For each valid type judgment \(\stenv\pST \form:\sty\), 
its semantics \(\sem{\stenv\pST \form:\sty}\in \semd{\stenv}\to \semd{\sty}\)
is defined by:
\begin{align*}
&  \sem{\stenv,x\COL\sty\pST x\COL\sty}(\rho) = \rho(x)\\
&  \sem{\stenv\pST\form_1\lor\form_2:\Prop}{\rho}
  = %\\&  \quad
  \sem{\stenv\pST\form_1:\Prop}{\rho}\LUB_{\Prop}
  \sem{\stenv\pST\form_2:\Prop}{\rho}\\&
  \sem{\stenv\pST\form_1\land\form_2:\Prop}{\rho}
  = %\\&\quad
  \sem{\stenv\pST\form_1:\Prop}{\rho}\GLB_{\Prop}
  \sem{\stenv\pST\form_2:\Prop}{\rho}\\&
  \sem{\stenv\pST \mu x^\fty.\form:\fty}{\rho}
  = %\\&\quad
  \LFP_\fty(\lambda v\in\semd{\fty}.\sem{\stenv,x\COL\fty\pST\form:\fty}(\rho\set{x\mapsto v}))\\&
%%  \sem{\stenv\pST \nu x^\fty.\form:\fty}{\rho}
%%  = \GFP(\lambda v\in\semd{\fty}.\sem{\stenv,x\COL\fty\pST\form:\fty}(\rho\set{x\mapsto v}))\\&
  \sem{\stenv\pST \lambda x^\sty.\form:\fty}{\rho}
  = %\\&\quad
  \lambda w\in\semd{\sty}.\sem{\stenv,x\COL\sty\pST\form:\fty}(\rho\set{x\mapsto w})\\&
  \sem{\stenv\pST \form_1\form_2:\fty}{\rho}
  = %\\&\quad
  \sem{\stenv\pST \form_1:\fty_2\to \fty}{\rho}\,
  (\sem{\stenv\pST \form_2:\fty_2}{\rho})\\&
  \sem{\stenv\pST \form\,e:\fty}{\rho}
  = %\\&\quad
  \sem{\stenv\pST \form:\INT\to \fty}{\rho}\,
  (\sem{\stenv\pST e:\INT}{\rho})\\&
  \sem{\stenv\pST e_1\le e_2:\Prop}{\rho}
  = %\\&\quad
  \left\{\begin{array}{ll}
  \top &\mbox{if $\sem{\stenv\pST e_1:\INT}{\rho}\le\sem{\stenv\pST e_2:\INT}{\rho}$}\\
  \bot &\mbox{otherwise}
  \end{array}\right.\\&
  \sem{\stenv\pST n:\INT}{\rho}=n\\&
  \sem{\stenv\pST e_1+ e_2:\INT}{\rho}
  = %\\&\quad
  \sem{\stenv\pST e_1:\INT}{\rho}+\sem{\stenv\pST e_2:\INT}{\rho}\\&
  \sem{\stenv\pST e_1\times e_2:\INT}{\rho}
  = %\\&\quad
  \sem{\stenv\pST e_1:\INT}{\rho}\times\sem{\stenv\pST e_2:\INT}{\rho}\\
\end{align*}

For a closed formula \(\form\) of type \(\Prop\), we just write 
\(\sem{\form}\) for \(\sem{\emptyset\pST\form:\Prop}\).

The \emph{validity checking problem} for \muhflz{} is
the problem of deciding whether
\(\sem{\form}=\top\),
given a closed \muhflz{} formula \(\form\) of type \(\Prop\).
\fi
Note that the validity checking problem  is undecidable.

For closed formulas, the following alternative semantics is sometimes
convenient. Let us define the reduction relation \(\form\red\form'\)
by the following rules.

\begin{multicols}{2}
\infrule{i\in\set{1,2}}{\evC[\form_1\lor\form_2]\red \evC[\form_i]}
\infrule{}{\evC[\TRUE\land \form]\red \evC[\form] }
\infrule{}{\evC[\FALSE\land \form]\red \evC[\FALSE]}
\infrule{}{\evC[\mu x.\form]\red \evC[[\mu x.\form/x]\form]}
\infrule{}{\evC[(\lambda x.\form)e]\red \evC[[e/x]\form]}
\infrule{}{\evC[(\lambda x.\form)\formalt]\red \evC[[\formalt/x]\form]}
\vspace{.23\baselineskip}
\ifshort
\infrule{b = \left\{\begin{array}{ll}
  \TRUE & \mbox{if $\sem{e_1}\le \sem{e_2}$}\\
  \FALSE & \mbox{otherwise}
  \end{array}\right.
}{\evC[e_1\le e_2]\red \evC[b]}
\else
\infrule{b = \left\{\begin{array}{ll}
  \TRUE & \mbox{if $\sem{\pST e_1:\INT}\le \sem{\pST e_2:\INT}$}\\
  \FALSE & \mbox{otherwise}
  \end{array}\right.
}{\evC[e_1\le e_2]\red \evC[b]}
\fi
\end{multicols}
%  \\ \asd{\text{I intend to add: ``\((e_1\le e_2) \neq \TRUE, \quad (e_1\le e_2) \neq \FALSE\)''}}
%\\ \asd{\text{otherwise, we have \(\TRUE \red \TRUE\) etc. (Now \(\TRUE \defe (0\le0)\))}}
%  \\ \nk{\text{There is no such case, as we are considering only closed, well-typed formulas.}}
%\end{multicols}
%%\nk{As we discussed offline, since \(\TRUE\) is defined as a shorthand for \(0\le 0\),
%%  the above rule allows infinite reductions
%%  like \(0\le 0 \red 0\le 0 \red \cdots\). This does not, however, seem to cause a problem
%%  for the formalizations in the rest of the paper.
%%}

Here, \(\evC\) denotes an evaluation context, defined by:
\ifshort
\(
\evC ::= \hole\mid \evC\land \form \mid \evC\,\form\).
%%\asd{\text{Should we add } \evC\,\form ?}
\else
\[
\evC ::= \hole\mid \evC\land \form \mid \evC\,\form.
%%\asd{\text{Should we add } \evC\,\form ?}
\]
\fi
We write \(\reds\) for the reflexive and transitive closure of \(\red\).
We have the following fact (see, e.g., \cite{DBLP:conf/lics/Tsukada20}).
\begin{fact}
  \label{fact:red}
Suppose \(\pST \form:\Prop\). Then,
  \(\sem{\form}=\top\) if and only if \(\form\reds \TRUE\).
\end{fact}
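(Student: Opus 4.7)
The plan is to prove the two directions separately. The ``if'' direction, that $\form \reds \TRUE$ implies $\sem{\form}=\top$, is routine. I would first establish the standard substitution lemmas for integer expressions and for formulas---namely $\sem{[e/x]\form}(\rho) = \sem{\form}(\rho\{x \mapsto \sem{e}(\rho)\})$ and analogously for $[\formalt/x]\form$---by structural induction on $\form$. Then by case analysis on the reduction rules I would show that $\form\red\form'$ implies $\sem{\form}=\sem{\form'}$: the $\beta$-rules use the substitution lemma, the $\mu$-unfolding rule uses the defining property $\LFP_\fty(F) = F(\LFP_\fty(F))$ guaranteed by monotonicity, and the congruence under evaluation contexts $\evC$ is immediate from compositionality of $\sem{\cdot}$ together with monotonicity. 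Since $\sem{\TRUE}=\top$, the soundness direction follows by induction on the length of $\form \reds \TRUE$.

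The ``only if'' direction is the main challenge, and I would approach it via Kleene approximation of least fixpoints combined with a normalization argument for the $\mu$-free fragment. First, define the $n$-th finite unfolding $\mu^n x^\fty.\form_0$ by $\mu^0 x^\fty.\form_0 \DEFEQ \lambda \seq{y}.\FALSE$ (a syntactic encoding of $\bot_\fty$ at the appropriate arity) and $\mu^{n+1} x^\fty.\form_0 \DEFEQ [\mu^n x^\fty.\form_0 / x]\form_0$, extending to nested $\mu$-subterms by bounding each one. Applying Kleene's theorem to the monotone operator $F(v) = \sem{\form_0}(\rho\{x\mapsto v\})$ on the complete lattice $\sem{\fty}$ gives $\sem{\mu x^\fty.\form_0}(\rho) = \BLUB_n F^n(\bot_\fty)$, so each closed $\Prop$-formula $\form$ admits finite approximants $\form^{(n)}$ with $\sem{\form}=\BLUB_n \sem{\form^{(n)}}$; because $\sem{\Prop}$ is the two-element lattice, $\sem{\form}=\top$ forces $\sem{\form^{(n)}}=\top$ for some $n$.

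Next, I would show that every closed $\mu$-free formula $\formalt$ of type $\Prop$ is strongly normalizing under $\red$ and reduces to a Boolean matching its denotation. This is a standard Tait/Girard-style reducibility argument: define logical predicates $\mathcal{R}_\fty$ on closed terms, prove the fundamental lemma by induction on the typing derivation (handling $\lor$, $\land$, applications, $\lambda$-abstractions, and the integer primitives in the usual way), and conclude that a closed $\mu$-free $\Prop$-formula reduces to $\TRUE$ precisely when its semantics is $\top$. Applied to $\form^{(n)}$ this yields $\form^{(n)}\reds\TRUE$. Finally, since the operational rule $\evC[\mu x.\form_0]\red\evC[[\mu x.\form_0/x]\form_0]$ performs one unfolding at a time, one can replay this reduction sequence inside $\form$ itself, replacing each occurrence of the $\bot_\fty$-placeholder in $\form^{(n)}$ by further unfoldings of $\mu x.\form_0$---which, by monotonicity and semantic soundness of $\red$ (already proved), can only preserve the reachability of $\TRUE$.

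The principal obstacle is this last gluing step: bridging the syntactic $\bot_\fty$-placeholder used in the approximants with the actual dynamics of $\red$, which only unfolds $\mu$ to its own body and never to a pure bottom. A cleaner alternative, which I would likely prefer, is to prove both directions uniformly via a single logical relation $\mathcal{L}_\fty$ between closed terms of type $\fty$ and elements of $\semd{\fty}$ satisfying $\mathcal{L}_\Prop(\form,\top)\IFF \form\reds\TRUE$, handling the $\mu$-clause in the fundamental lemma by an internal induction on Kleene stages. This packages the continuity and normalization arguments together and sidesteps the need to encode a syntactic bottom element.
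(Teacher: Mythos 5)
You should first know that the paper does not prove this statement at all: it is stated as a known fact with a pointer to the literature (Tsukada, LICS~2020, cited just above Fact~\ref{fact:red}), so your sketch must be judged on its own merits rather than against an in-paper argument. On those merits there are two slips and one genuine gap. The slips: your soundness lemma ``\(\form\red\form'\) implies \(\sem{\form}=\sem{\form'}\)'' is false, because the rule \(\evC[\form_1\lor\form_2]\red\evC[\form_i]\) discards a disjunct; only \(\sem{\form'}\LEQ_{\Prop}\sem{\form}\) holds, which is still enough for the ``if'' direction since \(\sem{\TRUE}=\top\). For the same reason (the reduction is nondeterministic), your normalization claim for the \(\mu\)-free fragment must be existential: a closed \(\mu\)-free formula of type \(\Prop\) can reach both \(\TRUE\) and \(\FALSE\) along different paths, so the correct statement is that \(\sem{\formalt}=\top\) iff \emph{some} reduction sequence reaches \(\TRUE\), not that \(\formalt\) ``reduces to a Boolean matching its denotation.''

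The genuine gap is your appeal to Kleene's theorem. Kleene iteration \(\BLUB_n F^n(\bot_\fty)\) computes \(\LFP_\fty(F)\) only for \(\omega\)-continuous \(F\), but \(\semd{\sty\to\fty}\) here is the \emph{full monotone} function space over lattices with infinite chains (because of \(\INT\)); the denotation of a formula with higher-order free variables need not be continuous in those variables, and least fixpoints may close only at transfinite ordinals. This is exactly why the paper itself iterates fixpoints transfinitely via Cousot--Cousot in Lemma~\ref{lem:tr-preserved-by-lfp}, and why, when it needs the finite-approximation property \(\sem{(D,\form_0)}=\bigsqcup_m\sem{(D^{(m)},\form_0^{(m)})}\) in Section~\ref{sec:reductionToRecFree}, it cites a dedicated technique (Appendix~B.1 of the full version of the ESOP~2018 paper) rather than Kleene. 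So your identity \(\sem{\form}=\BLUB_n \sem{\form^{(n)}}\) for the top-level \(\Prop\) formula is true but is essentially the entire content of the hard direction; it cannot be waved through. Your closing alternative---a single logical relation between closed terms and elements of \(\semd{\fty}\), with the \(\mu\)-clause proved by induction on fixpoint stages---is the right repair, \emph{provided} the induction is transfinite (zero, successor, and limit cases, the limit case being harmless because \(\semd{\Prop}\) is two-valued and least upper bounds at function types are pointwise). That version also dissolves your ``gluing'' worry, since no syntactic encoding of \(\bot_\fty\) is ever needed: the successor case uses the unfolding rule \(\evC[\mu x.\form]\red\evC[[\mu x.\form/x]\form]\) directly.
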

Due to the fact above, the validity checking problem is equivalent
to the problem of deciding whether
\(\form \reds\TRUE\), given a closed \muhflz{} formula \(\form\) of type \(\Prop\).

\begin{example}
  \ifshort
  Suppose \(\pST \form:\INT\to\Prop\), and 
  let \(\formalt\) be the formula
  \((\mu x^{\INT\to\Prop}.\lambda y.\form\,y\lor \form(-y)\lor x(y+1))0\).
  Then \(\formalt \reds \TRUE\)
  just if \(\form\,n\reds \TRUE\) %\(\sem{\form\,n}\)
  for some \(n\). Thus,
   \(\formalt\) represents \(\exists z.\form\,z\).
  \else
  Suppose \(\pST \form:\INT\to\Prop\).
  Then,
  \[ \formalt \DEFEQ (\mu x^{\INT\to\Prop}.\lambda y.\form\,y\lor \form(-y)\lor x(y+1))0 \reds \TRUE\]
  just if \(\form\,n\reds \TRUE\) %\(\sem{\form\,n}\)
  for some \(n\). Thus,
  the formula \(\formalt\) represents \(\exists z.\form\,z\).
  \fi
\end{example}
\noindent The example above indicates that existential quantifiers on integers
are expressible in \muhflz{}.
Below, we treat existential quantifiers as if they were primitives.

\subsection{Relationship with Reachability Problems}
\label{sec:reachability-as-hfl}

We consider reachability problems for
a call-by-name, simply-typed
\(\lambda\)-calculus extended with two kinds of non-determinism
(\(\dnt\) and \(\ant\)) and a special term \(\err\), which represents
that the designated target has been reached.\footnote{In the context of
  program verification, we are often interested in (un)reachability to bad states.
  Thus, in that context, \(\err\) in this section is actually interpreted as an error state,
and the terms ``angelic'' and ``demonic'' below are swapped.}
The sets of types and terms, ranged over by \(\sigma\) and \(M\) respectively, are
 defined by:
\begin{align*}
\sigma &::= \INT \mid \eta\qquad
\eta ::= \Tunit \mid \sigma\to\eta\\
M &::= \Eunit \mid \err\mid x\mid \lambda x.M \mid M_1\,M_2\mid M\,e\\ &
\mid \Efix{x}{M}{\eta}\mid M_1\dnt M_2\mid M_1\ant M_2\mid \Eass(e_1\le e_2);M.
\end{align*}
Here,
the meta-variable \(e\) ranges over the set of integer expressions
as defined in Section~\ref{sec:syntax}.
The term \(\Efix{x}{M}{\eta}\) denotes a recursive function \(x\) of type \(\eta\) such that \(x=M\).
The term \(M_1\dnt M_2\) denotes a \emph{demonic} choice between \(M_1\) and \(M_2\),
where the choice is up to
the environment (or, the opponent \(\PlayerO\) of the reachability game),
and \(M_1\ant M_2\) denotes an \emph{angelic} choice between \(M_1\) and \(M_2\),
where the choice is up to
the term (or, the player \(\PlayerP\) of the reachability game).
The term \(\Eass(e_1\le e_2);M\) first checks whether \(e_1\le e_2\) holds
and if so, proceeds to evaluate \(M\); otherwise aborts the evaluation of the whole
term. Using \(\Eass\), we can 
express a conditional expression \(\Eif{e_1\le e_2}{M_1}{M_2}\)
as \((\Eass(e_1\le e_2);M_1)\ant (\Eass(e_2+1\le e_1);M_2)\).

A simple type system for the language is given in Figure~\ref{fig:typing-lang}.
In the figure,  \(\toST{\lstenv}\) denotes the type environment obtained by
restricting \(\lstenv\) to bindings on \(\INT\), i.e., \(\toST{\lstenv} =
\set{x\COL\lsty\in \lstenv \mid \lsty=\INT }\).
Henceforth, we consider only well-typed terms.

\begin{figure}
  \begin{multicols}{2}
      \typicallabel{T-Plus}
  \infrule[LT-Unit]{}{\lstenv \pLST \Eunit:\Tunit}
  \infrule[LT-Succ]{}{\lstenv \pLST \err:\Tunit}
  \infrule[LT-Var]{}{\lstenv,x\COL\lsty \pLST x:\lsty}
  \infrule[T-Abs]{\lstenv,x\COL\lsty\pLST M:\lrsty}
          {\lstenv\pLST\lambda x.M:\lsty\to\lrsty}
  \infrule[LT-App]{\lstenv\pLST M_1:\lrsty_2\to\lrsty\andalso \lstenv\pLST M_2:\lrsty_2}
          {\lstenv\pLST M_1 M_2:\lrsty}
  \infrule[LT-AppInt]{\lstenv\pLST \form:\INT\to\fty\andalso \toST{\lstenv}\pST e:\INT}
          {\lstenv\pLST\form\,e:\fty}
  \infrule[LT-Fix]{\lstenv,x\COL\lrsty\pLST M:\lrsty}
          {\lstenv\pLST\Efix{x}{M}{\lrsty}:\lrsty}
  \infrule[LT-DChoice]{\lstenv\pLST M_1:\Tunit\andalso \lstenv\pLST M_2:\Tunit}
          {\lstenv\pLST M_1 \dnt M_2:\Tunit}
  \infrule[LT-AChoice]{\lstenv\pLST M_1:\Tunit\andalso \lstenv\pLST M_2:\Tunit}
          {\lstenv\pLST M_1 \ant M_2:\Tunit}
  \infrule[LT-Assume]{\toST{\lstenv}\pST e_1:\INT\andalso \toST{\lstenv}\pST e_2:\INT\\ \lstenv\pLST M:\Tunit}
          {\lstenv\pLST \Eass(e_1\le e_2);M:\Tunit}
\end{multicols}
  \caption{Simple Type System for the Language.}
  \label{fig:typing-lang}
\end{figure}

The order of a type \(\sigma\) is defined by:
\[
\begin{array}{l}
\ord(\INT)=-1 \qquad \ord(\Tunit)=0\qquad
\ord(\sigma\to\eta) = \max(\ord(\eta), \ord(\sigma)+1).
\end{array}
\]
The order of a term \(M\) is defined as the largest order of type \(\eta\)
such that \(M\) has a subterm of the form \(\Efix{x}{M'}{\eta}\).
We write \(\INT^n\to\Prop\) for \(\underbrace{\INT\to\cdots\INT}_n\to\Prop\).

For a closed simply-typed term \(M\) of type \(\Tunit\),
a \emph{play} is a (possibly infinite) sequence of reductions of \(M\).
The play is won by the player \(\PlayerP\) if it ends with \(\err\);
otherwise the play is won by the opponent \(\PlayerO\).
The \emph{reachability game} for \(M\) is the problem of deciding
which player (\(\PlayerP\) or \(\PlayerO\)) %, who can choose the branches of \(\dnt\))
has a winning strategy.
For the general notion of reachability games and strategies, we refer the reader to
\cite{Automata}.
%%When restricted to the reachability game,
As a special case of 
the translation of 
Watanabe et al.~\cite{DBLP:conf/pepm/WatanabeTO019}
from temporal properties of programs to \hflz{} formulas,
we obtain the following translation \(\tohflz{(\cdot)}\)
from reachability games to \muhflz{} formulas.
\begin{align*}
  & \tohflz{\Eunit}=\FALSE \quad \tohflz{\err}=\TRUE \quad \tohflz{x}=x \quad %\\ &
  \tohflz{(\lambda x.M)}=\lambda x.\tohflz{M} \quad
  \tohflz{(M_1M_2)}=\tohflz{M_1}\tohflz{M_2} \\
  &  \tohflz{(M\,e)}=\tohflz{M}\,e \quad
   \tohflz{(\Efix{x}{M}{})}=\mu x.\tohflz{M}
\quad   \tohflz{(M_1\dnt M_2)}=\tohflz{M_1}\land\tohflz{M_2}
  \\&
  \tohflz{(M_1\ant M_2)}=\tohflz{M_1}\lor\tohflz{M_2} \quad
  \tohflz{(\Eass(e_1\le e_2);M)} = e_1\le e_2\land \tohflz{M}.
\end{align*}
%%\nk{A little more details about the reachability game in Appendix?}

The following is a special case of
the result of Watanabe et al.~\cite{DBLP:conf/pepm/WatanabeTO019}.
\begin{theorem}[\cite{DBLP:conf/pepm/WatanabeTO019}]
  \label{th:reachability}
  For any closed simply-typed term \(M\) of type \(\Tunit\) and order \(k\),
  \(\tohflz{M}\) is a closed \muhflz{} formula of type \(\Prop\) and order \(k\).
The player \(\PlayerP\) wins the reachability game for \(M\), if and only if,
\(\sem{\tohflz{M}}=\top\).
\end{theorem}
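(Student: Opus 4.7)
My plan is to split the statement into two parts: first, the syntactic claim that $\tohflz{M}$ is a well-typed closed $\muhflz{}$ formula of type $\Prop$ whose order equals $\ord(M)$; and second, the semantic equivalence that $\PlayerP$ wins the reachability game for $M$ iff $\sem{\tohflz{M}}=\top$. (I read the $\sem{M}$ in the statement as shorthand for $\tohflz{M}$.)

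The first part is a routine induction on the structure of $M$, using an auxiliary translation $\tohflz{\cdot}$ on types that sends $\Tunit$ to $\Prop$ and is otherwise homomorphic. Every construct is mapped to a $\muhflz{}$ construct of the matching shape: $\dnt$ to $\land$, $\ant$ to $\lor$, $\Eass(e_1\le e_2);M$ to $e_1\le e_2\land\tohflz{M}$, and $\Efix{x}{M}{\eta}$ to $\mu x^{\tohflz{\eta}}.\tohflz{M}$, with $\lambda$-abstractions and applications preserved. Since $\mu$ is introduced only by the $\Efix$ case and the bound type of that $\mu$ is exactly $\tohflz{\eta}$ (so $\ord(\tohflz{\eta})=\ord(\eta)$), the largest $\mu$-bound order in $\tohflz{M}$ coincides with $\ord(M)$.

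For the second part I would appeal to \cref{fact:red} to reduce the goal to the operational statement $\tohflz{M}\reds\TRUE$ iff $\PlayerP$ has a winning strategy for $M$. The correspondence is then read off directly from the two sets of reduction rules. A reduction $\evC[\form_1\lor\form_2]\red\evC[\form_i]$ realizes $\PlayerP$'s choice at an $\ant$ node, the conjunction rules (together with the absence of any rule that lets one drop the other side of a $\land$ when it is not $\TRUE$) force $\PlayerP$ to win against \emph{both} branches of a $\dnt$ node, the $\mu$-unfolding rule realizes the unfolding of a $\Efix$, the $\le$ rule collapses guards inserted by $\Eass$, and the $\beta$-rules are evaluation. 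From a winning strategy of $\PlayerP$ one reads off choices for each $\lor$ and a schedule that drives $\tohflz{M}$ to $\TRUE$; conversely, every reduction sequence ending at $\TRUE$ exhibits those choices, which form a winning strategy.

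The main obstacle is the fixpoint case, and the key technical lemma is a substitution commutation $\tohflz{[N/x]M}=[\tohflz{N}/x]\tohflz{M}$ together with the standard Kleene approximation $\LFP_\fty(f)=\LUB_n f^n(\bot_\fty)$. Concretely, one has to check that a strategy that wins within recursion depth $n$ corresponds to the reduction of the $n$-th finite unfolding of the $\mu$ to $\TRUE$, and vice versa; both directions follow inductively on $n$ once substitution commutes with translation. Since reachability plays that are winning for $\PlayerP$ are finite by definition, and $\reds\TRUE$ is finite by construction, the two finite notions of unfolding line up, and the least-fixpoint semantics is exactly what is needed. Everything else is routine; alternatively one may simply observe that this is the reachability fragment of Watanabe et al.~\cite{DBLP:conf/pepm/WatanabeTO019} and cite their theorem directly.
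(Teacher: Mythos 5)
The paper contains no proof of this statement at all: it is imported verbatim as ``a special case of the result of Watanabe et al.~\cite{DBLP:conf/pepm/WatanabeTO019}'', so the paper's own ``proof'' is exactly the citation you offer as a fallback in your final sentence. In that respect your proposal subsumes the paper's approach, and everything before that last sentence is a blind reconstruction of (the reachability fragment of) the cited result rather than of anything argued in this paper. Your reading of \(\sem{M}\) as \(\sem{\tohflz{M}}\) is the intended one, and your first part --- the induction showing \(\tohflz{M}\) is closed, of type \(\Prop\), and of the same order, since \(\mu\) is introduced only at \(\Efix{x}{M}{\eta}\) with bound type \(\tohflz{\eta}\) of unchanged order --- is correct and routine, as you say.

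Two soft spots in the semantic half deserve flagging. First, in the direction from a winning strategy to \(\tohflz{M}\reds\TRUE\): because the evaluation contexts \(\evC\land\form\) force conjuncts to be discharged sequentially, a single reduction sequence to \(\TRUE\) must traverse the \emph{entire} strategy tree of \(\PlayerP\) (one subtree per branch of each \(\dnt\)), not a single play. ``Winning plays are finite by definition'' is not enough to produce a finite schedule; you additionally need that the whole strategy tree is finite, which follows from K\"onig's lemma using the finite branching the paper explicitly assumes in its game model. Second, your appeal to the Kleene approximation \(\LFP_\fty(f)=\LUB_n f^n(\bot_\fty)\) is unjustified here: denotations of \muhflz{} formulas are monotone but not in general \(\omega\)-continuous at higher types, which is precisely why this paper's own fixpoint manipulation (Lemma~\ref{lem:tr-preserved-by-lfp}) resorts to Cousot--Cousot transfinite iteration. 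Since you invoke Fact~\ref{fact:red} anyway, the clean repair is to stay entirely operational --- \(\mu\)-unfoldings happen lazily along any finite reduction to \(\TRUE\), and your substitution-commutation lemma \(\tohflz{[N/x]M}=[\tohflz{N}/x]\tohflz{M}\) then supports a direct simulation between program reductions under a strategy and formula reductions --- never mentioning denotational approximants; otherwise the approximation argument must be redone with ordinal-indexed iterates.
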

Based on the result above, we focus on the validity checking problem
for \muhflz{} formulas, instead of directly discussing the
reachability problem. Note that the may-reachability problem (of asking
whether,
given a closed term \(M\) of which all the branches are angelic,
there exists a reduction sequence from \(M\) to \(\err\))
corresponds to the validity checking problem for disjunctive \muhflz{}
formulas.

\begin{example}
  \label{ex:sum}
  Let us consider the following OCaml program.
  \begin{verbatim}
  let rec sum x k = 
    assert(x>=0); if x=0 then k 0 else sum(x-1)(fun y-> k(x+y))
  in sum n (fun r -> assert(r>=n))
\end{verbatim}
  Suppose we are interested in checking whether the program suffers
  from an assertion failure.
  It is modeled as the reachability problem for the term
  \(M_{\textit{sum}}\,n\,(\lambda r.\Eass(r< n);\err)\),
  where \(M_{\textit{sum}}\) is:
  \begin{align*}
&  \Efix{\textit{sum}}{\lambda x.\lambda k.
      (\Eass(x< 0);\err)
      \\ &\qquad\qquad   \ant
      (\Eass(x=0); k\,0) %\\& \qquad\qquad
      \ant  (\Eass(x> 0); \textit{sum}\,(x-1)\,(\lambda y.k(x+y)))}{}.
  \end{align*}
  Here, note that an assertion failure is modeled as \(\err\) in our language.
  By Theorem~\ref{th:reachability}, the above term is reachable to \(\err\)
  just if the (disjunctive) \muhflz{} formula \(\form_{\textit{ex1}}
  := \form_{\textit{sum}}\,n\,(\lambda r.r<n)\)
  is valid, where \(\form_{\textit{sum}}\) is:
  \begin{align*}
&  \mu \Sum.\lambda x.\lambda k. %\\&
  x< 0 \lor (x=0\land k\,0)
   \lor (x>0 \land \Sum\,(x-1)\,(\lambda y.k(x+y))).
  \end{align*}
  The formula \(\form_{\textit{ex1}}\) is valid only if \(n< 0\), which implies
  that the OCaml program suffers from an assertion failure just if \(\texttt{n}<0\).
  \qed
\end{example}
\subsection{Main Theorem}
\label{sec:theorem}
The main theorem of this paper is stated as follows.
\begin{theorem}
  There exist polynomial-time translations \(\todisj{(\cdot)}\) and
  \(\fromdisj{(\cdot)}\) 
  between order-$n$ \muhflz{} formulas and
  order-$(n+1)$ disjunctive \muhflz{} formulas that satisfy
 the  following properties.
\begin{enumerate}[(i)]
\item  For any order-\(n\) closed \muhflz{} formula \(\form\),
  \(\todisj{\form}\) is an order-(\(n+1\)) closed disjunctive \muhflz{} formula
  such that \(\sem{\form}=\sem{\todisj{\form}}\).
\item  For any order-(\(n+1\)) closed disjunctive \muhflz{} formula \(\form\),
  \(\fromdisj{\form}\) is an order-\(n\) closed \muhflz{} formula
  such that \(\sem{\form}=\sem{\fromdisj{\form}}\).
\end{enumerate}
\end{theorem}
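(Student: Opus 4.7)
My plan is to handle the two directions $\todisj{(\cdot)}$ and $\fromdisj{(\cdot)}$ separately, prove semantic equivalence for each by induction on formula structure, and verify by inspection of the translation rules that the syntactic size grows only polynomially.

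For $\todisj{(\cdot)}$, the intuition from the game-theoretic reading (\cref{th:reachability}) is that a demonic conjunction $\form_1\land\form_2$ amounts to universal quantification over a single opponent move, so it can be Skolemized by letting $\PlayerP$ provide a strategy as an extra functional argument. Since a winning strategy in an order-$n$ game is itself essentially an order-$n$ function, carrying such a strategy through the formula costs one additional order, which matches the stated $n\to n+1$ shift. Concretely, I would lift every predicate of type $\fty_1\to\cdots\to\fty_k\to\Prop$ to one that additionally accepts a strategy-valued argument, and rewrite each $\form_1\land\form_2$ as an angelic branching (a disjunction) whose choices dispatch on the strategy argument, so that in the output conjunction is confined to the guarded form $e_1\le e_2\land\form_1$ that disjunctive formulas permit.

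For $\fromdisj{(\cdot)}$, the idea is dual: the top layer of higher-order arguments in an order-$(n+1)$ disjunctive formula is used in a constrained way (since $\land$ cannot freely separate control flow there), so its behaviour at each program point can be captured by a finite bookkeeping of how the function is applied, and a conjunction over that bookkeeping can reintroduce the universal flavour that justifies dropping one order. I would therefore design the translation as a defunctionalization-style analysis of the topmost-order bindings, replacing applications of these bindings by $\land$-indexed choices and leaving the lower-order structure essentially intact.

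In both directions, the correctness argument reduces to a logical relation, defined by induction on simple types, that relates elements of $\semd{\sty}$ to elements of $\semd{\sty'}$ where $\sty'$ is the translated type, together with an induction on formulas showing that the denotations of $\form$ and of $\todisj{\form}$ (respectively, $\fromdisj{\form}$) are related at appropriate arguments. The main obstacle will be the $\mu$ case: because the translations change the type at which the fixpoint is taken, one must show that the least fixpoint in the enlarged type space, evaluated on a canonical input, agrees with the original least fixpoint rather than collapsing to some other fixpoint. This will require the translation to be arranged so that the approximants of the two fixpoints track each other step by step, which I expect to be the technically delicate part of the proof; polynomial size bounds, in contrast, should follow routinely once the translation rules are fixed.
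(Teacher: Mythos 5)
Your plan for \(\todisj{(\cdot)}\) contains a genuine gap: the Skolemization-by-strategy idea does not work in the disjunctive target. Demonic conjunction is the \emph{opponent's} (universal) choice, so eliminating it via a strategy argument would require quantifying \emph{universally} over that argument at the top level; but disjunctive \muhflz{} admits conjunction only in the guarded form \(e_1\le e_2\land\form\), so neither universal quantification over integers (the dual of the \(\exists\)-encoding needs unguarded \(\land\)) nor over higher-order strategy objects is expressible. If instead \(\PlayerP\) supplies the dispatching strategy, the translation is unsound: \(\TRUE\land\FALSE\) would become a disjunction the player can win. The paper's mechanism is entirely different: lift \(\Prop\) to \(\Prop\to\Prop\), sending \(\TRUE\) to \(\lambda x.x\) and \(\FALSE\) to \(\lambda x.\FALSE\), and translate \(\form_1\land\form_2\) as the \emph{composition} \(\lambda x^\Prop.\todisjaux{\form_1}(\todisjaux{\form_2}x)\) --- sequentialization rather than strategy dispatch; the order increase comes uniformly from this type lifting. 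Your correctness plan (a logical relation by induction on types, with fixpoint approximants tracked transfinitely) is exactly the paper's proof for this direction, but it only goes through once the translation is the compositional one above.

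For \(\fromdisj{(\cdot)}\), your ``bookkeeping of how the function is applied'' gestures at the right idea --- the paper computes, for each equation \(F\,\seq{y}=_\mu\form\) with order-0 predicate parameters \(x_1,\ldots,x_k\), new predicates \(F_0,\ldots,F_k\) where \(F_i\,\cdots\,\seq{u}\) expresses the condition that the body reduces to a call \(x_i\,\seq{u}\), with \(\exists\)-quantified intermediate integer tuples at each application --- but your sketch omits what makes this work: normalization to equation systems with a uniform arity \(\maxar\), and the extra \(\Mark\)/\(0\) components needed for order-0 predicates hidden inside higher-order arguments. Note also that disjunctivity is essential for soundness of these summaries (with unguarded \(\land\) the reduction would not decompose into isolated calls), whereas your sketch treats the conjunction restriction as a side condition. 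Finally, the paper does \emph{not} prove this direction by a logical relation: it first reduces to recursion-free equation systems by finite approximation, then proves a subject-reduction/simulation property for a modified reduction relation with explicit substitutions, via two substitution lemmas. The \(\mu\)-case difficulty you flag is real, and it is precisely the point where the paper abandons the direct logical-relation route you propose for this direction.
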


Due to the connection between reachability problems and \muhflz{}
validity checking problems discussed in
Section~\ref{sec:reachability-as-hfl},
the theorem above implies that
any order-\(n\) reachability game can be converted in polynomial time
to order-(\(n+1\)) may-reachability problem, and vice versa.
\ifshort
Applications of this result
are discussed in Section~\ref{sec:app}.
\else
The result allows us to use a tool for checking the may-reachability of
higher-order programs (such as \mochi{}~\cite{KSU11PLDI}) to solve the reachability game,
and conversely, to use a tool for solving the order-\(n\) reachability game
(such as \nuhflz{} validity
checkers~\cite{DBLP:conf/sas/IwayamaKST20,DBLP:conf/aplas/KatsuraIKT20}
and a HoCHC solver~\cite{DBLP:journals/pacmpl/BurnOR18})
to check the may-reachability of order-(\(n+1\)) programs;
see Section~\ref{sec:app} for more discussion on the applications.
\fi

\section{From Order-$n$ Reachability Games to Order-($n+1$) May-Reachability}
\label{sec:todisj}
In this section, we show the translation \(\todisj{(\cdot)}\) from
order-$n$ \muhflz{} formulas to
order-($n+1$) disjunctive \muhflz{} formulas.
The idea is to transform each proposition \(\form\) (i.e. a formula of type \(\Prop\))
to a predicate \(\todisjaux{\form}\)
of type \(\Prop\to\Prop\), so that \(\TRUE\) and \(\FALSE\) are
respectively converted to
the identity function \(\lambda x.x\) and the constant function
\(\lambda x.\FALSE\). We can then encode
the conjunction \(\form_1\land \form_2\) as
\(\lambda x^\Prop.\todisjaux{\form_1}(\todisjaux{\form_2}x)\),
which is equivalent to the identity function if
both \(\todisjaux{\form_1}\) and \(\todisjaux{\form_2}\) are so,
and is equivalent to \(\lambda x.\FALSE\) if one of
\(\todisjaux{\form_1}\) and \(\todisjaux{\form_2}\) is so.

The translation \(\todisj{(\cdot)}\) for formulas and types is defined as follows.
\begin{align*}
  & \todisj{\form} = \todisjaux{\form}\,\TRUE  \qquad %\\ & 
  \todisjaux{(e_1\le e_2)} = \lambda x^\Prop.(e_1\le e_2\land x)\quad %\\ &
  %%  & \todisjaux{\TRUE}=\lambda x^\Prop.x \qquad
%%  \todisjaux{\FALSE}=\lambda x^\Prop.\FALSE \qquad \todisjaux{x}=x\\
   \todisjaux{(\lambda x^\sty.M)}=\lambda x^{\todisj{\sty}}.\todisjaux{M} \\ & %\quad
  \todisjaux{(\form_1\form_2)}=\todisjaux{\form_1}\todisjaux{\form_2} \quad
  \todisjaux{(\form\,e)}=\todisjaux{\form}\,e \quad %\\  &
  \todisjaux{(\mu x^\fty.\form)}=\mu x^{\todisj{\fty}}.\todisjaux{\form} 
  \\&
  \todisjaux{(\form_1\lor \form_2)}=\lambda x^\Prop.\todisjaux{\form_1}x\lor \todisjaux{\form_2}x
  \quad
  \todisjaux{(\form_1\land \form_2)}=
  \lambda x^\Prop.\todisjaux{\form_1}(\todisjaux{\form_2}x)\\
  & \todisj{\INT} = \INT\qquad \todisj{\Prop}=\Prop\to\Prop\qquad
  \todisj{(\sty\to\fty)} = \todisj{\sty}\to\todisj{\fty}.
\end{align*}

\begin{example}
  Consider the formula \(\form:=  (\mu p^{\INT\to\Prop}.\lambda y.y=0\lor (p\,(y-1)\land p\,(y+1)))\,n\)
  (where \(n\) is an integer constant).
  \ifshort
  The translation (followed by \(\beta\)-reductions for simplification)
  yields:
  \begin{align*}
 &   (\mu p^{\INT\to\Prop\to\Prop}.
    \lambda y.\lambda x^\Prop.
    (y=0\land x)\lor
   p\,(y-1)\,(p\,(y+1)\,x))\,n\,\TRUE.
  \end{align*}
  \else
  We obtain the following formula as \(\todisj{\form}\):
  \begin{align*}
 &   (\mu p^{\INT\to\Prop\to\Prop}.
    \lambda y.\lambda x^\Prop.(\lambda x^\Prop.y=0\land x)x\\&\qquad\qquad\qquad\lor
    (\lambda x^\Prop.p\,(y-1)\,(p\,(y+1)\,x))\,x)\,n\,\TRUE.
  \end{align*}
  By simplifying the formula with \(\beta\)-reductions, we obtain:
  \begin{align*}
 &   (\mu p^{\INT\to\Prop\to\Prop}.
    \lambda y.\lambda x^\Prop.\\&\qquad
    (y=0\land x)\lor
   p\,(y-1)\,(p\,(y+1)\,x))\,n\,\TRUE.
  \end{align*}
  \fi
%\qed
\end{example}

The following theorem states the correctness of the translation.
\ifshort
The proof is given in the full version~\cite{DBLP:journals/corr/abs-2203-08416}.
\fi
\begin{theorem}
  \label{th:todisj-correctness}
  If \(\form\) is an order-\(n\) closed \muhflz{} formula,
  then \(\todisj{\form}\) is an order-(\(n+1\)) closed disjunctive \muhflz{} formula,
  and \(\sem{\form}=\sem{\todisj{\form}}\).
\end{theorem}

To show the theorem above, we first extend the translation of types
to that of type environments by:
\(\todisj{(x_1\COL\sty_1,\ldots,x_k\COL\sty_k)}
= x_1\COL\todisj{\sty_1},\ldots,x_k\COL\todisj{\sty_k}\).

The following lemma guarantees that the translation preserves typing.
\begin{lemma}
\label{lem:todisj-preserves-types}
If \(\stenv\pST\form:\sty\), then \(\todisj{\stenv}\pST\todisjaux{\form}:\todisj{\sty}\).
\end{lemma}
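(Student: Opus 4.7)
The plan is to prove the lemma by structural induction on the derivation of $\stenv \pST \form : \sty$. Before starting the main induction, one records the trivial sub-observation that integer expressions are preserved by the translation: $\stenv \pST e : \INT$ implies $\todisj{\stenv} \pST e : \INT$, since $\todisj{\INT} = \INT$ and any integer variable in $\stenv$ retains its integer type in $\todisj{\stenv}$. This takes care of the integer-typed premises of T-AppInt and T-Le.

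The bulk of the induction is routine, because the type translation commutes with the arrow former: $\todisj{(\sty\to\fty)} = \todisj{\sty}\to\todisj{\fty}$. The cases T-Var, T-Abs, T-App, T-AppInt, and T-Mu then go through by a direct application of the corresponding rule after invoking the IH(s) on the immediate subformulas; concretely, the T-App case uses $\todisj{(\fty_2\to\fty)} = \todisj{\fty_2}\to\todisj{\fty}$, and T-Abs, T-Mu push the translation of the binder type through the rule unchanged. The T-Le case combines the sub-observation above with T-And (using a fresh premise $x\COL\Prop$) and T-Abs, in succession, to give $\lambda x^\Prop.(e_1\le e_2\land x)$ the type $\Prop\to\Prop = \todisj{\Prop}$.

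The two most interesting cases are T-Or and T-And, where the translation introduces a fresh binder $\lambda x^\Prop$. For T-Or, the IH yields $\todisj{\stenv}\pST \todisjaux{\form_i} : \Prop\to\Prop$ for $i=1,2$; then under the extended environment $\todisj{\stenv}, x\COL\Prop$, each $\todisjaux{\form_i}\,x$ has type $\Prop$ by T-App, their disjunction has type $\Prop$ by T-Or, and T-Abs finally produces $\Prop\to\Prop$. The T-And case is entirely analogous, with the disjunction replaced by the nested application $\todisjaux{\form_1}(\todisjaux{\form_2}\,x)$, which is of type $\Prop$ by two uses of T-App. I do not anticipate any genuine obstacles; the argument is a direct structural induction, and the only minor care needed is to keep the freshly introduced $x$ disjoint from the variables already in $\stenv$, which is handled by standard $\alpha$-conversion.
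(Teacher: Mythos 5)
Your proof is correct and follows exactly the approach the paper takes: the paper's own proof is simply ``straightforward induction on the derivation of \(\stenv\pST\form:\sty\)'', and your case analysis (including the key \rn{T-Or}/\rn{T-And} cases with the fresh \(\lambda x^\Prop\) binder and the observation that \(\todisj{(\cdot)}\) commutes with \(\to\) and is the identity on \(\INT\)) is precisely the routine detail the paper leaves implicit.
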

\begin{proof}
Straightforward induction on the derivation of   \(\stenv\pST\form:\sty\).
\end{proof}  
\begin{corollary}
\label{lem:todisj-is-valid-trans}  
If \(\form\) is an order-\(n\) closed \muhflz{} formula of type \(\Prop\),
then \(\todisj{\form}\) is an order-(\(n+1\)) closed disjunctive \muhflz{} formula of type \(\Prop\).
\end{corollary}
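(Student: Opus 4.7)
The plan is to verify the three separate assertions packaged into the corollary: that $\todisj{\form}$ is (a) closed of type $\Prop$, (b) disjunctive, and (c) of order at most $n+1$.

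For closedness and typing, essentially all the work has already been done by Lemma~\ref{lem:todisj-preserves-types}. Since $\form$ is closed of type $\Prop$, we have $\emptyset \pST \form : \Prop$, and the lemma yields $\emptyset \pST \todisjaux{\form} : \todisj{\Prop} = \Prop \to \Prop$. Since $\emptyset \pST \TRUE : \Prop$, the application $\todisj{\form} = \todisjaux{\form}\,\TRUE$ is well-typed in the empty environment with type $\Prop$, so it is a closed formula. No free variables can be introduced because every translation clause either copies variables or binds the ones it introduces.

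For the disjunctive property, I would argue by a straightforward structural induction on $\form$ that every occurrence of $\land$ in $\todisjaux{\form}$ has the shape $e_1 \le e_2 \land \formalt$ required by the definition of disjunctive. Inspection of the translation clauses shows that the \emph{only} clause introducing a $\land$ at all is the atomic one, $\todisjaux{(e_1 \le e_2)} = \lambda x^\Prop.(e_1 \le e_2 \land x)$, and the conjunction it introduces has a primitive integer constraint on the left, as needed. The clause $\todisjaux{(\form_1 \land \form_2)} = \lambda x^\Prop.\todisjaux{\form_1}(\todisjaux{\form_2}\,x)$ replaces the original conjunction by function composition, so unsafe conjunctions cannot sneak in from the original formula.

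For the order bound, the key lemma is that $\ord(\todisj{\fty}) = \ord(\fty) + 1$ for every predicate type $\fty$, while $\ord(\todisj{\INT}) = -1$. I would prove this by induction on $\fty$: the base case $\fty = \Prop$ gives $\ord(\Prop \to \Prop) = 1$; the step case $\fty = \sty \to \fty'$ uses the identity $\max(a+1, b+1) = \max(a,b)+1$, splitting on whether $\sty$ is $\INT$ (order $-1$, translated to $\INT$) or another predicate type (induction). Given this, every subformula $\mu x^{\fty}.\form'$ occurring in $\form$, whose binder type has order at most $n$, is translated into $\mu x^{\todisj{\fty}}. \todisjaux{\form'}$ whose binder type has order at most $n+1$; the translation introduces no new $\mu$-operators, and the outer application to $\TRUE$ contributes no new binder. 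Hence the order of $\todisj{\form}$ is at most $n+1$.

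The only slightly delicate step is the order computation for $\todisj{\fty}$, because of the asymmetric treatment of $\INT$ (order $-1$) and $\Prop$ (whose image jumps from $0$ to $1$); handling the two argument subcases cleanly in the inductive step is the one spot where care is needed, but no genuine obstacle is anticipated.
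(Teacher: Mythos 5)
Your proposal is correct and follows essentially the same route as the paper's proof: typing and closedness via Lemma~\ref{lem:todisj-preserves-types}, disjunctivity by observing that the only clause introducing a conjunction is \(\todisjaux{(e_1\le e_2)}\), and the order bound via the identity \(\ord(\todisj{\fty})=\ord(\fty)+1\). The only difference is that you spell out the inductive proof of that order identity (and are slightly more careful in stating ``order at most \(n+1\)''), whereas the paper simply asserts it.
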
  
\begin{proof}
  Suppose \(\form\) is an order-\(n\) closed \muhflz{} formula.
  By Lemma~\ref{lem:todisj-preserves-types},
  we have \(\emptyset\pST\todisjaux{\form}:\Prop\to\Prop\),
  which implies
  \(\emptyset\pST\todisj{\form}:\Prop\).
  Since each \(\mu\)-formula \(\mu x^\fty.\form'\) in
  \(\form\) is translated to \(\mu x^{\todisj{\fty}}.\form'\)
  and \(\ord(\todisj{\fty})=\ord(\fty)+1\),
  \(\todisj{\form}\) is an order-(\(n+1\)) formula. Furthermore,
  all the conjunctions in \(\todisj{\form}\) are of the form \(e_1\le e_2\land \formalt\);
  hence it is disjunctive.
\end{proof}

To prove the latter part of the theorem (i.e., \(\sem{\form}=\sem{\todisj{\form}}\)),
 we define the relation
\(\todisjeq{\sty} \subseteq \sem{\sty}\times \sem{\todisj{\sty}}\) between
the values of the source and the target of the translation, by induction on \(\sty\).
\begin{align*}
  & \todisjeq{\INT} = \set{(n,n) \mid n\in\sem\Z}\\
  & \todisjeq{\Prop} = \set{(\bot, \lambda x\in\sem{\Prop}.\bot)}
  \cup \set{(\top, \lambda x\in\sem{\Prop}.x)}\\
  & \todisjeq{\sty\to\fty} =\\
  &\quad
  \set{(f,g)\mid \forall (v,w)\in\sem{\sty}\times \sem{\todisj{\sty}}.
       v\todisjeq{\sty}w\imp f\,v\todisjeq{\fty}g\,w}.
\end{align*}
We extend \(\todisjeq{\sty}\) pointwise to the relation \(\todisjeq{\stenv}\subseteq
\sem{\stenv}\times \sem{\todisj{\stenv}}\)
on environments by:
\[\rho \todisjeq{\stenv} \rho' \IFF
\rho(x)\todisjeq{\stenv(x)}\rho'(x) \mbox{ for every $x\in\dom(\rho)$}.\]

We first prepare the following lemma.
\begin{lemma}
\label{lem:tr-preserved-by-lfp}  
  If \(f\todisjeq{\fty\to\fty}g\), then
  \(\LFP_{\fty}(f)\todisjeq{\fty}\LFP_{\todisj{\fty}}(g)\).
\end{lemma}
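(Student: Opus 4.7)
The plan is to prove the lemma by transfinite iteration of the fixpoint operators, leveraging two preliminary closure properties of the relation $\todisjeq{(\cdot)}$ that I would establish first by induction on $\fty$. The first preliminary is \emph{strictness}: $\bot_\fty \todisjeq{\fty} \bot_{\todisj{\fty}}$. At $\fty = \Prop$ this is immediate from the definition, since $(\bot,\lambda x.\bot) \in \todisjeq{\Prop}$; at an arrow type I would apply both bottoms to an arbitrary related pair of arguments and invoke the induction hypothesis. The second preliminary is \emph{admissibility}: whenever $\{(v_\alpha, w_\alpha)\}_{\alpha<\lambda}$ is an ascending (well-ordered) chain of pairs in $\todisjeq{\fty}$, the componentwise suprema $\BLUB_\alpha v_\alpha$ and $\BLUB_\alpha w_\alpha$ are again related. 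At $\Prop$ the relation contains only the two pairs $(\bot,\lambda x.\bot)$ and $(\top,\lambda x.x)$, which are comparable in the product order, so any such chain is eventually constant at one of them. At an arrow type I would reduce to the inductive hypothesis by applying the chain pointwise at any related pair of arguments, using monotonicity of each $v_\alpha$ and $w_\alpha$ to see that the pointwise image is still a chain.

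With these two facts in hand, on the complete lattices $\sem{\fty}$ and $\sem{\todisj{\fty}}$ I would consider the transfinite Kleene iterations $f^0 = \bot_\fty$, $f^{\alpha+1} = f(f^\alpha)$, $f^\lambda = \BLUB_{\beta<\lambda} f^\beta$ for limit $\lambda$, and analogously $g^\alpha$; by Knaster--Tarski these stabilize at some ordinal and coincide with $\LFP_\fty(f)$ and $\LFP_{\todisj{\fty}}(g)$ respectively. By transfinite induction on $\alpha$ I would prove $f^\alpha \todisjeq{\fty} g^\alpha$: the base case is strictness, the successor case applies the hypothesis $f \todisjeq{\fty\to\fty} g$ to the inductive hypothesis (on the related pair $(f^\alpha, g^\alpha)$), and the limit case is admissibility applied to the chain of previously established related pairs. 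Choosing an ordinal large enough for both iterations to stabilize then yields the conclusion.

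The only genuinely subtle step is the admissibility argument at $\Prop$: because $\todisjeq{\Prop}$ excludes ``mixed'' pairs such as $(\bot,\lambda x.x)$, one has to argue that the two coordinates of a chain of related pairs cross the $\bot$-to-$\top$ threshold in lock-step. This follows because the two related pairs are totally ordered in the product order on $\sem{\Prop}\times\sem{\todisj{\Prop}}$, but it is the only place where the precise, two-element definition of $\todisjeq{\Prop}$ (as opposed to a weaker monotonicity condition) is essential; once this is in place, the inductive extension to higher types is uniform.
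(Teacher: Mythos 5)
Your proof is correct and follows essentially the same route as the paper's: both establish \(f^\gamma(\bot_{\fty})\todisjeq{\fty} g^\gamma(\bot_{\todisj{\fty}})\) by transfinite induction on the (Cousot--Cousot) fixpoint iteration, with trivial base and successor cases and a limit case resting on closure of the logical relation under suprema of chains. The difference is only presentational: you factor out strictness and admissibility as standalone lemmas proved by induction on \(\fty\) and spell out the lock-step argument at \(\Prop\), whereas the paper inlines the limit case by applying the iterates to related arguments and taking the supremum at \(\Prop\), leaving that closure step implicit.
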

\begin{proof}
  By Cousot and Cousot's fixpoint theorem~\cite{CousotCousot-PJM-82-1-1979},
  there exists an ordinal \(\gamma\) such that
  \(\LFP(f) = f^\gamma(\bot_{\fty})\) and 
  \(\LFP(g) = g^\gamma(\bot_{\todisj{\fty}})\).
  Here, \(f^\gamma(x)\) is defined by:
  \[f(x) = \left\{\begin{array}{ll}
    x & \mbox{if $\gamma=0$}\\
f(f^{\gamma'}(x)) & \mbox{if $\gamma=\gamma'+1$}\\
  \LUB_{\gamma'<\gamma} f^{\gamma'}(x) & \mbox{if $\gamma$ is a limit ordinal.}\\
  \end{array}\right.\]
  Thus, it suffices to
  show \(f^\gamma(\bot_{\fty})\todisjeq{\fty}g^\gamma(\bot_{\todisj{\fty}})\)
  by induction on \(\gamma\).
  The case where \(\gamma=0\) or \(\gamma=\gamma'+1\) is trivial.
  Suppose \(\gamma\) is a limit ordinal.
  Suppose \(\fty= \sty_1\to\cdots \to\sty_k\to\Prop\),
  and \(v_i\todisjeq{\sty_i}w_i\) for \(i\in\set{1,\ldots,k}\).
  It suffices to show
  \[f^\gamma(\bot_{\fty})v_1\,\cdots\,v_k
  \todisjeq{\Prop}g^\gamma(\bot_{\todisj{\fty}})w_1\,\cdots\,w_k.\]
  By the induction hypothesis, we have
  \(f^{\gamma'}(\bot_{\fty})\todisjeq{\fty}g^{\gamma'}(\bot_{\todisj{\fty}})\)
  for any \(\gamma'<\gamma\).
  Thus, we have:
  \begin{align*}
    &f^\gamma(\bot_{\fty})v_1\,\cdots\,v_k\\
    &= (\LUB_{\gamma'<\gamma}f^{\gamma'}(\bot_{\fty}))v_1\,\cdots\,v_k\\
    &= \LUB_{\gamma'<\gamma} (f^{\gamma'}(\bot_{\fty})v_1\,\cdots\,v_k)\\
    &\todisjeq{\Prop} \LUB_{\gamma'<\gamma} (g^{\gamma'}(\bot_{\fty})w_1\,\cdots\,w_k)\\
    &= (\LUB_{\gamma'<\gamma} g^{\gamma'}(\bot_{\fty}))w_1\,\cdots\,w_k\\
    &= g^{\gamma}(\bot_{\fty})w_1\,\cdots\,w_k
  \end{align*}
\end{proof}
Theorem~\ref{th:todisj-correctness} is an immediate corollary of the following lemma.
\begin{lemma}
  \label{lem:todisj}
  Suppose \(\stenv\pST\form:\sty\).
  Then \(\rho\todisjeq{\stenv}\rho'\) implies
  \(\sem{\stenv\pST\form:\sty}\rho \todisjeq{\sty}
  \sem{\todisj{\stenv}\pST\todisjaux{\form}:\todisj{\sty}}\rho'\).
\end{lemma}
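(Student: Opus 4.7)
My plan is to prove the lemma by induction on the derivation of $\stenv\pST\form:\sty$, showing in each case that if $\rho\todisjeq{\stenv}\rho'$, then the semantics of the original formula is related by $\todisjeq{\sty}$ to the semantics of its translation. The variable case follows immediately from the definition of $\todisjeq{\stenv}$; the integer cases (\rnp{T-Int}, \rnp{T-Plus}, \rnp{T-Mult}) are routine since $\todisjeq{\INT}$ is just equality on $\Z$; and the application and abstraction cases (both on predicate and integer arguments) follow directly from the induction hypotheses and the definition of $\todisjeq{\sty\to\fty}$. The $\mu$ case follows from \cref{lem:tr-preserved-by-lfp} applied to the functionals obtained by the induction hypothesis on the body.

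The core of the proof is the handling of the logical connectives and $\le$, where the translation actively changes shape. For $e_1\le e_2$, I would case-analyze on whether the inequality holds: if it does, $\todisjaux{(e_1\le e_2)}$ evaluates to $\lambda x.\TRUE\land x$, semantically the identity, which is in $\todisjeq{\Prop}$-relation with $\top$; if not, it evaluates to $\lambda x.\FALSE\land x = \lambda x.\bot$, paired with $\bot$. For $\form_1\lor\form_2$ and $\form_1\land\form_2$, I would exploit the induction hypothesis to conclude that each $\sem{\todisjaux{\form_i}}\rho'$ equals either $\lambda x.x$ or $\lambda x.\bot$, and then tabulate the four combinations: in each case one verifies directly that $\lambda x.\sem{\todisjaux{\form_1}}\rho'\,x \lor \sem{\todisjaux{\form_2}}\rho'\,x$ and $\lambda x.\sem{\todisjaux{\form_1}}\rho'(\sem{\todisjaux{\form_2}}\rho'\,x)$ match the $\LUB_{\Prop}$ and $\GLB_{\Prop}$ of the corresponding truth values via $\todisjeq{\Prop}$. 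For example, in the conjunction case, if one side is $\bot$ (so its translation is the constant $\bot$ function), then composing collapses the whole function to $\lambda x.\bot$, matching $\bot\GLB v = \bot$.

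The place I expect to be the main obstacle is the interaction between the definition of $\todisjeq{\Prop}$ and the encoding of conjunction as function composition: the statement says the composition $\lambda x.\todisjaux{\form_1}(\todisjaux{\form_2}x)$ is related to $\sem{\form_1}\rho\GLB\sem{\form_2}\rho$, but $\todisjeq{\Prop}$ only contains the identity and the constantly-$\bot$ functions, so one must be sure the IH does not yield any ``intermediate'' function. This is actually guaranteed by the definition of $\todisjeq{\Prop}$ itself, but the argument needs to be stated cleanly: given $(v_i,w_i)\in\todisjeq{\Prop}$, each $w_i$ is forced to be one of the two specific functions, and then the composition is again one of those two, matching $v_1\GLB v_2$. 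Once this point is made crisp, the rest of the induction is mechanical.

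Finally, \cref{th:todisj-correctness} follows from the lemma by taking $\stenv=\emptyset$ and $\sty=\Prop$: the empty environments are trivially $\todisjeq{\emptyset}$-related, and unfolding $\todisjeq{\Prop}$ on $\sem{\form}$ and $\sem{\todisjaux{\form}}$, then applying the latter to $\TRUE$ (as the outer wrapper in the definition of $\todisj{\form}$), gives $\sem{\form}=\sem{\todisj{\form}}$.
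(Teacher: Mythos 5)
Your proposal is correct and follows essentially the same route as the paper's proof: induction on the typing derivation, with the conjunction case resolved exactly as you describe (the relation \(\todisjeq{\Prop}\) forces each translated component to be the identity or the constant-\(\bot\) function, so the composition realizes \(\GLB_{\Prop}\)) and the fixpoint case discharged by \cref{lem:tr-preserved-by-lfp}. The paper likewise treats the remaining cases as routine and derives \cref{th:todisj-correctness} by instantiating the relation at \(\Prop\) and applying the translated predicate to \(\top\), just as you do.
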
  
\begin{proof}
  The proof proceeds by induction on the derivation of \(\stenv\pST\form:\sty\).
  Since the other cases are similar or trivial, we show only the main cases.
  \begin{asparaitem}
  \item Case \rn{T-And}: In this case, \(\form=\form_1\land\form_2\)
    and \(\todisjaux{\form}=\lambda x.\todisjaux{\form_1}(\todisjaux{\form_2}\,x)\),
    with \(\sty=\Prop\) and \(\stenv\pST\form_i:\Prop\).
    By the induction hypothesis,
    we have \(\sem{\stenv\pST\form_i:\Prop}\rho\todisjeq{\Prop}
    \sem{\todisj{\stenv}\pST\todisjaux{\form_i}:\todisj{\Prop}}\rho'\)
    for \(i\in\set{1,2}\).
    If \(\sem{\stenv\pST\form:\Prop}\rho=\top\), then
    \(\sem{\stenv\pST\form_i:\Prop}\rho=\top\) for both \(i=1\) and \(2\).
    Thus, \(\sem{\todisj{\stenv}\pST\todisjaux{\form_i}:\todisj{\Prop}}\rho'=\lambda x.x\)
    for both \(i=1\) and \(2\). Therefore,
    we have \(\sem{\todisj{\stenv}\pST\todisjaux{\form}:\todisj{\Prop}}\rho'=\lambda x.x\)
    as required.
    Otherwise, i.e., if \(\sem{\stenv\pST\form:\Prop}\rho=\bot\), then 
    \(\sem{\stenv\pST\form_i:\Prop}\rho=\bot\) for \(i=1\) or \(2\).
    Thus, \(\sem{\todisj{\stenv}\pST\todisjaux{\form_i}:\todisj{\Prop}}\rho'=\lambda x.\bot\)
    for such \(i\).
    Therefore,
    we have \(\sem{\todisj{\stenv}\pST\todisjaux{\form}:\todisj{\Prop}}\rho'=\lambda x.\bot\)
    as required.
  \item Case \rn{T-Mu}: In this case, \(\form=\mu x^{\fty}.\form'\) and
    \(\todisjaux{\form}=\mu x^{\todisj{\fty}}.\todisjaux{\form'}\) with
    \(\sty=\fty\) and  \(\stenv,x\COL\fty\pST \form':\fty\).
    By the induction hypothesis, we have
    \(\sem{\stenv,x\COL\fty\pST\form':\fty}(\rho\set{x\mapsto v})
    \todisjeq{\fty}
    \sem{\todisj{\stenv,x\COL\fty}\pST\todisjaux{\form'}:\todisj{\fty}}(\rho'\set{x\mapsto w})\)
    for any \(v\todisjeq{\fty}w\),
    which implies
    \(\sem{\stenv\pST\lambda x.\form':\fty\to\fty}\rho
    \todisjeq{\fty}
    \sem{\todisj{\stenv}\pST\todisjaux{\lambda x.\form'}:\todisj{(\fty\to\fty)}}\rho'\).
  Thus, the required result follows by Lemma~\ref{lem:tr-preserved-by-lfp}.
   \end{asparaitem}
\end{proof}

We are now ready to prove Theorem~\ref{th:todisj-correctness}.
\ifshort
\begin{proof}[Theorem~\ref{th:todisj-correctness}]
\else
\begin{pfof}{Theorem~\ref{th:todisj-correctness}}
  \fi
   Suppose \(\emptyset\pST\form:\Prop\).
   By Corollary~\ref{lem:todisj-is-valid-trans}, \(\todisjaux{\form}\) is an
   order-\((n+1)\) closed disjunctive \muhflz{} formula.
   By Lemma~\ref{lem:todisj},
   we have \(\sem{\form} = \sem{\emptyset\pST\form:\Prop}\emptyset
   \todisjeq{\Prop}\sem{\emptyset\pST\todisjaux{\form}:\Prop}\emptyset\).
   Thus,
   if \(\sem{\form}=\top\), then
   \(\sem{\todisj{\form}}=\sem{\todisjaux{\form}}\top = (\lambda x.x)\top = \top\).
   If \(\sem{\form}=\bot\), then
   \(\sem{\todisj{\form}}=\sem{\todisjaux{\form}}\top = (\lambda x.\bot)\top = \bot\),
   as required.
\end{pfof}

\section{From Order-($n+1$) May-Reachability to Order-$n$ Reachability Games}
\label{sec:fromdisj}

In this section, we show the translation \(\fromdisj{(\cdot)}\) from
order-($n+1$) disjunctive \muhflz{} formulas to
order-$n$ \muhflz{} formulas.
The translation \(\fromdisj{(\cdot)}\) is much more involved than the
translation \(\todisj{(\cdot)}\).
We first give some intuitions on the first-order case in Section~\ref{sec:order1-tr}.
We then give the translation for the general case in Section~\ref{sec:general-tr},
and prove the correctness in Section~\ref{sec:correctness}.

\subsection{Intuitions for the Order-1 Case}
\label{sec:order1-tr}
Let us recall the formula
\(\form_{\textit{ex1}} := \form_{\textit{sum}}\,n\,(\lambda r.r<n)\)
in Example~\ref{ex:sum},
where \(\form_{\Sum}\COL \INT\to (\INT\to\Prop)\to\Prop\) is:
\begin{align*}
&  \mu \Sum.\lambda x.\lambda k. %\\&
  x< 0 \lor (x=0\land k\,0)
   \lor (x>0 \land \Sum\,(x-1)\,(\lambda y.k(x+y))).
\end{align*}
Note that the order of the formula above is \(1\). We wish to construct
a formula \(\formalt\) of order \(0\), such that
\(\sem{\form_{\Sumex}}=\sem{\formalt}\).
Recall that, by Fact~\ref{fact:red}, \(\sem{\form_{\Sumex}}=\top\) just if \(\form_{\Sumex}\reds\TRUE\).
There are two cases
where the formula \(\form_{\Sumex}\) may be reduced to \(\TRUE\):
\begin{inparaenum}[(i)]
\item \(\form_{\Sumex}\) is reduced to \(\TRUE\) without
  the order-0 argument  \(\lambda r.r<n\) being called; and
\item \(\form_{\Sumex}\) is reduced to \((\lambda r.r<n)m\) for some \(m\),
  and then \((\lambda r.r<n)m\) is reduced to \(\TRUE\).
\end{inparaenum}
Let \(\form_{\Sum_0}\,n\) be the condition for the first case to occur,
and let \(\form_{\Sum_1}\,n\,m\) be the condition that \(\form_{\Sumex}\) is reduced to
\((\lambda r.r<n)m\).
Then, \(\form_{\Sum_0}\) and \(\form_{\Sum_1}\) can be expressed as follows.
\begin{align*}
\form_{\Sum_0} \DEFEQ &  \mu \Sum_0.\lambda x. %\\&
  x< 0
  \lor (x>0 \land \Sum_0\,(x-1)).\\
  \form_{\Sum_1}\DEFEQ & \mu \Sum_1.\lambda x.\lambda z.
  (x=0\land z=0) %\\&\qquad
  \lor (x>0\land \exists y.\Sum_1\,(x-1)\,y\land z = x+y).
\end{align*}
To understand the formula \(\form_{\Sum_1}\), notice that
\(\form_{\Sum}\,(x-1)\,(\lambda y.k(x+y))\) is reduced to \(k\,z\) just if
\(\Sum\,(x-1)\,(\lambda y.k(x+y))\) is first reduced to
\((\lambda y.k(x+y))y\) for some \(y\) (the condition for which is expressed by
\({\Sum_1}\,(x-1)\,y\)), and \(z=x+y\) holds.

Using \(\form_{\Sum_0}\) and \(\form_{\Sum_1}\) above,
the formula \(\form_{\textit{ex1}}\) can be translated
\ifshort
to the order-0 formula
\(\form_{\Sum_0}\,n \lor \exists r.\form_{\Sum_1}\,n\,r\land r<n\).
\else
to:
\[ \formalt \DEFEQ \form_{\Sum_0}\,n \lor \exists r.\form_{\Sum_1}\,n\,r\land r<n.\]
Note that the order of \(\formalt\) is \(0\).
\fi
In general, if \(\form\) is an order-1 (disjunctive) formula of type
\ifshort
\( \INT^k \to (\INT^{\ell_1}\to\Prop)\to\cdots \to (\INT^{\ell_m}\to\Prop)\to \Prop\)
\else
\[ \INT^k \to (\INT^{\ell_1}\to\Prop)\to\cdots \to (\INT^{\ell_m}\to\Prop)\to \Prop\]
\fi
and \(\formalt_i\) (\(i\in\set{1,\ldots,m}\)) is a formula of type \(\INT^{\ell_i}\to\Prop\),
then \(\form\,\seq{e}_{1,\ldots,k}\,\formalt_1\,\cdots\,\formalt_m\)
can be translated to an order-0 formula of the form:
\[
\form_0\, \seq{e}_{1,\ldots,k}\lor \bigvee_{i\in\set{1,\ldots,m}} \exists \seq{y}_{1,\ldots,\ell_i}.
(\form_i\,\seq{e}_{1,\ldots,k}\,\seq{y}_{1,\ldots,\ell_i}\land
 \psi_i\,\seq{y}_{1,\ldots,\ell_i}),
 \]
 where the part \(\form_0\, \seq{e}_{1,\ldots,k}\) expresses the
 condition for \(\form\,\seq{e}_{1,\ldots,k}\,\formalt_1\,\cdots\,\formalt_m\)
 to be reduced to \(\TRUE\) without \(\formalt_i\) being called,
 and the part \(\form_i\,\seq{e}_{1,\ldots,k}\,\seq{y}_{1,\ldots,\ell_i}\)
 expresses the condition for
 \(\form\,\seq{e}_{1,\ldots,k}\,\formalt_1\,\cdots\,\formalt_m\) to be reduced to
 \(\formalt_i\,\seq{y}_{1,\ldots,\ell_i}\).

\subsection{Translation for the General Case}
\label{sec:general-tr}
For higher-order formulas, the translation is more involved.
To simplify the formalization, we assume that 
a formula as an input or output of our translation
is given in the form \((\NT,D, \form_0)\),
called an \emph{equation system}; here
\(D\) is a set of mutually recursive fixpoint equations of the form
\(\set{F_1\,\seq{x}_1 =_\mu \form_1,\ldots, F_n\,\seq{x}_n=_\mu\form_n}\)
and \(\NT\) is the type environment for \(F_1,\ldots,F_n\).
We sometimes omit \(\NT\) and just write \((D, \form_0)\).
Here, each \(\form_i\) (\(i\in\set{0,\ldots,n}\)) should
%be a disjunctive \muhflz{} formula and 
be fixpoint-free,
\(\form_0\) is well-typed under \(\NT\),
and \(\form_i\) (\(i\in\set{1,\ldots,n}\)) should
have some type \(\fty_i\) 
under the type environment
\( \NT,
%%F_1\COL\fty_1,\ldots,F_n\COL\fty_n,
x_{i,1}\COL\sty_{i,1},\ldots,x_{i,m_i}\COL\sty_{i,m_i}\),
where \(\NT(F_i) = \sty_{i,1}\to\cdots\to\sty_{i,m_i}\to \fty_i\)
% for some \(\fty_i\) 
and \(\seq{x}_i = x_{i,1}\cdots x_{i,m_i}\).
The \muhflz{} formula \(\toform{(D, \form_0)}\) represented by
  \((\NT,D, \form_0)\)
is defined by:
\begin{align*}
  & \toform{(\emptyset,\form)} = \form \quad %\\  &
  \toform{(D\cup\set{F\,\seq{x}=_\mu \formalt}, \form)}
    = \toform{([\mu F.\lambda \seq{x}.\formalt/F]D, [\mu F.\lambda \seq{x}.\formalt/F]\form)}.
\end{align*}
We write \(\semes{D}{\fm}\) for \(\sem{\toform{(D,\fm)}}\).

For an equation system as an input of our translation,
we further assume, without loss of generality,
the following conditions.
\begin{asparaenum}[(I)]
\item 
  Each \(\form_i\) (\(i\in\set{1,\ldots,n}\)) on the right-hand side
  of a definition in \(D\) has type \(\Prop\) and
is generated by the following grammar
(where the metavariable \(x\) may be a fixpoint variable \(F_j\) or its parameters):
\begin{equation}
\label{eq:ESassumption}
  \begin{array}{l}
    \form  ::=
    x \mid \form_1\lor \form_2 \mid e_1\le e_2\land\form
\mid \form_1\form_2 \mid \form\, e.
\end{array}
\end{equation}
  In particular, 
(i) \(\form_i\) is a disjunctive \muhflz{} formula,
(ii) \(\form_i\) contains neither \(\lambda\)-abstractions
  nor fixpoint operators, and (iii)
  a formula of the form \(e_1\le e_2\) may occur only in the form \(e_1\le e_2\land \form\).
\item Every integer predicate (i.e., a formula of type of the form \(\INT^\ell\to\Prop\) with \(\ell\ge 0\))
  that occurs in an argument position has the same arity \(\maxar\).
  In other words, in any function type \(\sty\to \tau\),
  either \(\sty=\INT^\maxar\to\Prop\), or \(\ord(\sty)\ne 0\).
\item The ``main formula'' \(\form_0\) is a formula of
  the form \(F\,\lambda \seq{x}_{1,\ldots,M}.\TRUE\).
%%  , where \(F\). and \(T\)
%%  are fixpoint variables of types
%%  \((\INT^\maxar\to \Prop)\to\Prop\) and
%%  \(\INT^\maxar\to \Prop\) respectively, and \(T\) is
%%  defined by \(T\,\seq{z}=_\mu\TRUE\) (in \(D\)).
\end{asparaenum}
Note that the assumption above does not lose generality.
Given an order-\((n+1)\) disjunctive \muhflz{} formula \(\form\),
it can be first transformed to a formula of the form \(\form'\,\TRUE\),
where \(\TRUE\) does not occur on the right-hand side of any conjunction in \(\form'\).
We then set \(\maxar\) to the largest arity of integer predicates that occur in
argument positions in \(\form'\,\TRUE\), and raise the arity of every integer predicate
argument to \(\maxar\) by adding dummy arguments. For example, given
\[(\lambda f^{\INT\to\Prop}.f\,1)((\lambda g^{\INT\to\INT\to\Prop}.g\,1)(\lambda x^{\INT}.\lambda y^{\INT}.x\le y)),\]
we can set \(\maxar\) to \(2\), and replace the formula with:
\begin{align*}
&  (\lambda f'^{\INT\to\INT\to\Prop}.f'\,1\,0) %\\&\quad
  \lambda z_1.\lambda z_2.((\lambda g^{\INT\to\INT\to\Prop}.g\,1)(\lambda x^{\INT}.\lambda y^{\INT}.x\le y))\,z_1.
\end{align*}
Here, we have inserted dummy (actual and formal) parameters \(0\) and \(z_2\) to increase the arities of
\(f\) and the argument of \(\lambda f^{\INT\to\Prop}.f\,1\).
We can then apply \(\lambda\)-lifting to remove \(\lambda\)-abstractions and generate
a set of top-level definitions \(D\).

\ifshort\else
The formula \(\form_{\Sumex}\) given earlier in this section is represented as:
\((\NT_{\Sum},D_{\Sum}, S\,\lambda z.\TRUE)\), where \(D_{\Sum}\) consists of the following equations.
Here, \(\maxar\) is set to \(1\).
\begin{align*}
&  S\; t =_\mu \Sum\,t\,n\,(C\,t)\\
&  C\;t\;r =_\mu r<n\land t\,0\\
  & \Sum\;t\;x\;k =_\mu %\\&
  (x< 0\land t\,0)\lor
  (x= 0\land k\,0)\lor
  (x>0\land \Sum\,t\,(x-1)\,(K\,k\,x))\\
& K\;k\;x\;y =_\mu k(x+y),
\end{align*}
and \(\NT_{\Sum}\) is:
\begin{align*}
&  S\COL (\INT\to\Prop)\to\Prop, C\COL (\INT\to\Prop)\to\INT\to\Prop,\\
  & \Sum\COL (\INT\to\Prop)\to\INT\to(\INT\to\Prop)\to\Prop,\\
  & K\COL (\INT\to\Prop)\to\INT\to\INT\to\Prop.
\end{align*}
\fi

We translate  each equation
\(F\,y_1\,\cdots\,y_m =_\mu  \form\) in \(D\) as follows.
We first decompose the formal parameters \(y_1,\ldots,y_m\) to
two parts: \(y_1,\ldots,y_j\) and \(y_{j+1},\ldots,y_{m}\),
where the orders of (the types of) \(y_{j+1},\ldots,y_m\) are at most \(0\),
and the order of \(y_j\) is at least \(1\); note that
the sequences \(y_1,\ldots,y_j\) and \(y_{j+1},\ldots,y_{m}\) are possibly empty.
We further decompose \(y_{j+1},\ldots,y_{m}\) into order-0 variables
\(x_1,\ldots,x_k\) and integer variables \(z_1,\ldots,z_{p}\) (thus, \(j+k+p=m\)).
Formally, the decomposition of formal parameters is defined 
\ifshort
by
 \( \decomparg(\epsilon, \Prop) = (\epsilon, \epsilon, \epsilon)\) and
\begin{align*}
  & \decomparg(u\cdot \seq{y}, \sty\to\fty) = \\&  
  \left\{
  \begin{array}{ll}
    ((u\COL\sty)\cdot \stenv, \seq{x}, \seq{z}) &\mbox{if 
      \(\decomparg(\seq{y},\fty)=(\stenv, \seq{x}, \seq{z}), \stenv\ne \epsilon\)}\\
    (u\COL \sty, \seq{x}, \seq{z}) &\mbox{if \(\ord(\sty)>0\)}, %\\    &\
    \mbox{\( \decomparg(\seq{y},\fty)=(\epsilon, \seq{x},\seq{z})\)}\\
    (\epsilon, u\cdot \seq{x}, \seq{z}) &\mbox{if \(\sty=\INT^\maxar\to\Prop\), } %\\ & \
    \mbox{\(\decomparg(\seq{y},\fty)=(\epsilon, \seq{x},\seq{z})\)}\\
    (\epsilon, \seq{x}, u\cdot\seq{z}) &\mbox{if \(\sty=\INT,
      \decomparg(\seq{y},\fty)=(\epsilon, \seq{x},\seq{z})\)}
  \end{array}
  \right.
\end{align*}
\else
by:
\begin{align*}
  & \decomparg(\epsilon, \Prop) = (\epsilon, \epsilon, \epsilon)\\
  & \decomparg(u\cdot \seq{y}, \sty\to\fty) =\\
&  
  \left\{
  \begin{array}{ll}
    ((u\COL\sty)\cdot \stenv, \seq{x}, \seq{z}) &\mbox{if 
      \(\decomparg(\seq{y},\fty)=(\stenv, \seq{x}, \seq{z}), \stenv\ne \epsilon\)}\\
    (u\COL \sty, \seq{x}, \seq{z}) &\mbox{if \(\ord(\sty)>0\)}, %\\    &\
    \mbox{\( \decomparg(\seq{y},\fty)=(\epsilon, \seq{x},\seq{z})\)}\\
    (\epsilon, u\cdot \seq{x}, \seq{z}) &\mbox{if \(\sty=\INT^\maxar\to\Prop\), } %\\ & \
    \mbox{\(\decomparg(\seq{y},\fty)=(\epsilon, \seq{x},\seq{z})\)}\\
    (\epsilon, \seq{x}, u\cdot\seq{z}) &\mbox{if \(\sty=\INT,
      \decomparg(\seq{y},\fty)=(\epsilon, \seq{x},\seq{z})\)}
  \end{array}
  \right.
\end{align*}
\fi
Here, \(\decomparg(\seq{y}_{1,\ldots,m},\NT(F))\) decomposes the sequence of
variables \(\seq{y}_{1,\ldots,m}\) and returns a triple \((\stenv,\seq{x},\seq{z})\),
where \(\stenv\) is the type environment for \(y_1,\ldots,y_j\),
\(\seq{x}\) is the sequence of integer predicate variables, 
and \(\seq{z}\)
is the sequence of integer variables.

For example, given an equation
\(F\,u_1\,u_2\,u_3\,u_4\,u_5 =_\mu \form\),
where \(\NT(F)=\INT\to((\INT\to\Prop)\to\Prop)\to\INT\to (\INT\to\Prop)\to\INT\to\Prop\),
the formal parameters \(u_1\,\cdots\,u_5\) are decomposed as follows.
\begin{align*}
  &\decomparg(u_1\,\cdots\,u_5, \NT(F)) %\\&
  = (\set{u_1\COL\INT,u_2\COL (\INT\to\Prop)\to\Prop},
u_4, u_3u_5).
\end{align*}

Given an equation 
\(F\,\seq{y} =_\mu  \form\) where
\(\decomparg(\seq{y},\NT(F))=(\stenv,\seq{x}_{1,\ldots,k},\seq{z})\)
with \(\stenv=y_1\COL\sty_1,\ldots,y_j\COL\sty_j\),
%%and \(\Delta=x_1\COL\INT^{\ell_1}\to\Prop,
%%\ldots,x_k\COL\INT^{\ell_k}\to\Prop\),
we generate equations for new fixpoint variables \(F_0,\ldots,F_k\).
As in the order-1 case,
for \(i\in\set{1,\ldots,k}\),
\(F_i\,\seq{\form}'_{1,\ldots,j}\,\seq{z}\,\seq{u}_{1,\ldots,\maxar}\)
  represents the condition for
  \(F\,\seq{\form}_{1,\ldots,j}\,\seq{w}\)
  to be reduced to \(x_i\,\seq{u}_{1,\ldots,\maxar}\)
(where \(\seq{\form}'_{1,\ldots,j}\) is the sequence of formulas obtained by
  translating \(\seq{\form}_{1,\ldots,j}\) in a recursive manner,
  and \(\seq{w}\) is a sequence obtained by shuffling \(\seq{x}_{1,\ldots,k}\) and \(\seq{z}\)).
\(F_0\) is a new component required to deal with higher-order formulas;
it is used to compute the condition for
\(F\,\seq{y}\) to be reduced to \(x\,\seq{u}_{1,\ldots,\ell_i}\) for some order-0
predicate \(x\), which has been passed through higher-order parameters
\(\seq{y}_{1,\ldots,j}\).
For example, consider a formula
\(F\,(G\,x)\,y\) where
\(F\COL ((\INT\to\Prop)\to\Prop)\to (\INT\to\Prop)\to\Prop,
G\COL (\INT\to\Prop)\to(\INT\to\Prop)\to\Prop\).
Then, the condition for \(F\,(G\,x)\,y\) to be reduced to \(y\,n\) is
computed by using \(F_1\), while
the condition for \(F\,(G\,x)\,y\) to be reduced to \(x\,n\) is
computed by using \(F_0\); see Example~\ref{ex:f0}
for a concrete version of
this example.

To compute \(F_0,\ldots,F_k\), we translate each subformula \(\form\) of
the body of \(F\) to:
\ifshort
\((\form_\Mark, \form_0,\form_1,\ldots,\form_k,\form_{k+1},\ldots,\form_{k+\gar(\fty)})\),
\else
\[(\form_\Mark, \form_0,\form_1,\ldots,\form_k,\form_{k+1},\ldots,\form_{k+\gar(\fty)}),\]
\fi
where \(\fty\) is the type of \(\form\),
and \(\gar(\fty)\) denotes the number of order-0 arguments passed after
the last argument of order greater than \(0\). More precisely,
we define the decomposition of types as follows.
\begin{align*}
  & \decomp(\Prop) = (\epsilon, \epsilon, 0)\\
  & \decomp(\sty\to\fty) = %\\&  
  \left\{
  \begin{array}{ll}
    (\sty\cdot \seq{\sty}, m, n) &\mbox{if 
      \(\decomp(\fty)=(\seq{\sty}, m,n), \seq{\sty}\ne \epsilon\)}\\
    (\sty, m, n) &\mbox{if \(\ord(\sty)>0,
      \decomp(\fty)=(\epsilon, m,n)\)}\\
    (\epsilon, m+1, n) &\mbox{if \(\sty=\INT^\maxar\to\Prop,
      \decomp(\fty)=(\epsilon, m,n)\)}\\
    (\epsilon, m, n+1) &\mbox{if \(\sty=\INT,
      \decomp(\fty)=(\epsilon, m,n)\)}
  \end{array}
  \right.
\end{align*}
Then, \(\gar(\fty)\) denotes \(m\)
when \(\decomp(\fty)=(\seq{\sty},m,n)\).
For example, for \(\fty=(\INT\to\Prop)\to
((\INT\to\Prop)\to \Prop)\to (\INT\to\Prop)\to \INT\to(\INT\to\Prop)\to\Prop\),
\(\decomp(\fty)=((\INT\to\Prop)\cdot ((\INT\to\Prop)\to\Prop), 2, 1)\);
hence \(\gar(\fty)=2\).
Here, \(\form_1,\ldots,\form_k\) are analogous to \(F_1,\ldots,F_k\): they
are used for computing the condition for \(\form\,\seq{\psi}\) to be reduced
to \(x_i\,\seq{n}\). Similarly, \(\form_{k+i}\) (where \(i\in\set{1,\ldots,\gar(\fty)}\))
is used for computing the condition for \(\form\,\seq{\psi}\) to be reduced
to \(\psi_i\,\seq{n}\), where \(\psi_i\) is the \(i\)-th order-0 argument of \(\form\).
The component \(\form_0\) is analogous to \(F_0\), and used to compute the condition for
\(\form\,\seq{\psi}\) to be reduced to \(x\,\seq{n}\), where
\(x\) is an order-0 predicate passed through higher-order arguments of \(\form\).
The other component \(\form_\Mark\) is similar to \(\form_0\), but the target
\ifshort \else order-0 \fi predicate \(x\) may have already been set inside \(\form_\Mark\).
%%The difference between
%%\(\form_\Mark\) and \(\form_0\) is subtle; we shall provide more explanation and
%%an example later to clarify the difference.

Based on the intuition above, we formalize the translation of
a formula as the relation:
\ifshort
\(
\stenv;\penv \pn \form:\fty \tr (\form_\Mark,\form_0,\ldots,\form_{k+\gar(\fty)})
\).
\else
\[
\stenv;\penv \pn \form:\fty \tr (\form_\Mark,\form_0,\ldots,\form_{k+\gar(\fty)}).
\]
\fi
Here, \(\NT\) denotes the type environment for fixpoint variables defined by \(D\).
If \(\form\) is a subformula of the body of \(F\),
and \(F\) is defined by \(F\,\seq{y} =_\mu \form_F\),
then 
\(\stenv\) and \(\penv\)
 are set to \(\stenv_F,\seq{z}\COL\seq{\INT}\) and \(\seq{x}_F\) respectively,
 where \(\decomparg(\seq{y},\NT(F))=(\stenv_F,\seq{x}_F,\seq{z})\).
%% We assume \(\Delta = x_1\COL(\INT^{\ell_1}\to\Prop),\ldots,
%% x_k\COL(\INT^{\ell_k}\to\Prop)\).

 \ifshort

 \else
 The output \((\form_\Mark,\form_0,\ldots,\form_{k+\gar(\fty)})\) of the translation
 has type
 \(\fromdisjT{\fty}{k+2}\)
%% (x_\Mark\COL\INT^{\maxar}\to\Prop,
%%   x_{0}\COL \INT^{\maxar}\to\Prop,\Delta)}\)
 under the type environment \(\fromdisjTEp{\NT},
 \fromdisjTE{\stenv}\),
 where the translations of types and type environments are defined by:
\begin{align*}
&  \fromdisjT{\INT}{k} = \INT\\
%%  \fromdisjT{((\INT^{\ell_{k+1}}\to\Prop)\to
%%    (\INT^{\ell_{m}}\to\Prop)\to \INT^n\to \Prop)}{\Delta} \\\quad
%%  = %(\INT^{n+M}\to\Prop)\times
%%  \Pi_{i=1,\ldots,m} (\INT^{n+\ell_i}\to\Prop).\\
&  \fromdisjT{\tau}{k} %%\\&\quad
  =
  (\Pi_{i=1,\ldots,k} (\fromdisjT{\seq{\sty}}{2}\to \INT^{n+\maxar}\to\Prop)) %\\&\quad\ 
  \times 
  (\Pi_{i=1,\ldots,m} (\fromdisjT{\seq{\sty}}{1}\to \INT^{n+\maxar}\to\Prop))
  \\  & \qquad\qquad\qquad\qquad\qquad\qquad\qquad\qquad\qquad\qquad
  \mbox{(if $\decomp(\tau)=(\seq{\sty}, m, n)$)}\\
  %%    \fromdisjT{\sty}{\Delta?}\to\fromdisjT{(\seq{\sty}\To\fty)}{\Delta}
  & \fromdisjTE{\emptyset} = \emptyset\\
  & \fromdisjTE{(\stenv, y\COL\INT)} =
  \fromdisjTE{\stenv},y\COL\INT\\
  & \fromdisjTE{(\stenv, y\COL\fty)} =
  \fromdisjTE{\stenv},y_\Mark\COL\fty_\Mark, y_0\COL\fty_0,\ldots,y_k\COL\fty_k%\\ &\qquad
  \mbox{where $\fromdisjT{\fty}{2}=
    \fty_\Mark\times \fty_0\times \cdots \times
    \fty_k$}\\
  & \fromdisjTEp{\emptyset} = \emptyset\\
  & \fromdisjTEp{(\NT, F\COL\fty)} =
  \fromdisjTEp{\NT},F_0\COL\fty_0,\ldots,F_k\COL\fty_k %\\  &\qquad
  \mbox{where $\fromdisjT{\fty}{1}=
   \fty_0\times \cdots \times \fty_k$}.
\end{align*}
Here, we have extended simple types with product types; we extend the definition of
the order of a type by: \(\ord(\fty_1\times \cdots \times \fty_n)
=\max(\ord(\fty_1),\ldots,\ord(\fty_n))\).
%%In the translation above, \(\maxar\) is an integer constant, which denotes an upper-bound of
%%the arities of integer predicates that may occur in the formula to be translated.
Note that the translation of a type decreases
the order of the type by one, i.e., \(\ord(\fromdisjT{\fty}{k}) = \max(0, \ord(\fty)-1)\).
\fi

\begin{figure*}[tbp]
  \typicallabel{Tr-AppG}
%\infrule[Tr-e]{}{\stenv; x_1,\ldots,x_k  \COL  \Prop \pn \TRUE:\Prop\tr (1,\underbrace{0,\ldots,0}_{k+1})}
  \infrule[Tr-VarG]{
    \form_j =\left\{\begin{array}{ll}
      \lambda \seq{z}_{1,\ldots,\maxar}.\lambda \seq{w}_{1,\ldots,\maxar}.  \wedge_{p=1,\ldots,\maxar}(z_p=w_p),
       &\mbox{if $j=i$}\\
      \lambda \seq{z}_{1,\ldots,\maxar}.\lambda \seq{w}_{1,\ldots,\maxar}.\FALSE
      & \mbox{otherwise}
    \end{array}\right.
%%    \andalso \ell_\Mark=\ell_0=\maxar
  }{\stenv;
    \penv
\pn x_i:\INT^{\maxar}\to\Prop\tr
  (\form_\Mark,\form_0,\ldots, \form_{k})}
%%  (\underbrace{0,\ldots,0}_{i},1,\underbrace{0,\ldots,0}_{k-i+1})}
%  \rulesp

  \infrule[Tr-Var]{
%%    \stenv(y)=\seq{\sty}\To ({\INT}^{\ell_{k+1}}\to\Prop)\to \cdots \to ({\INT}^{\ell_{m}}\to\Prop)\to
    %%    {\INT}^p\to\Prop
    \decomp(\stenv(y))=(\seq{\sty}, m, p)\\
%%    \form_j = \lambda\seq{u}\COL\fromdisjT{\seq{\sty}}{(x_\Mark\COL\INT^{\maxar}\to\Prop,
%%    x_{0}\COL \INT^{\maxar}\to\Prop)}.\lambda \seq{z}_{1,\ldots,p+\ell_j}.
%%    y_0\,\seq{u}\,\seq{z}_{1,\ldots,p+\ell_j}\,{0}^{\maxar-\ell_{j}}
  }
        {\stenv; \penv
          \pn y: \stenv(y) \tr
          (y_\Mark, \underbrace{y_0,\ldots,y_0}_{k+1}, y_1,\ldots,y_m)}

        \infrule[Tr-VarF]{
          \decomp(\NT(F))=(\seq{\sty},m, p)
  }
        {\stenv; \penv
           \pn F:\NT(F)\tr
          (F_0, \underbrace{F_0, \ldots,F_0}_{k+1}, F_1,\ldots,F_m)}

        %%          (y_0,y_1,\ldots,y_\ell,\underbrace{y_{\ell+1},\ldots,y_{\ell+1}}_{k+1})}
%%\asd{memo: Previously, nonterminals were treated by this rule \rnp{Tr-Var}.}
%%\nk{We should probably treat non-terminals separately, as follows.}
%  \rulesp
  \infrule[Tr-Le]
          {\stenv; \penv \pn \form:\Prop \tr (\form_\Mark, \form_0,\ldots,\form_{k})\andalso
            \formalt_j = \lambda \seq{z}_{1,\ldots,\maxar}.
            \bra{e_1\le e_2}{\form_j\,\seq{z}_{1,\ldots,\maxar}}
          }
        {\stenv;\penv
 \pn e_1\le e_2\land \form:\Prop \tr
 (\formalt_\Mark,\formalt_0,\ldots,\formalt_k)}
%%          (y_0,y_1,\ldots,y_\ell,\underbrace{y_{\ell+1},\ldots,y_{\ell+1}}_{k+1})}
        %  \rulesp

\ifnarrow
  \infrule[Tr-App]{
%%        \decomp(\sty_0) = (\seq{\sty},m',p)\andalso 
 \ord(\sty_0\to\fty)>1\andalso   \gar(\sty_0\to\fty)=m
    \andalso
\gar(\sty_0) = m'\\
    \stenv;\penv
    \pn \form:\sty_0\to \fty
 \tr (\form_\Mark,\form_0,\ldots,\form_{k+m})\andalso
    \stenv;\penv \pn \formalt:\sty_0\tr
    (\formalt_\Mark,\formalt_0,\ldots,\formalt_{k+m'})
%%        \formaltp_j = \lambda\seq{u}\COL\fromdisjT{\seq{\sty}}{(x_\Mark\COL\INT^{\maxar}\to\Prop,
%%    x_{0}\COL \INT^{\maxar}\to\Prop)}.\lambda \seq{z}_{1,\ldots,p+\maxar}.
%%    \formalt_j\,\seq{u}\,\seq{z}_{1,\ldots,p+\ell_j}
  }
       {\stenv;\penv
            \pn \form\,\formalt:\fty %%\\\qquad
\tr (\form_\Mark(\formalt_\Mark,\formalt_0,\formalt_{k+1},\ldots,\formalt_{k+m'}),
\form_0(\formalt_0,\formalt_0,\formalt_{k+1},\ldots,\formalt_{k+m'}),\\\qquad\qquad\qquad
\form_1(\formalt_1,\formalt_0,\formalt_{k+1},\ldots,\formalt_{k+m'}),\ldots,
\form_k(\formalt_k,\formalt_0,\formalt_{k+1},\ldots,\formalt_{k+m'}),\\\qquad\qquad\qquad
\form_{k+1}(\formalt_0,\formalt_{k+1},\ldots,\formalt_{k+m'}),
\ldots,
\form_{k+m}(\formalt_0,\formalt_{k+1},\ldots,\formalt_{k+m'}))}
\else
  \infrule[Tr-App]{
%%        \decomp(\sty_0) = (\seq{\sty},m',p)\andalso 
 \ord(\sty_0\to\fty)>1\andalso   \gar(\sty_0\to\fty)=m
    \andalso
\gar(\sty_0) = m'\andalso
    \stenv;\penv
    \pn \form:\sty_0\to \fty
 \tr (\form_\Mark,\form_0,\ldots,\form_{k+m})\\
    \stenv;\penv \pn \formalt:\sty_0\tr
    (\formalt_\Mark,\formalt_0,\ldots,\formalt_{k+m'})
%%        \formaltp_j = \lambda\seq{u}\COL\fromdisjT{\seq{\sty}}{(x_\Mark\COL\INT^{\maxar}\to\Prop,
%%    x_{0}\COL \INT^{\maxar}\to\Prop)}.\lambda \seq{z}_{1,\ldots,p+\maxar}.
%%    \formalt_j\,\seq{u}\,\seq{z}_{1,\ldots,p+\ell_j}
  }
       {\stenv;\penv
            \pn \form\,\formalt:\fty %%\\\qquad
\tr (\form_\Mark(\formalt_\Mark,\formalt_0,\formalt_{k+1},\ldots,\formalt_{k+m'}),
\form_0(\formalt_0,\formalt_0,\formalt_{k+1},\ldots,\formalt_{k+m'}),
\form_1(\formalt_1,\formalt_0,\formalt_{k+1},\ldots,\formalt_{k+m'}),\ldots,\\\qquad\qquad\qquad
\form_k(\formalt_k,\formalt_0,\formalt_{k+1},\ldots,\formalt_{k+m'}),
\form_{k+1}(\formalt_0,\formalt_{k+1},\ldots,\formalt_{k+m'}),
\ldots,
\form_{k+m}(\formalt_0,\formalt_{k+1},\ldots,\formalt_{k+m'}))}
\fi       
%  \rulesp

  \infrule[Tr-AppG]{
\decomp(\fty)=(\epsilon, m-1,p)\andalso
    \stenv;\penv
    \pn \form:(\INT^{\maxar}\to\Prop)\to \fty\tr (\form_\Mark,\form_0,\ldots,\form_{k+m})\\
    \stenv;  \penv
    \pn \formalt:{\INT}^{\maxar}\to\Prop\tr (\formalt_\Mark,\formalt_0,\ldots,\formalt_{k})\\
    \formaltp_j = %\left\{\begin{array}{ll}
%%    \lambda \seq{z}_{1,\ldots,p}.\lambda \seq{w}_{1,\ldots,\maxar}.
%%    \form_j\,\seq{z}\,\seq{w} \lor \exists \seq{u}_{1,\ldots,\ell_{k+1}}.
%%    (\form_k\,\seq{z}\,\seq{u}_{1,\ldots,\ell_{k+1}}
%%    \land \formalt_j\,\seq{u}_{1,\ldots,\ell_{k+1}}\,\seq{w}_{1,\ldots,\maxar}) &\mbox{if $j\in\set{\Mark,0}$}\\
    \lambda \seq{z}_{1,\ldots,p}.\lambda \seq{w}_{1,\ldots,\maxar}.
    \form_j\,\seq{z}\,\seq{w} \lor \exists \seq{u}_{1,\ldots,\maxar}.
    (\form_{k+1}\,\seq{z}\,\seq{u}_{1,\ldots,\maxar}
    \land \formalt_j\,\seq{u}_{1,\ldots,\maxar}\,\seq{w}_{1,\ldots,\maxar})
    %%&\mbox{if $j\in\set{1,\ldots,k}$}
    %%    \end{array}\right.
%   \mbox{ where }\ell_\Mark=\ell_0=M
  }
          {\stenv;\penv
            \pn \form\,\formalt: \fty\tr
            (\formaltp_\Mark, \formaltp_0,\ldots,\formaltp_k,
            \form_{k+2},\ldots,\form_{k+m})}

%  \rulesp
  \infrule[Tr-AppI]{\stenv; \penv
    \pn \form:\INT\to \fty\tr (\form_\Mark,\form_0,\ldots,\form_{\ell+k})}
          {\stenv;\penv
            \pn \form\,e:\fty \tr
(\form_\Mark\,e, \form_0\,e,\ldots,\form_{\ell+k}\,e)}

\infrule[Tr-Disj]
        {\stenv;\penv
           \pn \form:\Prop\tr (\form_\Mark,\form_0,\ldots,\form_{k})\andalso
          \stenv;\penv
          \pn \formalt:\Prop\tr (\formalt_\Mark,\formalt_0,\ldots,\formalt_{k})\\
          \formaltp_j = \lambda \seq{z}_{1,\ldots,\maxar}.
                   {\form_j\,\seq{z}_{1,\ldots,\maxar}}
                   \lor{\formalt_j\,\seq{z}_{1,\ldots,\maxar}}
                     %%\andalso \ell_\Mark=\ell_0=\maxar
                   }
        {\stenv;\penv
          \pn \form \nc \formalt:\Prop\tr (\formaltp_\Mark,
          \formaltp_0,
          \ldots,\formaltp_{k})}
%  \rulesp
%%  \infrule[Tr-Abs]{\stenv;\cdots}{\cdots}
%%  \infrule[Tr-Mu]{\stenv;\cdots}{\cdots}
  \infrule[Tr-Def]
          {\decomparg(\seq{w},\NT(F)) =
            (\seq{y}\COL\seq{\sty},
            \penv,
            \seq{z})\\
            y_1\COL\sty_1,\ldots,y_m\COL\sty_m,\seq{z}\COL\seq{\INT};
            \penv
    \pn \form:\Prop\tr (\form_\Mark,\form_0,\ldots,\form_k)\\
%    \seq{y}_i=(y_{i,\Mark},y_{i,0},\ldots,y_{i,\gar(\sty_i)})\andalso
%    \seq{y}'_i=(y_{i,0},\ldots,y_{i,\gar(\sty_i)})
\Bigg\{\begin{aligned}\
& \seq{y}_i=(y_{i,\Mark},y_{i,0},\ldots,y_{i,\gar(\sty_i)})
&&\seq{y}'_i=(y_{i,0},\ldots,y_{i,\gar(\sty_i)})
&&\text{if }i \in \set{1,\dots,m},\ \sty_i \neq \INT
\\
& \seq{y}_i=y_i
&&\seq{y}'_i=y_i
&&\text{if }i \in \set{1,\dots,m},\ \sty_i = \INT
\end{aligned}
}
  {\pn (F\,\seq{w}=_\mu 
    \form)\tr
    \set{F_0\,\seq{y}_1\,\cdots\,\seq{y}_m\,\seq{z}=_\mu \form_\Mark}
    \cup 
    \set{F_i\,\seq{y}'_1\,\cdots\,\seq{y}'_m\,\seq{z}=_\mu \form_i\mid
     i\in\set{1,\ldots,k}}
  }
  
  \infrule[Tr-Main]
          {D' = \bigcup\set{D''\mid \pn (F\,\seq{y}=_\mu \form)\tr D''\mid
              F\,\seq{y}=_\mu \form\in D}}
          {(D, S\,\lambda \seq{z}.\TRUE)\tr
             (D', \exists \seq{z}.S_1\,\seq{z})}

        \caption{Translation from order-(\(n+1\)) disjunctive \muhflz{} to
          order-\(n\) \muhflz{}. }
\label{fig:tr-n-1}
\end{figure*}

The translation rules are given in Figure~\ref{fig:tr-n-1}.
%\nk{Adjust the explanations below.}
We explain the main rules below.
In the rule \rn{Tr-VarG} for an order-0 variable \(x_i\)
(which should disappear after the translation), 
\(\form_j\;\seq{z}_{1,\ldots,\maxar}\;\seq{w}_{1,\ldots,\maxar}\) 
should represent the condition for 
\(x_i\;\seq{z}_{1,\ldots,\maxar}\reds x_j\;\seq{w}_{1,\ldots,\maxar}\);
thus \(\form_j\) is defined so that
\(\seq{z}_{1,\ldots,\maxar}\;\seq{w}_{1,\ldots,\maxar}\)
is equivalent to \(\TRUE\) just if \(i=j\) and \(\seq{z}_{1,\ldots,\maxar}=
\seq{w}_{1,\ldots,\maxar}\).
In the rule \rn{Tr-Var} for a variable \(y\) in \(\stenv\),
the output of the translation is constructed from \((y_\Mark,y_0,y_1,\ldots,y_m)\),
whose values will be provided by the environment.
Because the environment does not know order-0 variables \(x_1,\ldots,x_k\),
we use \(y_0\) to compute the condition for 
\(y\,\seq{\formalt}\) to be reduced
to \(x_i\,\seq{m}\).
%%(hence has not set the target predicate yet),
%%we construct the corresponding output formulas \(\form_1,\ldots,\form_k\) (which
%%The formula \(\form_i\) is obtained from
%%\(y_0\) by just adjusting the arity with dummy arguments \(\seq{0}\) being passed to
%%\(y_0\).
The rule \rn{Tr-VarF} for fixpoint variables is almost the same as \rn{Tr-Var},
except that the component \(F_0\) is reused for \(F_\Mark\). The rationale for this
is as follows: both \(\form_{\Mark}\) and \(\form_0\) are used for computing
the condition for a target order-0 predicate variable (which is set by the environment)
to be reached, and the only difference between them is that the target predicate
may have already been set in \(\form_{\Mark}\), but since \(F\) is a closed formula,
such distinction does not make any difference; hence \(F_0\) and \(F_\Mark\) need not be
distinguished from each other.

In the rule \rn{Tr-App}, the first two
components (\(\form_\Mark(\formalt_\Mark,\ldots)\) and
\(\form_0(\formalt_0,\ldots)\)) are used for computing the
condition for some target predicates (set by the environment)
to be reached, and the next \(k\) components
(\(\form_1(\formalt_1,\ldots),\ldots,
\form_k(\formalt_k,\ldots)\)) are used for computing the
condition for predicate \(x_1,\ldots,x_k\) to be reached.
%%The first argument \(\formaltp_i\) of
%%\(\form_i\) is almost the same as \(\formalt_i\), except that
%%the arity is adjusted.
The rule \rn{Tr-AppG} is another rule for applications,
where the argument \(\formalt\) is an order-0 predicate.
The component \(\formaltp_j\) of the output is used for computing
the condition for the predicate \(x_i\) to be reached
(i.e., the condition for a formula of the form
\(\form\,\formalt\,\seq{\formalt'}\)
to be reduced to 
\(x_i\,\seq{w}_{1,\ldots,\ell_j}\),
where \(\seq{\formalt'}\) consists of order-0 predicates
and integer arguments \(\seq{z}_{1,\ldots,p}\)).
The formula \(\form\,\formalt\,\seq{\formalt'}\)
may be reduced to
\(x_i\,\seq{w}_{1,\ldots,\ell_j}\) if either (i)
\(\form\,\formalt\,\seq{\formalt'}\reds
x_i\,\seq{w}_{1,\ldots,\ell_j}\) without \(\formalt\) being called,
or (ii)
\(\form\,\formalt\,\seq{\formalt'}\) is reduced to
\(\formalt\,\seq{z}\,\seq{u}\) for some \(\seq{u}\),
and \ifshort\else then \fi \(\formalt\,\seq{z}\,\seq{u}\) is reduced to
\(x_i\,\seq{w}_{1,\ldots,\ell_j}\).
The part
\(\form_j\,\seq{z}\,\seq{w}\) represents the former condition,
and the part \(\exists \seq{u}.\cdots\) represents
\ifshort the latter. \else the latter condition. \fi
\ifshort\else In the rule \rn{Tr-Def} for definitions,
the bodies of the definitions for \(F_0,\ldots,F_k\)
are set to the corresponding components of the translation of
the body of \(F\). \fi

\begin{example}
\label{ex:f0}
Consider \(S\,(\lambda x.\TRUE)\), where \(S\) is defined by:
\ifshort
\begin{align*}
  & S\,t =_\mu F\,(G\,t)\,t\qquad
   F\,v\,w =_\mu v\,H\lor w\,2\qquad
   G\,p\,q =_\mu p\,1\qquad
   H\,x =_\mu H\,x.
\end{align*}
\else
\begin{align*}
  & S\,t =_\mu F\,(G\,t)\,t\\
  & F\,v\,w =_\mu v\,H\lor w\,2\\
  & G\,p\,q =_\mu p\,1\\
  & H\,x =_\mu H\,x
\end{align*}
\fi
There are the following two ways for \(S\,t\) to be reduced to \(t\,n\) for some \(n\):
\begin{align*}
  &  S\,t \red F\,(G\,t)\,t \red G\,t\,H \lor t\,2 \red G\,t\,H\red t\,1\\&
  S\,t \red F\,(G\,t)\,t \red G\,t\,H \lor t\,2 \red t\,2.
\end{align*}
The output of our transformations (with some simplification) is \(\exists z.S_1\,z\) where:
\begin{align*}
  & S_1 =_\mu
 \lambda w_1.F_0\,(\lambda w_1.G_0\,w_1\lor G_1\,w_1, G_0, G_2)\,w_1
  \lor F_1\,(G_0,G_2)\,w_1\\ &
  F_0\,(v_\Mark,v_0,v_1) =_\mu \lambda z_1.v_\Mark\,z_1\lor \exists u_1.v_1\,u_1\land H_0\,u_1\,z_1\\&
  F_1\,(v_0,v_1) =_\mu \lambda z_1. v_0\,z_1\lor (\exists u_1.v_1\,u_1\land H_0\,u_1\,z_1)\lor 2=z_1\\&
  G_0 =_\mu \lambda w_1.\FALSE \quad %\\&
  G_1 =_\mu \lambda w_1.1=w_1 \quad %\\&
  G_2 =_\mu \lambda w_1.\FALSE\quad %\\&
  H_0\,x=_\mu H_0\,x.
\end{align*}
Notice that the formula \(S_1\,z\) has the following two reduction sequences that lead to the
conditions of the form \(z=n\) for some \(n\).
{\small
\begin{align*}
  &  S_1\,z \reds F_0\,(\lambda w_1.G_0\,w_1\lor G_1\,w_1, G_0, G_2)\,z %%\\ & \qquad
  \reds (\lambda w_1.G_0\,w_1\lor G_1\,w_1)z %\\ & \qquad  \reds G_1\,z
  \reds 1=z\\
  &  S_1\,z \reds F_1\,(G_0,G_2)\,z %\\ & \qquad
   \reds G_0\,z\lor (\exists u_1.G_2\,u_1\land H_0\,u_1\,z)\lor 2=z %\\ & \qquad
   \reds 2=z.
\end{align*}
}
The former reduction sequence corresponds to
the reduction sequence of the original formula \(S\,t \reds t\,1\) where \(t\) embedded in
the first argument of \(F\) (in \(F\,(G\,t)\,t\)) is called,
and the latter reduction sequence corresponds to 
the reduction sequence \(S\,t \reds t\,2\) where the second argument \(t\) of \(F\) (in
\(F\,(G\,t)\,t\)) is called. Note that the first condition \(1=z\) has been computed by using
\(F_0\), and the second condition \(2=z\) has been computed by using \(F_1\). \qed
\end{example}

\ifshort\else
\begin{example}
  %  \nk{Adjust the input and output of the translations below.}
  \ifshort
  The formula \(\form_{\Sumex}\) given in Example~\ref{ex:sum} is represented as:
\((\NT_{\Sum},D_{\Sum}, S\,\lambda z.\TRUE)\), where \(D_{\Sum}\) consists of the following equations.
Here, \(\maxar\) is set to \(1\).
\begin{align*}
&  S\; t =_\mu \Sum\,t\,n\,(C\,t)\\
&  C\;t\;r =_\mu r<n\land t\,0\\
  & \Sum\;t\;x\;k =_\mu\\&
  (x< 0\land t\,0)\lor
  (x= 0\land k\,0)\lor
  (x>0\land \Sum\,t\,(x-1)\,(K\,k\,x))\\
& K\;k\;x\;y =_\mu k(x+y),
\end{align*}
and \(\NT_{\Sum}\) is:
\begin{align*}
&  S\COL (\INT\to\Prop)\to\Prop, C\COL (\INT\to\Prop)\to\INT\to\Prop,\\
  & \Sum\COL (\INT\to\Prop)\to\INT\to(\INT\to\Prop)\to\Prop,\\
  & K\COL (\INT\to\Prop)\to\INT\to\INT\to\Prop.
\end{align*}

  \else
  Recall the example of \(D_{\Sum}\) given earlier in this section.
  \fi
  The following is the output of the translation (with some simplification
  by \(\beta\)-reductions and simple quantifier eliminations).
  \begin{align*}
    & S_0 =_\mu \lambda w_1.\Sum_0\,n\,w_1 \lor \exists u_1.\Sum_2\,n\,u_1\land C_0\,u_1\,w_1\\
    & S_1 =_\mu \lambda w_1.\Sum_0\,n\,w_1\lor \Sum_1\,n\,w_1 %\\&\qquad\qquad
    \lor \exists u_1.\Sum_2\,n\,u_1\land
    (C_0\,u_1\,0\lor c_1\,u_1\,w_1)\\
    & C_0\,x =_\mu \lambda z_1.\FALSE\\
    & C_1\,x =_\mu \lambda z_1.x<n\land 0=z_1\\
    & \Sum_0\,x =_\mu \lambda z_1.(x>0\land 
    (\Sum_0\,(x-1)\,z_1 %\\ &\qquad\qquad\qquad
    \lor \exists u_1.\Sum_2\,(x-1)\,u_1\land K_0\,x\,u_1\,z_1))\\
    & \Sum_1\,x =_\mu x<0\lor (x>0\land (\Sum_0\,(x-1)\,0\lor \Sum_1(x-1) %\\&\qquad\qquad\qquad
    \lor
    \exists u_1.\Sum_2\,(x-1)\,u_1\land K_0\,x\,u_1\,0))\\
    & \Sum_2\,x =_\mu \lambda z_1.x=0\land 0=z_1\\&\qquad\qquad\qquad
    \lor (x>0\land
    (\Sum_0\,(x-1)\,z_1 \\&\qquad\qquad\qquad\qquad
    \lor \exists u_1.\Sum_2\,(x-1)\,u_1%\\&\qquad\qquad\qquad
    \land (K_0\,x\,u_1\,z_1
    \lor \exists u_2. (K_1\,x\,u_1\,u_2\land u_2=z_1))))\\
    & K_0\,x\,y =_\mu \lambda w_1.\FALSE\\
    & K_1\,x\,y =_\mu \lambda w_1.x+y=w_1.
  \end{align*}
  Although the output may look complicated, since the order of the resulting formula is
  \(0\), we can directly translate its validity checking problem
  to a CHC solving problem using the method of \cite{DBLP:conf/sas/0001NIU19},
  for which various automated solvers are available~\cite{DBLP:journals/fmsd/KomuravelliGC16,Eldarica,DBLP:journals/jar/ChampionCKS20}.
  \qed
\end{example}

\begin{example}
%  \nk{Adjust the input and output of the translations below}
  Let us consider the formula \(S\,\lambda z.\TRUE\)\footnote{Taken from \cite{DBLP:conf/sas/IwayamaKST20}.}, where:
  \newcommand\Plus{\mathit{plus}}
  \begin{align*}
  &  S\,t =_\mu \Sum\,\Plus\,n\,(C\,t)\\&
    C\,t\,x =_\mu x<n\land t\,0\\&
    \Sum\,f\,x\,k =_\mu x\le 0\land k\,0 \lor x>0\land f\,x\,(D\,f\,x\,k)\\&
    \Plus\,x\,k =_\mu k(x+x)\\&
    D\,f\,x\,k\,y =_\mu \Sum\,f\,(x-1)\,(E\,y\,k)\\&
    E\,y\,k\,z =_\mu k(y+z).
  \end{align*}
  It is translated to \(\exists z.S_1\,z\), where:\footnote{This is a mechanically generated output
    based on the transformation rules,
  followed by slight manual simplification.}
  \begin{align*}
    & S_1 =_\mu \lambda w_1.\Sum_0\,(\Plus_0, \Plus_0, \Plus_1)\,n\,w_1\\& \quad
    \lor \exists u_1.\Sum_1(\Plus_0,\Plus_1)\,n\,u_1\land (C_0\,u_1\,w_1\lor C_1\,u_1\,w_1)\\&
    C_0\,x =_\mu \lambda z_1.\FALSE\\&
    C_1\,x =_\mu \lambda w_1.x<n\land 0=z_1\\&
    \Sum_0\,(f_\Mark,f_0,f_1)\,x =_\mu
    \lambda z_1.x>0\land (f_\Mark\,x\,z_1\\&\qquad\qquad\lor
    \exists u_1.f_1\,x\,u_1\land D_0(f_\Mark,f_0,f_1)\,x\,u_1\,z_1)\\&
    \Sum_1\,(f_0,f_1)\,x =_\mu \lambda z_1.x\le 0\land 0=z_1 %%\\&\qquad
    \lor
    x>0\land (f_0\,x\,z_1\lor \\&\qquad \exists u_1.
    f_1\,x\,u_1\land (D_0(f_0,f_0,f_1)\,x\,u_1\,z_1
    \lor D_1(f_0,f_1)\,x\,u_1\,z_1))\\&
    \Plus_0\,x =_\mu \lambda w_1.\FALSE\\&
    \Plus_1\,x =_\mu \lambda w_1.x+x=w_1\\&
    D_0\,(f_\Mark,f_0,f_1)\,x\,y =_\mu \lambda w_1.\Sum_0\,(f_\Mark,f_0,f_1)\,(x-1)\,w_1\\&\qquad\lor
    \exists u_1.\Sum_1\,(f_0,f_1)\,(x-1)\,u_1\land E_0\,y\,u_1\,w_1\\&
    D_1\,(f_0,f_1)\,x\,y =_\mu \lambda w_1.\Sum_0\,(f_0,f_0,f_1)\,(x-1)\,w_1\\&\qquad
    \lor
    \exists u_1.\Sum_1(f_0,f_1)\,u_1\\&\qquad\qquad\land (E_0\,y\,u_1\,w_1\lor \exists u_2.E_1\,y\,u_1\,u_2\land u_2=w_1)\\&
    E_0\,y\,z =_\mu \lambda w_1.\FALSE\\&
    E_1\,y\,z =_\mu \lambda w_1.y+z=w_1.
  \end{align*}
  The order of the original formula is \(2\) (since \(\Sum:
  (\INT\to(\INT\to\Prop)\to\Prop)\to\INT\to(\INT\to\Prop)\to\Prop\)),
  while the order of the formula obtained  by the translation is \(1\);
  note that \(\Sum_0: (\INT^2\to\Prop)\times (\INT^2\to\Prop)\times(\INT^2\to\Prop)
  \to \INT^2\to\Prop\).
  By further simplifications (note that the \(0\)-components \(\Sum_0,C_0,D_0,\ldots\) actually return
  \(\FALSE\)), we obtain:
  \begin{align*}
    & S_1 =_\mu \lambda w_1.\exists u_1.\Sum_1(\Plus_0,\Plus_1)\,n\,u_1\land C_1\,u_1\,w_1\\&
    C_1\,x =_\mu \lambda w_1.x<n\land 0=z_1\\&
    \Sum_1\,(f_0,f_1)\,x =_\mu \lambda z_1.x\le 0\land 0=z_1 \\&\qquad
    \lor
    x>0\land (f_0\,x\,z_1\lor \exists u_1.
    f_1\,x\,u_1\land D_1(f_0,f_1)\,x\,u_1\,z_1)\\&
    \Plus_0\,x =_\mu \lambda w_1.\FALSE\\&
    \Plus_1\,x =_\mu \lambda w_1.x+x=w_1\\&
    D_1\,(f_0,f_1)\,x\,y =_\mu \lambda w_1.
    \exists u_1.\Sum_1(f_0,f_1)\,u_1%%\\&\qquad\qquad
    \land E_1\,y\,u_1\,w_1\\&
    E_1\,y\,z =_\mu \lambda w_1.y+z=w_1.
  \end{align*}
  \qed
\end{example}

\fi

\subsection{Correctness}
\label{sec:correctness}
We show the correctness of the translation.

The following lemma states that the output of the translation is well-typed.
\begin{lemma}
\label{lem:well-typedAndOccurrence}
If \(\stenv;\penv\pn\form:\fty\tr 
(\fm_\Mark,\seq{\fm}_{0,\dots,k+\gar(\fty)})\),
then \(\fromdisjTEp{\NT}, \fromdisjTE{\stenv}\pST 
(\fm_\Mark,\seq{\fm}_{0,\dots,k+\gar(\fty)}):
\fromdisjT{\fty}{k+2}\).
Also, for \((y\COL\fty) \in \stenv\),
%and \(i \in \{0,\dots,|\Delta|+\gar(\fty)\}\),
\(y_{\Mark}\) does not occur free in \(\seq{\fm}_{0,\dots,k+\gar(\fty)}\).
\end{lemma}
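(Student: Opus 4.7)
The plan is to prove both claims simultaneously by structural induction on the derivation of the translation judgment $\stenv;\penv\pn\form:\fty\tr (\fm_\Mark,\seq{\fm}_{0,\dots,k+\gar(\fty)})$. The strengthened induction hypothesis is: (i) the tuple of outputs has type $\fromdisjT{\fty}{k+2}$ under $\fromdisjTEp{\NT},\fromdisjTE{\stenv}$, and (ii) for every $(y\COL\fty')\in\stenv$, the variable $y_\Mark$ occurs free only inside $\fm_\Mark$ and not in any of $\seq{\fm}_{0,\dots,k+\gar(\fty)}$. Carrying (ii) alongside (i) is essential, because in the application rules the outputs of a premise are plugged into several positions of another, and the non-occurrence invariant would otherwise be broken.

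For claim (i), the base cases \rn{Tr-VarG}, \rn{Tr-Var}, and \rn{Tr-VarF} are immediate by unfolding the definitions of $\fromdisjTE{\stenv}$ and $\fromdisjTEp{\NT}$: the translated environment contains exactly the components $y_\Mark, y_0,\ldots,y_k$ (respectively $F_0,\ldots,F_k$) whose product decomposition matches $\fromdisjT{\stenv(y)}{2}$ (respectively $\fromdisjT{\NT(F)}{1}$). Rules \rn{Tr-Le}, \rn{Tr-Disj}, \rn{Tr-AppI} and the top-level rules \rn{Tr-Def}, \rn{Tr-Main} are routine since they simply wrap existing components. The real work is in \rn{Tr-App} and \rn{Tr-AppG}: I would check that the relations $\gar(\sty_0\to\fty)=m$ and $\gar(\sty_0)=m'$ yield precisely the curried product arity that $\form_j$ (with type $\fromdisjT{\sty_0\to\fty}{k+2}$ on the $j$-th slot) expects, so that each application $\form_j(\formalt_j,\formalt_0,\formalt_{k+1},\ldots,\formalt_{k+m'})$ produces an object at type $\fromdisjT{\fty}{k+2}$ on its own slot. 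In \rn{Tr-AppG}, the existential quantifier over $\seq{u}_{1,\ldots,\maxar}$ bridges the arity-$\maxar$ hand-off between $\form_{k+1}$ and $\formalt_j$, which is what makes the types line up.

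For claim (ii), I would inspect each rule and argue that $\Mark$-components of bound variables propagate only into the $\Mark$-slot of the conclusion. Concretely: \rn{Tr-Var} places $y_\Mark$ only in the first slot and uses $y_0, y_1,\ldots,y_m$ elsewhere; \rn{Tr-VarF} never refers to $F_\Mark$ at all, reusing $F_0$; in \rn{Tr-App} the only slot in which $\formalt_\Mark$ appears is the $\Mark$-slot of the output, and (by IH applied to the premise for $\form$) any $y_\Mark$ from $\stenv$ appears only in $\form_\Mark$, which in turn is used only in the $\Mark$-slot; \rn{Tr-AppG} places $\formalt_\Mark$ only inside $\formaltp_\Mark$; \rn{Tr-Disj}, \rn{Tr-AppI}, and \rn{Tr-Le} preserve the invariant componentwise by construction; and \rn{Tr-Def} inherits it from the premise for the body.

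The main obstacle I anticipate is purely the index bookkeeping in \rn{Tr-App}, which produces a tuple with $k+m+2$ components, each of which uses a carefully chosen subset of $\formalt$'s components and of $\form$'s components. Aligning these indices with the product decomposition $\fromdisjT{\sty_0\to\fty}{k+2}$ supplied by the IH must be written out explicitly; once this is done for \rn{Tr-App} and \rn{Tr-AppG}, the remaining cases follow essentially by inspection.
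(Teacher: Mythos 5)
Your proposal is correct and matches the paper's proof, which is exactly the ``straightforward induction on the derivation'' of \(\stenv;\penv\pn\form:\fty\tr (\fm_\Mark,\seq{\fm}_{0,\dots,k+\gar(\fty)})\) that you spell out, with both the typing claim and the non-occurrence of \(y_\Mark\) carried through the rule cases simultaneously. Your case analysis of \rn{Tr-App} and \rn{Tr-AppG} (the \(\Mark\)-components feeding only the \(\Mark\)-slot, and the index/arity bookkeeping via \(\gar\) and \(\decomp\)) is precisely the content the paper leaves implicit; the only cosmetic slip is listing \rn{Tr-Def} and \rn{Tr-Main} among the induction cases, although they derive a different judgment form and are not instances of this lemma.
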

\begin{proof}
  Straightforward induction on the derivation of
  \(\stenv;\penv\pn\form:\fty\tr 
(\fm_\Mark,\seq{\fm}_{0,\dots,k+\gar(\fty)})\).
\end{proof}
%%%\nk{If the environment is \(\fromdisjT{\stenv}{\Delta}\),
%%%  we would need to transform the environment for
%%%  the derivation step from \(\stenv;\Delta,x\COL\fty\p \form\)
%%%  to \(\stenv;\Delta\p \lambda x.\form\).
%%%  If the environment is
%%%  \(\fromdisjT{\stenv}{\emptyset}\), then a similar problem arises instead
%%%  in the rule for variable: \(x\COL\fty;\Delta\p x\COL\fty\),
%%%  where we should map a value of type \(\fromdisjT{\fty}{\emptyset}\)
%%%  to \(\fromdisjT{\fty}{\Delta}\).
%%%}

The following theorem states the correctness of the translation.
\begin{theorem}
  \label{th:fromdisj-correctness}
  If
 \( (D, S\,\lambda \seq{z}_{1,\ldots,\maxar}.\TRUE)\tr
             (D', \formalt)\), then 
 \(\sem{(D,S\,\lambda \seq{z}_{1,\ldots,\maxar}.\TRUE)}=\sem{(D',\formalt)}\).
\end{theorem}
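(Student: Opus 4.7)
The plan is to establish the theorem via a compatibility (``fundamental'') lemma that relates source semantics to target semantics along the translation judgment \(\stenv;\penv\pn\form:\fty\tr(\form_\Mark,\form_0,\ldots,\form_{k+\gar(\fty)})\). The proof divides naturally into three stages: a logical relation, a compatibility lemma, and a top-level instantiation.

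First, I would define a logical relation \(\approx_\fty^k\;\subseteq\;\sem{\fty}\times\sem{\fromdisjT{\fty}{k+2}}\), additionally parameterized by a tuple \((\chi_1,\ldots,\chi_k)\) of intended interpretations of the distinguished order-\(0\) variables \(x_1,\ldots,x_k\) of \(\penv\). Intuitively, if \(\decomp(\fty)=(\seq{\sty},m,n)\), then \(v\approx_\fty^k(v_\Mark,v_0,v_1,\ldots,v_{k+m})\) should say that, under any higher-order and order-\(0\) arguments and any integers, the truth value of \(v\) applied to those arguments decomposes as a disjunction of the reach-conditions encoded by the target components: \(v_i\) (for \(1\le i\le k\)) witnesses reaching \(\chi_i\) at the specified integer arguments; \(v_{k+j}\) witnesses reaching the \(j\)-th order-\(0\) argument; \(v_0\) and \(v_\Mark\) witness reaching an order-\(0\) predicate delivered through a higher-order argument, with \(v_\Mark\) accommodating the additional possibility that such a target has already been fixed inside \(\form\). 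The base case is \(\fty=\Prop\); the arrow cases follow the standard logical-relation schema on the translated types, with the \(\Mark/0\) bookkeeping built in.

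Second, I would prove the compatibility lemma: if \(\stenv;\penv\pn\form:\fty\tr(\form_\Mark,\form_0,\ldots,\form_{k+\gar(\fty)})\) and environments \(\rho,\rho'\) are related componentwise via \(\approx\), then \(\sem{\form}\rho \approx_\fty^k (\sem{\form_\Mark}\rho',\sem{\form_0}\rho',\ldots,\sem{\form_{k+\gar(\fty)}}\rho')\). The proof is by induction on the translation derivation, one case per rule in Figure~\ref{fig:tr-n-1}. The fixpoints introduced by \rn{Tr-Def} (and the \(\mu\) operators implicit in the unfolding of \(D\)) are handled as in Lemma~\ref{lem:tr-preserved-by-lfp}: one transports \(\approx\) across matching ordinal approximants of the two least fixpoints via Cousot--Cousot's fixpoint theorem, verifying preservation at successor and limit ordinals. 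Applying the lemma at the top level together with \rn{Tr-Main} yields the theorem, since \(\sem{S\,\lambda\seq{z}.\TRUE}=\top\) iff there exists \(\seq{n}\) such that \(S\) reduces to an application of its unique order-\(0\) argument \(\lambda\seq{z}.\TRUE\) at \(\seq{n}\) (because \(\lambda\seq{z}.\TRUE\) itself always evaluates to \(\top\)); by the relation at the \(S_1\)-slot this is equivalent to \(\sem{\exists\seq{z}.S_1\,\seq{z}}=\top\).

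The hard part will be pinning down the logical relation so that the arrow cases mesh with the two application rules. In \rn{Tr-App}, the function component \(\form_\Mark\) receives \(\formalt_\Mark\) while the remaining components \(\form_0,\form_1,\ldots,\form_k\) receive \(\formalt_0\); the relation must reflect this asymmetric dispatch on the \(\Mark\) versus \(0\) slots when composing. In \rn{Tr-AppG}, each target slot \(\formaltp_j\) combines a ``direct reach'' disjunct \(\form_j\,\seq{z}\,\seq{w}\) with an existentially quantified ``reach via \(\formalt\)'' disjunct \(\exists\seq{u}.\form_{k+1}\,\seq{z}\,\seq{u}\land \formalt_j\,\seq{u}\,\seq{w}\); verifying the relational invariant here amounts to showing that the source-side semantic composition of ``\(\form\) calls its order-\(0\) argument'' with ``\(\formalt\) reaches the distinguished \(\chi_j\)'' is exactly captured by this disjunction. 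A secondary subtlety, addressed by the non-occurrence clause of Lemma~\ref{lem:well-typedAndOccurrence}, is that the bodies \(F_i\) for \(i\in\set{1,\ldots,k}\) in \rn{Tr-Def} omit the \(y_{p,\Mark}\)-components (via \(\seq{y}'_p\)); the lemma guarantees this elision is well-defined and that the relation still applies coherently to those definitions.
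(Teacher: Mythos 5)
Your proposal takes a genuinely different route from the paper, and it is worth saying what the paper actually does: it never constructs a logical relation for this direction. Instead it (i) reduces the claim to \emph{recursion-free} equation systems via the syntactic \(m\)-th approximations \((D^{(m)},\fm_0^{(m)})\) and \(\omega\)-continuity (\cref{sec:reductionToRecFree}), and (ii) in the recursion-free case proves a \emph{subject reduction} property for a modified reduction relation \(\redd\) with explicit substitutions --- introduced precisely because arguments translated by \rnp{Tr-AppG} must never be substituted eagerly --- extending the translation with the rule \rnp{Tr-ESub}, proving substitution lemmas for integer and higher-order arguments (\cref{lem:substitutionInt,lem:substitutionHigher}), and concluding by a direct induction on the normal forms \eqref{eq:normalFormRedd} (\cref{lem:fromdisj-correctRecFree}). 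Your plan would buy a real structural simplification if it worked: no recursion-free reduction (recursion handled by ordinal transport as in \cref{lem:tr-preserved-by-lfp}) and no explicit-substitution machinery.

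However, there is a concrete gap, and it sits exactly at the point you defer: the definition of the relation for the \(\Mark\) and \(0\) slots. The intended reading of \(\form_\Mark\) --- ``the target order-0 predicate may have \emph{already been set inside} \(\form\)'' --- is provenance information about the syntax, not a property of the denotation \(\sem{\form}\rho\); so the base clause at \(\Prop\) cannot be stated by a ``standard logical-relation schema'' without adding the ambient target predicate as an explicit parameter of the relation and threading it through the asymmetric dispatch of \rnp{Tr-App}, where each slot \(j\in\set{\Mark,0,1,\ldots,k}\) of the function consumes \(\formalt_j\) in its first position \emph{and} \(\formalt_0\) in its second (your description of the dispatch is already slightly off here). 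Once the arrow clauses are made faithful to this, they essentially transcribe the translation rules, and the entire proof burden shifts to the \rnp{Tr-VarF}/\rnp{Tr-Def} cases, where you must show semantically that for closed fixpoint variables the \(\Mark\)-slot may be collapsed onto the \(0\)-slot. That is precisely the content the paper isolates syntactically as the equation \(\theta^{0}\fm_\Mark=\theta^{0}\fm_0\) (\cref{lem:substitutionHigher}, item 1), which rests on the occurrence discipline of \cref{lem:well-typedAndOccurrence} (\(y_\Mark\) occurs free in none of the non-\(\Mark\) components); you cite the occurrence lemma for the elision in \rnp{Tr-Def} but supply no analogue of item 1, and nothing in the ``standard schema'' produces it. Two further steps are asserted rather than proved: admissibility of your (quite complex) relation at limit ordinals, which \cref{lem:tr-preserved-by-lfp} establishes only for the far simpler relation of the opposite translation; and the top-level instantiation, which needs more than ``\(\lambda\seq{z}.\TRUE\) evaluates to \(\top\)'' --- one must show that under assumption (I) a disjunctive system can become true \emph{only} by calling its order-0 argument (normal forms \eqref{eq:normalFormRedd} have no \(\TRUE\) leaves), which in the paper is the final induction of \cref{lem:fromdisj-correctRecFree}.
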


%\begin{proof}
The rest of this section is devoted to the proof of Theorem~\ref{th:fromdisj-correctness}.
The proof consists of the following two steps:
(i) we first reduce the proof of Theorem~\ref{th:fromdisj-correctness}
to the case where a given equation system
is recursion-free (in \cref{sec:reductionToRecFree}),
by using a standard technique of finite approximation,
and then (ii) we show the recursion-free case (in \cref{sec:CorrectnessRecursionFree}, with some preparation in \cref{sec:redd}).

For an equation system \((\NT,D, S\,\TRUE)\),
we define \(\eqd\) as follows:
\(\fm \eqd \fa\) if
\(\semes{D}{\fm} = \semes{D}{\fa}\).
For \((F\,\seq{x} =_\mu \fm) \in D\),
we may drop the subscript \(\mu\)
and write \(F\,\seq{x} = \fm\)
if there is no confusion.
We write \([\fa_i/x_i]_{i=1}^{m}\fm\)
for the substitution
\([\fa_1/x_1,\dots,\fa_m/x_m]\fm\).

\subsubsection{Reduction to the Recursion-free Case}
\label{sec:reductionToRecFree}

Here we briefly explain how we can reduce
Theorem~\ref{th:fromdisj-correctness}
to the recursion-free case.

For an equation system \((\NT,D,\fm_0)\) and \(m \in \Nat\),
the \emph{\(m\)-th approximation} \((\NT^{(m)},D^{(m)},\fm_0^{(m)})\)
is defined as follows:
\begin{align*}
\NT^{(m)} \defe&\
\set{F^{(i)} \mapsto \NT(F) \mid F \in \dom(\NT), 0 \le i \le m}
\\
\fm^{(i)}
\defe&\
[F^{(i)}/F]_{F\in\dom(\NT)}\fm
\quad(\text{for any \(\fm\) and \(i \in \set{0, \dots, m}\)})
\\
D^{(m)} \defe&\
\set{
F^{(i)}\,\seq{x}  =  \fm^{(i-1)}
 \mid (F\,\seq{x}  =  \fm) \in D, 1 \le i \le m}
\\\cup&\
\set{
F^{(0)}\,\seq{x}  = \FALSE \land \fm^{(0)}
 \mid (F\,\seq{x}  =  \fm) \in D}.
\end{align*}
%\asd{todo: In the last above, 
%though this definition is light-weight,
%precisely \(\FALSE\) must be \(\FALSE \land \fm^{(0)}\)
%for our syntactic convention, but then we need to change 
%the definition of recursion-free given below as follows:
%}
For \(F^{(0)}\) above, we use \(\FALSE \land \fm^{(0)}\)
rather than \(\FALSE\),
in order to keep the form of \cref{eq:ESassumption}.
By the technique in \cite[Appendix~B.1]{ESOP2018full},
we can show that
\[
\sem{(D,\fm_0)} = \LUB_{\tau}\{\sem{(D^{(m)},\fm_0^{(m)})} \mid m \in \Nat\}.
\]

An equation system \((\NT,D,\fm_0)\)
is called \emph{recursion free} if
there is no cyclic dependency on \(D\).
More precisely, we define a binary relation \(\dep{}\) on \(\dom(\NT)\)
as follows:
\(F \dep{} F'\) iff \(F' \in \FVf(\fm)\)
where \((F\seq{x}=\fm) \in D\)
and \(\FVf(\fm)\) is defined by the following:
\begin{align*}
\FVf(x) &= \set{x},\\
\FVf( \form_1\lor \form_2 ) &=
\FVf( \form_1 ) \cup \FVf( \form_2 ),\\
\FVf( e_1 \le e_2\land\form) &=
\begin{cases}
\emptyset & (e_1 \le e_2 = \FALSE)
\\
\FVf( \form ) & (e_1 \le e_2 \neq \FALSE)
\end{cases}
,\\
\FVf( \form_1\, \form_2 ) &=
\FVf( \form_1 ) \cup \FVf( \form_2 ),\\
\FVf( \form\, e) &= \FVf( \form ).
\end{align*}
%\(F \dep{} F'\) iff \(F'\) occurs on the right-hand side of the 
%equation for \(F\) in \(D\);
Then \(D\) is recursion free if the transitive closure \(\dep{}^*\) of \(\dep{}\) is irreflexive
(i.e., \(F \dep{}^* F\) for no \(F \in \dom(\NT)\)).
%Then it is clear that
%if \((\NT,D, \fm_0)\) is recursion free
%then there exists a fixpoint-free formula \(\fm\)
%such that \(\sem{\fm} = \sem{(D, \fm_0)}\).
Clearly \((D^{(m)},\fm_0^{(m)})\) is recursion-free.

Now, since our translation is compositional,
we can easily show the following:
\begin{lemma}
If \((D^{(m)}, (S\,\lambda \seq{z}. \TRUE)^{(m)})
\tr (D_m, \fm_m)\),
then
\(\sem{(D_m, \fm_m)}
= \sem{(D'^{(m)}, (\exists \seq{z}. S_1\,\seq{z})^{(m)})}\).
\end{lemma}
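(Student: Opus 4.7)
The plan is to establish that the translation commutes with the finite-approximation construction, modulo a natural bijective renaming of the fixpoint variables introduced on the two sides. Unfolding \rnp{Tr-Main} on the input \((S\,\lambda\seq{z}.\TRUE)^{(m)} = S^{(m)}\,\lambda\seq{z}.\TRUE\) yields \(\fm_m = \exists\seq{z}.(S^{(m)})_1\,\seq{z}\); on the other side, \((\exists\seq{z}.S_1\,\seq{z})^{(m)} = \exists\seq{z}.(S_1)^{(m)}\,\seq{z}\). I would identify, for each \(F \in \dom(\NT)\), each \(0 \le i \le m\), and each component index \(j\), the variable \((F^{(i)})_j\) arising on the left with \((F_j)^{(i)}\) arising on the right. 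Under this bijection the two main formulas already coincide syntactically, so it suffices to show that the two equation systems are equivalent.

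The central technical step is a compositionality claim: if \(\stenv;\penv \pn \fm:\fty \tr (\fm_\Mark,\fm_0,\ldots)\), then also \(\stenv;\penv \pn \fm^{(i)}:\fty \tr ((\fm_\Mark)^{(i)},(\fm_0)^{(i)},\ldots)\) under the identification above. This follows by a direct induction on the translation derivation. The only rule that interacts non-trivially with fixpoint variables is \rnp{Tr-VarF}, which translates \(F\) into a tuple built from \(F_0,F_1,\ldots\); since the renaming \(F \mapsto F^{(i)}\) commutes with forming this tuple (after identifying \((F^{(i)})_j\) with \((F_j)^{(i)}\)), the claim goes through. All other rules are compositional in their subformulas and do not mention fixpoint variables directly.

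Given this, I would then compare the equation systems side-by-side. For each \((F\,\seq{x} =_\mu \fm) \in D\) and each \(i \ge 1\), translating \(F^{(i)}\,\seq{x} =_\mu \fm^{(i-1)}\) via \rnp{Tr-Def} produces bodies \((\fm^{(i-1)})_\Mark,(\fm^{(i-1)})_0,\ldots\); by the compositionality claim these equal \((\fm_\Mark)^{(i-1)},(\fm_0)^{(i-1)},\ldots\), which is exactly what the \(m\)-approximation of the original translated equations \(F_j\,\seq{x} =_\mu \fm_j\) in \(D'\) yields at level \(i\). For the base case \(i = 0\), the body \(\FALSE \land \fm^{(0)}\) is translated via \rnp{Tr-Le} followed by \rnp{Tr-Def}, producing for each component a body of the shape \(\lambda \seq{w}.(\FALSE \land \cdots)\), whose denotation is \(\bot\); the base approximation of each \(F_j\) on the right side likewise has denotation \(\bot\). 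Hence all corresponding equations agree semantically, and by the uniqueness of the least fixpoint the two equation systems denote the same value, giving the required equality of semantics.

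The main obstacle I expect is the careful bookkeeping around \rnp{Tr-Def}, in particular verifying that the parameter decomposition \(\decomparg\) is unaffected by the renaming \(F \mapsto F^{(i)}\) (which holds since \(\decomparg\) only inspects types) and that the arities and positions of \(\seq{y}_i\) versus \(\seq{y}'_i\) on the two sides line up correctly. A secondary subtlety is the base case: here the two sides agree only semantically and not syntactically, so the argument must appeal to semantic rather than syntactic equality, and must accommodate the fact that the equation bodies in \(D'\) live at type \(\INT^{\maxar}\to\Prop\) rather than \(\Prop\) as in \(D\), so that the ``\(\FALSE \land (\cdot)\)'' operation used to damp the base level has to be interpreted pointwise under the remaining \(\maxar\) arguments.
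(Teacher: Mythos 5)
Your proposal is correct and takes essentially the same route as the paper, which dismisses this lemma with the single remark that the translation is compositional: your identification \((F^{(i)})_j \equiv (F_j)^{(i)}\), the induction showing that the translation commutes with this renaming (using that \(\decomparg\) and \(\decomp\) depend only on types, which the renaming preserves), and the semantic rather than syntactic treatment of the level-0 bodies \(\FALSE \land \fm^{(0)}\) are precisely the details that remark leaves implicit. Your further observation that the \(\FALSE \land (\cdot)\) damping must be read pointwise at type \(\INT^{\maxar}\to\Prop\) when approximating the translated system \(D'\) is a genuine wrinkle the paper glosses over, and you resolve it correctly.
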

Then we can reduce the proof of Theorem~\ref{th:fromdisj-correctness}
to the recursion-free case as follows.
Let
\((D, S\,\lambda \seq{z}. \TRUE)
\tr (D', \exists \seq{z}. S_1\,\seq{z})\)
and
\((D^{(m)}, (S\,\lambda \seq{z}. \TRUE)^{(m)})
\tr (D_m, \fm_m)\); then
\begin{align*}
\sem{(D,S\,\lambda \seq{z}. \TRUE)} 
&= \LUB_{\Prop}\{\sem{(D^{(m)},(S\,\lambda \seq{z}. \TRUE)^{(m)})} \mid m \in \Nat\}
\\
&= \LUB_{\Prop}\{\sem{(D_{m},\fm_{m})} \mid m \in \Nat\}
\\
&= \LUB_{\Prop}\{\sem{(D'^{(m)}, (\exists \seq{z}. S_1\,\seq{z})^{(m)})} \mid m \in \Nat\}
\\
&= \sem{(D', \exists \seq{z}. S_1\,\seq{z})}
\end{align*}
where for the second equation we assume the recursion-free case.

\ifdraft
\newpage
\fi

\subsubsection{Reduction Relation with Explicit Substitution}
\label{sec:redd}

In our proof of the recursion-free case,
we show a subject reduction property.
To this end,
we modify the reduction strategy
by using explicit substitution, keeping the adequacy for the semantics.
For this modification, we first extend the syntax of formulas as follows:
\begin{equation}
\label{eq:ExtendWithExpSubst}
\begin{aligned}
    \form  ::=
\ &x \mid \form_1\lor \form_2 \mid e_1\le e_2\land\form
\mid \form_1\form_2 \mid \form\, e
\\
\mid\, &\ess{x_1}{\fm_1}{x_m}{\fm_m}\fm
\end{aligned}
\end{equation}
Here \(\ess{x_1}{\fm_1}{x_m}{\fm_m}\fm\) is called
an \emph{explicit substitution},
and limited to ground types as follows:
\infrule[T-ESub]{
\stenv\pST \fm_i:\INT^{\maxar}\to\Prop
\quad(i=1,\dots,m)
\\
\stenv,
x_1\COL\INT^{\maxar}\to\Prop,\dots,x_m\COL\INT^{\maxar}\to\Prop
\pST \fm:\Prop
}{
\stenv\pST \ess{x_1}{\fm_1}{x_m}{\fm_m}\fm:\Prop
}
Its meaning is given through
\[
\toform{(D,\ess{x_1}{\fm_1}{x_m}{\fm_m}\fm)}
\defe \toform{(D,\oss{x_1}{\fm_1}{x_m}{\fm_m}\fm)}.
\]
Thus explicit substitution has the same meaning as 
ordinary substitution,
but delays substitution
until we need \(\fm_i\)
for further reduction.
As in the definition of \(\redd\) below,
while we use ordinary substitutions 
for \(\beta\)-redex to which we can apply
\rnp{Tr-App} and \rnp{Tr-AppI},
we use explicit substitution for
those corresponding to \rnp{Tr-AppG}
because, 
the argument after the translation by \rnp{Tr-AppG}
is never substituted.
% and is reduced in a similar way to explicit substitution.
%(For the detail, see the proof of subject reduction.)

We extend the translation by adding the following rule
for explicit substitutions:
\infrule[Tr-ESub]{\mspace{10mu}
\begin{aligned} &
\stenv;\,\penv
\pn 
\fp_{i}:{\INT}^{\maxar}{\to}\Prop\tr (\fp_{i,\Mark},\fp_{i,0},\ldots,\fp_{i,k})
\mspace{-100mu}\\
&&(i=1,\dots,m)&
\\&
\stenv;\,\penv,\,\seq{x'}_{1,\dots,m} \pn 
\fm:\Prop\tr
(\fm_\Mark,\fm_0,\ldots,\fm_{k+m})
\mspace{-100mu}\\&
\fa_j
=
\lambda \seq{w}_{1,\ldots,\maxar}.\,
\fm_j\,\seq{w}
\lor \textstyle\bigvee_{i=1}^{m}
 \exists \seq{u}_{1,\ldots,\maxar}.
\big(
 \fm_{k+i}\,\seq{u}
 \land \fp_{i,j}\,\seq{u}\,\seq{w}
\big)
\mspace{-100mu}\\
&&(j=\Mark,0,\dots,k)&
\end{aligned}
\mspace{-10mu}
}{
\stenv;\,\penv
 \pn 
\ess{x'_1}{\fp_1}{x'_m}{\fp_m}
\fm
:\Prop\tr
(\fa_\Mark,\fa_0,\ldots,\fa_{k})
}
In the rest of this section,
by a formula we mean a formula that may contain extended substitutions,
except for formulas in an equation system
and except for the case where we explain explicitly.
%an input formula of the extended translation may contain explicit substitutions,
Note that output formulas of the extended translation never contain explicit substitutions.

Let \((\NT,D, S\,\lambda \seq{z}.\TRUE)\) be an equation system.
For decomposing actual arguments \(\fea_1,\dots,\fea_{m}\) of
a function \(F \in \dom(\NT)\)---recall that \(\fea_i\) ranges over formulas and integer expressions---we 
define \(\decompA(\fea_1,\dots,\fea_{m'}, \NT(F))\) 
in the same way as \(\decomparg\) as follows:
\begin{align*}
  & \decompA(\epsilon, \Prop) = (\epsilon, \epsilon, \epsilon)\\
  & \decompA(\fea\cdot \seq{\feb}, \sty\to\fty) =\\
&  
  \left\{
  \begin{array}{ll}
    (\fea\cdot \seq{\fm}, \seq{\fa}, \seq{e}) &\mbox{if 
      \(\decompA(\seq{\feb},\fty)=(\seq{\fm}, \seq{\fa}, \seq{e}), \seq{\fm}\ne \epsilon\)}\\
    (\fea, \seq{\fa}, \seq{e}) &\mbox{if \(\ord(\sty)>0,
      \decompA(\seq{\feb},\fty)=(\epsilon, \seq{\fa},\seq{e})\)}\\
    (\epsilon, \fea\cdot \seq{\fa}, \seq{e}) &\mbox{if \(\sty=\INT^\maxar\to\Prop\),}\\
    & \  \mbox{\(\decompA(\seq{\feb},\fty)=(\epsilon, \seq{\fa},\seq{e})\)}\\
    (\epsilon, \seq{\fa}, \fea\cdot\seq{e}) &\mbox{if \(\sty=\INT,
      \decompA(\seq{\feb},\fty)=(\epsilon, \seq{\fa},\seq{e})\)}
  \end{array}
  \right.
\end{align*}

Now we define the modified reduction relation \(\redd\)
for a formula \(\fm\) 
%generated by the grammar \eqref{eq:ExtendWithExpSubst} above
such that
\(\NT,x_1 \COL \INT^{\maxar}{\to}\Prop,\ldots,x_k \COL \INT^{\maxar}{\to}\Prop \pST \fm : \Prop\) holds
for some \(x_1,\ldots,x_k\).
We define the set of \emph{evaluation contexts} by:
\[
E ::= \hole \mid E \lor \fm \mid \fm \lor E \mid
\ess{x_1}{\fm_1}{x_m}{\fm_m}E.
\]
Then \(\redd\) is defined by the following rules:
%\infrule{\fm_1 \redd \fm'_1}{\fm_1\lor\fm_2 \redd \fm'_1\lor\fm_2}
%\infrule{\fm_2 \redd \fm'_2}{\fm_1\lor\fm_2 \redd \fm_1\lor\fm'_2}
%\infrule{i\in\set{1,2}}{\fm_1\lor\fm_2\redd \fm_i}
\infrule{\sem{\pST e_1:\INT}  >  \sem{\pST e_2:\INT}
\andalso (e_1\le e_2) \neq \FALSE
}{E[e_1\le e_2 \land \fm] \redd E[\FALSE\land \fm]}
\infrule{\sem{\pST e_1:\INT} \le \sem{\pST e_2:\INT}
}{E[e_1\le e_2 \land \fm] \redd E[\fm]}
\infrule{(F\,w_1\,\cdots\,w_m = \fm) \in D
\\ \decompA(\fea_1,\cdots,\fea_m,\NT(F))=
(\seq{\fm},\seq{\fa},\seq{e})
\\ \decomparg(w_1,\cdots,w_m,\NT(F))=
(\seq{y}\COL\seq{\sty},\seq{x},\seq{z})
%\\       |\seq{\fa}|=|\seq{x}|
%\andalso |\seq{e}|=|\seq{z}|
%\andalso |\seq{\fm}|=|\seq{y}|
\\ \seq{x} \text{ do not occur in }E[F\,\fea_1\,\cdots\,\fea_m]
}{E[F\,\fea_1\,\cdots\,\fea_m] \redd 
  E[\esubst{\seq{x}}{\seq{\fa}}[\seq{e}/\seq{z}][\seq{\fm}/\seq{y}]\fm]}
%Recall that \(\fea_i\) ranges over formulas and integer expressions.
\infrule{}{E[\esubst{\seq{x}}{\seq{\fm}}(x_i\,\seq{e})] \redd E[\fm_i\,\seq{e}]}
\infrule{x \notin \{x_1,\dots,x_{|\seq{x}|}\}\cup\dom(\NT)}{E[\esubst{\seq{x}}{\seq{\fm}}(x\,\seq{e})] \redd E[x\,\seq{e}]}
\infrule{}{E[\esubst{\seq{x}}{\seq{\fm}}(\fa_1\lor\fa_2)]
 \redd E[ (\esubst{\seq{x}}{\seq{\fm}}\fa_1)
     \lor (\esubst{\seq{x}}{\seq{\fm}}\fa_2) ]}
\infrule{}{E[\esubst{\seq{x}}{\seq{\fm}}(\FALSE \land \fm)]
 \redd E[ \FALSE \land (\esubst{\seq{x}}{\seq{\fm}} \fm) ]}
Note that the above reduction preserves the form of~\eqref{eq:ExtendWithExpSubst}
and hence the applicability of the translation \(\tr\).
For any \(\fm\), \(\fm\) is a normal form with respect to \(\redd\)
iff \(\fm\) is generated by:\asd{todo: prove again}
%the following grammar:
\begin{equation}
\label{eq:normalFormRedd}
\nf ::= x \, \seq{e} \ (x \notin \dom(\NT)) \mid \FALSE \land \fm \mid \nf \lor \nf.
\end{equation}
Clearly we have:
\begin{lemma}
\label{lem:sound_redd}
If \(\fm \redd \fa\),
then 
%\(\fm \eqd \fa\).
\(\semes{D}{\fm} = \semes{D}{\fa}\).
\end{lemma}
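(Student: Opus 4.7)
The plan is to proceed by case analysis on which of the rules defining $\redd$ was used to derive $\fm \redd \fa$. Since every such rule has the shape $E[\fm_0] \redd E[\fa_0]$ for some redex $\fm_0$ and its contractum $\fa_0$, I would first prove a congruence lemma: for any evaluation context $E$ and any formulas $\fm_0, \fa_0$ such that $\semes{D}{\fm_0}= \semes{D}{\fa_0}$, we have $\semes{D}{E[\fm_0]}=\semes{D}{E[\fa_0]}$. This follows by induction on the structure of $E$, since the clauses $E \lor \fm$, $\fm \lor E$, and $\ess{\seq x}{\seq\fm}E$ all translate (via $\toform{(-)}$ and the expansion of explicit substitution into ordinary substitution) into semantic operators that are functions of the denotation of the hole. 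After this congruence lemma, it suffices to check each rule at the top level, i.e.\ with $E = \hole$.

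The two rules for $e_1 \le e_2 \land \fm$ are immediate from the semantic clauses for $\land$ and $\le$: in the ``true'' case $(e_1 \le e_2) \land \fm$ and $\fm$ have the same denotation, and in the ``false'' case both sides denote $\bot$. The explicit-substitution propagation rules are all immediate consequences of the defining equation
\[
\toform{(D,\ess{\seq x}{\seq\fm}\fa)} \;=\; \toform{(D,\oss{\seq x}{\seq\fm}\fa)},
\]
combined with the fact that ordinary (capture-avoiding) substitution commutes semantically with the constructors $\lor$ and $\FALSE \land (\cdot)$ and with variable lookup (the two cases $x_i \in \seq x$ and $x \notin \seq x \cup \dom(\NT)$).

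The main case, and the only one that requires real bookkeeping, is the fixpoint-unfolding rule
\[
F\,\fea_1\,\cdots\,\fea_m \;\redd\; \esubst{\seq x}{\seq\fa}[\seq e/\seq z][\seq\fm/\seq y]\,\fm,
\]
where $(F\,w_1\cdots w_m = \fm) \in D$, $\decompA(\fea_1,\dots,\fea_m,\NT(F)) = (\seq\fm,\seq\fa,\seq e)$, and $\decomparg(w_1,\dots,w_m,\NT(F)) = (\seq y\COL\seq\sty,\seq x,\seq z)$. Here the plan is first to observe that $\decompA$ and $\decomparg$ are defined by identical recursion schemas on $\NT(F)$, so the lengths and types of the produced tuples match up: $\seq\fm$ has the same length as $\seq y$, $\seq\fa$ the same length as $\seq x$, and $\seq e$ the same length as $\seq z$, with matching types at each position. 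Then, using the semantics of equation systems, $F$ denotes the $F$-component of the least fixpoint of the functional induced by $D$, which, by the defining equation for $F$, satisfies $\semes{D}{F}\,\seq v = \semes{D}{[\seq v/\seq w]\fm}$ for any semantic arguments $\seq v$. Applying this to $\seq v$ equal to the denotations of $\fea_1,\dots,\fea_m$ and then re-grouping according to the decomposition, we obtain
\[
\semes{D}{F\,\fea_1\,\cdots\,\fea_m}
\;=\;
\semes{D}{[\seq\fa/\seq x][\seq e/\seq z][\seq\fm/\seq y]\,\fm},
\]
which equals $\semes{D}{\esubst{\seq x}{\seq\fa}[\seq e/\seq z][\seq\fm/\seq y]\,\fm}$ by the definition of explicit substitution. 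The side condition that the order-$0$ variables $\seq x$ do not occur in $E[F\,\fea_1\cdots\fea_m]$ ensures the substitution is capture-avoiding when the redex is plugged back into the context.

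The potentially tricky point, and where I expect to spend most of the care, is precisely this alignment between the two decomposition procedures and the well-typedness of $F\,\fea_1\cdots\fea_m$: one must verify by induction on $\NT(F)$ that applying $\decompA$ to a well-typed argument list against $\NT(F)$ yields tuples whose positions correspond one-to-one with those produced by $\decomparg$ on the formal parameters, so that the three substitutions $[\seq\fm/\seq y]$, $[\seq e/\seq z]$, $\esubst{\seq x}{\seq\fa}$ are jointly equivalent to the single simultaneous substitution of the denotations of $\fea_1,\dots,\fea_m$ for $w_1,\dots,w_m$. Once that correspondence is in place, the semantic equality reduces to a standard substitution-lemma calculation for $\muhflz{}$ semantics, and the lemma follows.
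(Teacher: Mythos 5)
Your proposal is correct, and there is nothing in the paper to diverge from: the authors state this lemma with no proof at all (it is introduced by ``Clearly we have''), treating it as an immediate consequence of the definitions. Your elaboration --- a congruence lemma over evaluation contexts, per-rule soundness checks, the defining equation \(\toform{(D,\ess{\seq{x}}{\seq{\fm}}\fa)} = \toform{(D,\oss{\seq{x}}{\seq{\fm}}\fa)}\) for the explicit-substitution rules, and, for fixpoint unfolding, the positional alignment of \(\decompA\) with \(\decomparg\) so that the three sequential substitutions coincide with the simultaneous substitution of actuals for formals --- is precisely the routine verification the authors leave implicit.
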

%\asdm{The following is not used.
%
%By the characterization of normal forms and the lemma above, we have the following:
%\begin{lemma}
%\label{lem:adequacyES}
%Let \((\NT,D, S\,\TRUE)\) be an equation system,
%and \(\NT,x \COL  \Prop \pST \fm : \Prop\) be a formula that
%satisfies the assumption~\eqref{eq:ESassumption}
%and has the (unique) normal form \(\nf\) with respect to \(\redd\).
%Then, \(\sem{(D, [\TRUE/x]\fm)}=\top\)
%iff
%\(\nf\) contains \(x\).
%\end{lemma}
%}

\subsubsection{Correctness in the Recursion-free Case}
\label{sec:CorrectnessRecursionFree}

It remains to show the correctness in the recursion-free case,
which follows from the subject reduction property below.
For a type \(\fty = \sty_1 \to \cdots \to \sty_n \to \fty'\),
we write \(\btype{\fty}{n}\) for \(\fty'\).

\begin{lemma}[subject reduction]
  \label{lem:subjectReduction}
%\asdm{Even if we treat \(\Te\) not as variable \(x_1\), 
%      we need ``\(x_1,\ldots,x_k  \COL  \Prop\)'' as an induction invariant
%      for \rnp{Tr-Sub}.}
Suppose that we have
\((D, S\,\lambda \seq{z}.\TRUE)\tr
 (D', \exists \seq{z}.S_1\,\seq{z})\).
If \(\fm \redd \fa\) and
\[
\seq{x}_{1,\ldots,k}
\pn \fm:\Prop\tr (\fm_\Mark,\fm_0,\ldots,\fm_k),
\]
then there exist \(\fa_0,\ldots,\fa_{k+1}\) such that
\[
\seq{x}_{1,\ldots,k}
\pn \fa:\Prop\tr (\fa_\Mark,\fa_0,\ldots,\fa_k)
\]
and 
\(\fm_i \eqdp \fa_i\)
%\(\sem{(D', \fm_i)} = \sem{(D', \fa_i)}\)
for each \(i\in\set{\Mark,0,\ldots,k}\).
\end{lemma}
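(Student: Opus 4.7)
The plan is to proceed by induction on the structure of the evaluation context $E$ in the reduction $\fm \redd \fa$, with the base case ($E = \hole$) handled by case analysis on the seven rules defining $\redd$. For the inductive cases, where $E$ is of the form $E' \lor \fm'$, $\fm' \lor E'$, or $\esubst{\seq{y}}{\seq{\fp}}E'$, one applies the induction hypothesis to the redex inside $E'$, then observes that the translation rules \rnp{Tr-Disj} and \rnp{Tr-ESub} build each output component by applying the \emph{same} context-level operation (disjunction, resp.\ the formula $\lambda\seq{w}.\fm_j\seq{w} \lor \bigvee_i \exists \seq{u}.(\fm_{k+i}\seq{u} \land \fp_{i,j}\seq{u}\seq{w})$) uniformly on the $j$-th component; hence $\eqdp$-equivalence is preserved component-wise, and the required matching translation of $\fa$ is obtained mechanically.

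For the base case, the two arithmetic rules are easy: by \rnp{Tr-Le}, each component of the translation of $e_1 \le e_2 \land \fm$ has the form $\lambda\seq{z}.(e_1 \le e_2 \land \fm_j\,\seq{z})$, and since the closed comparison $e_1 \le e_2$ evaluates to $\TRUE$ or $\FALSE$ in $\sem{\cdot}$, the required equality with the corresponding component of either $\FALSE \land \fm$ or $\fm$ follows directly from the semantics of $\land$ in \semes{D}{\cdot}. The explicit-substitution distribution rules (for $\lor$, $\FALSE \land$, and a free variable that is not in the substitution's domain) follow analogously: by inspection of \rnp{Tr-ESub}, the disjunction $\bigvee_i \exists \seq{u}.(\cdots)$ commutes with the outer $\lor$ / $\FALSE \land$ / free-variable application on the source side.

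The main obstacle is rule 3 (fixpoint unfolding), $F\,\fea_1\cdots\fea_m \redd \esubst{\seq{x}}{\seq{\fa}}[\seq{e}/\seq{z}][\seq{\fm}/\seq{y}]\form$, and rule 4 ($\esubst{\seq{x}}{\seq{\fm}}(x_i\,\seq{e}) \redd \fm_i\,\seq{e}$). For rule~3, the source-side translation of $F\,\fea_1\cdots\fea_m$ starts from \rnp{Tr-VarF}, which gives $(F_0,F_0,\ldots,F_0,F_1,\ldots,F_m)$, and is then assembled by repeated applications of \rnp{Tr-App}, \rnp{Tr-AppG}, and \rnp{Tr-AppI} according to how $\decompA$ classifies each $\fea_i$. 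The target-side translation of the substituted body is computed by combining \cref{lem:substitutionInt} (for $[\seq{e}/\seq{z}]$), \cref{lem:substitutionHigher} (for $[\seq{\fm}/\seq{y}]$), and \rnp{Tr-ESub} (for the residual explicit substitution $\esubst{\seq{x}}{\seq{\fa}}$). The two sides match because (i) the definitions for $F_0,\ldots,F_k$ produced by \rnp{Tr-Def} unfold precisely to the components of the translated body of $F$ — this is the rationale for storing the $\Mark$-component in $F_0$ — and (ii) the combinator produced by \rnp{Tr-AppG} for order-0 arguments is exactly the combinator produced by \rnp{Tr-ESub} for explicit substitutions, so moving a predicate argument from an application into an explicit substitution preserves each component up to $\eqdp$.

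For rule~4, the translation of $x_i\,\seq{e}$ via \rnp{Tr-VarG} and \rnp{Tr-AppI} yields components that are $\lambda\seq{w}.\bigwedge_p(e_p=w_p)$ in slot $i$ and $\FALSE$ in every other slot, and $\FALSE$ for the $\Mark$- and $0$-slots. Plugging this into the disjunction of \rnp{Tr-ESub} collapses all terms except $\exists\seq{u}.(\bigwedge_p e_p = u_p) \land \fp_{i,j}\,\seq{u}\,\seq{w}$, which simplifies to $\fp_{i,j}\,\seq{e}\,\seq{w}$; this matches the translation of $\fm_i\,\seq{e}$ component-wise. I expect the proof to require a moderate amount of routine but delicate symbol-pushing in these two cases; the substitution lemmas (especially \cref{lem:substitutionHigher}\cref{item:mainOfSubstLemma}) do the heavy semantic lifting, but the bookkeeping of which $\theta^{j}$ acts on which component is where errors are easiest to introduce.
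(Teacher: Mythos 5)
Your proposal is correct and follows essentially the same route as the paper's proof: induction over the reduction (the paper inducts on the formula where you induct on the evaluation context, which amounts to the same thing), case analysis on the \(\redd\) rules, the leftmost-path analysis through \rnp{Tr-VarF}, \rnp{Tr-App}, \rnp{Tr-AppG} and \rnp{Tr-AppI} for the fixpoint-unfolding case, and the two substitution lemmas combined with \rnp{Tr-ESub} to reassemble the translation of the contractum. The \(\theta^{j}\)-bookkeeping you flag as the delicate part is exactly what the paper's appendix carries out, using the first item of the higher-order substitution lemma and the free-occurrence claim of Lemma~\ref{lem:well-typedAndOccurrence} to reconcile the \(\Mark\)- and \(0\)-components.
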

As the proof of the above lemma is quite technical and long,
we defer it to Appendix~\ref{sec:proof-subj}.

The following lemma states the correctness in the recursion-free case,
from which Theorem~\ref{th:fromdisj-correctness} follows.
\begin{lemma}
\label{lem:fromdisj-correctRecFree}
Suppose that 
\((D, S\,\lambda \seq{z}_{1,\ldots,\maxar}.\TRUE)\)
is a recursion-free equation system.
If
 \( (D, S\,\lambda \seq{z}_{1,\ldots,\maxar}.\TRUE)\tr
             (D', \exists \seq{z}.\,S_1\,\seq{z})\), then 
 \(\sem{(D,S\,\lambda \seq{z}_{1,\ldots,\maxar}.\TRUE)}=\sem{(D',\exists \seq{z}.\,S_1\,\seq{z})}\).
\end{lemma}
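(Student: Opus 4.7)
The plan is to prove correctness by reducing to $\redd$-normal forms, where the effect of the translation becomes directly observable. First I would establish that $\redd$ is strongly normalizing on well-typed closed formulas over a recursion-free $D$, by a lexicographic argument: at each $F$-unfolding step the maximum length of a chain $F_1 \dep{} F_2 \dep{} \dots$ over the fixpoint variables occurring in the formula strictly decreases, while the remaining rules (integer-comparison elimination, explicit-substitution propagation, and disjunction push-through) strictly decrease a standard structural size measure on the residual formula. Consequently every such formula admits a $\redd$-normal form of the shape \eqref{eq:normalFormRedd}.

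To handle the main formula, I would first replace the argument $\lambda\seq{z}.\TRUE$ by a fresh free variable $x_1$ of type $\INT^{\maxar}\to\Prop$ interpreted as the constant-$\top$ predicate; under this interpretation $\sem{(D, S\,\lambda\seq{z}.\TRUE)}$ coincides with $\sem{(D, S\,x_1)}$, and $S\,x_1$ lies in the grammar \eqref{eq:ExtendWithExpSubst}. Letting $\nf$ be a $\redd$-normal form of $S\,x_1$, \cref{lem:sound_redd} gives $\semes{D}{S\,x_1} = \semes{D}{\nf}$, and iterating \cref{lem:subjectReduction} along $S\,x_1 \redds \nf$ yields that each component of the translation of $S\,x_1$ is $\eqdp$-equivalent to the corresponding component of $\nf$'s translation. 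In particular, $S_1\,\seq{z}$ (with $S_1$ unfolded via $D'$) is $\eqdp$-equivalent to the $\fm_1$-component of $\nf$'s translation applied to $\seq{z}$.

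What remains is to analyze $\nf$ directly. By \eqref{eq:normalFormRedd}, $\nf$ is a finite disjunction whose atoms are either $\FALSE \land \fm$ (semantically $\bot$) or $x\,\seq{e}$ with $x \notin \dom(\NT)$; since $x_1$ is the only free order-$0$ variable available from a closed starting formula, each surviving atom has the form $x_1\,\seq{e}$ (which is $\top$ under our interpretation). Hence $\semes{D}{\nf} = \top$ iff some disjunct of $\nf$ has the form $x_1\,\seq{e}$. A structural induction on $\nf$, using \rnp{Tr-VarG}, \rnp{Tr-Le}, \rnp{Tr-Disj}, and \rnp{Tr-AppI}, then shows that the $\fm_1$-component of $\nf$'s translation is, up to $\eqdp$, the characteristic predicate of $\set{\seq{e}\mid x_1\,\seq{e} \text{ is a disjunct of } \nf}$. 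Combining these facts gives $\sem{(D',\exists\seq{z}.\,S_1\,\seq{z})} = \sem{(D,S\,\lambda\seq{z}.\TRUE)}$.

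The main obstacle will be giving a clean strong-normalization proof for $\redd$ in the presence of explicit substitutions: even with $D$ recursion-free, explicit substitutions can nest through repeated \rnp{Tr-AppG}-style steps, and a well-founded measure must account for their propagation without blowing up. A subsidiary delicacy is carefully justifying the passage from $S\,\lambda\seq{z}.\TRUE$ to $S\,x_1$ with $x_1$ a free variable, so that the ``reachability of $x_1$'' viewpoint baked into the translation genuinely captures the semantics of the original main formula.
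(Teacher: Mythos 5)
Your overall route coincides with the paper's: obtain a \(\redd\)-normal form using recursion-freeness, transfer the semantics along the reduction by \cref{lem:sound_redd}, transfer the translation components up to \(\eqdp\) by iterating the subject reduction lemma (\cref{lem:subjectReduction}), and conclude by a structural induction on normal forms \eqref{eq:normalFormRedd}. However, one concrete step fails as stated. You translate the applied formula \(S\,x_1\) and claim its \(x_1\)-component is \(\eqdp\)-equivalent to \(S_1\) (unfolded via \(D'\)). It is not: by \rnp{Tr-VarF} the tuple for \(S\) is \((S_0,S_0,S_0,S_1)\) (with \(k=1\)), and \rnp{Tr-AppG} applied to the argument \(x_1\) (translated by \rnp{Tr-VarG}) yields as the \(x_1\)-component \(\lambda \seq{w}.\,S_0\,\seq{w}\lor\exists\seq{u}.(S_1\,\seq{u}\land\bigwedge_{p}u_p=w_p)\), i.e.\ \(S_0\lor S_1\) pointwise, not \(S_1\). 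The paper sidesteps this by never translating \(S\,x_1\): it takes the defining equation \(S\,x=_\mu\fm\), for which \rnp{Tr-Def} gives \emph{literally} \(S_1=\fm_1\) (and \(S_0=\fm_\Mark\)), and normalizes the body \(\fm\) with \(x\) free. Your version is repairable, but only if your final induction on \(\nf\) additionally establishes that the \(\Mark\)- and \(0\)-components of a normal form's translation are semantically \(\FALSE\) (true, since \rnp{Tr-VarG} returns the constant-\(\FALSE\) predicate at indices \(\Mark\) and \(0\), and \rnp{Tr-Le}, \rnp{Tr-Disj} preserve this), so that the spurious \(S_0\) disjunct is harmless; as written you prove nothing about \(S_0\), and without that the final ``combining'' step does not go through.

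A second, more minor defect: your termination measure is broken, though the needed fact survives. The propagation rule \(\esubst{\seq{x}}{\seq{\fm}}(\fa_1\lor\fa_2)\redd(\esubst{\seq{x}}{\seq{\fm}}\fa_1)\lor(\esubst{\seq{x}}{\seq{\fm}}\fa_2)\) \emph{duplicates} the formulas \(\seq{\fm}\), so a standard structural size measure strictly increases on it; and at an unfolding step the maximum \(\dep{}\)-chain length over the fixpoint variables occurring in the formula need not strictly decrease (unfolding one occurrence in \(F\lor F\) leaves the maximum unchanged while the size grows), so both components of your lexicographic pair can fail simultaneously. Note that the paper only needs \emph{weak} normalization and simply asserts existence of a normal form; if you want a proof, either fully unfold definitions by induction on the (well-founded, by recursion-freeness) \(\dep{}\)-order before propagating substitutions, or use a measure that is multiplicative in the scope of each explicit substitution so that duplication is paid for by removing the enclosing substitution from over the disjunction.
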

\begin{proof}
Let the rule of \(S\) be \(S\,x = \fm\);
then \(D'\) has the rule \(S_1 = \fm_1\) where \(S_1\) and \(\fm_1\) has type \(\INT^{\maxar}\to\Prop\).
Since 
\(\semes{D}{ S\,\lambda \seq{z}_{1,\ldots,\maxar}.\TRUE} = 
\semes{D}{\subst{x}{\lambda \seq{z}_{1,\ldots,\maxar}.\TRUE}\fm}\),
it suffices to show
\[
\semes{D}{\subst{x}{\lambda \seq{z}_{1,\ldots,\maxar}.\TRUE}\fm} = \semes{D'}{\exists \seq{z}.\,\fm_1\,\seq{z}}.
\]

Since \((D, S\,\lambda \seq{z}_{1,\ldots,\maxar}.\TRUE)\) is recursion-free,
\(\fm\) has a normal form \(\nf\) with respect to \(\redd\).
We have \(\fm \redds \nf\)
and let 
\(x \pn \fm:\Prop\tr (\fm_\Mark,\fm_0,\fm_1)\);
then by subject reduction (Lemma~\ref{lem:subjectReduction}),
we have 
\(x \pn \nf:\Prop
\allowbreak\tr (\nf_\Mark,\nf_0,\nf_1)\)
and 
\(\semes{D'}{\fm_i} = \semes{D'}{\nf_i}\) \((i=\Mark,0,1)\).
Also, we have \(\semes{D}{\fm} = \semes{D}{\nf}\)
by Lemma~\ref{lem:sound_redd},
and hence \(\semes{D}{\subst{x}{\lambda \seq{z}_{1,\ldots,\maxar}.\TRUE}\fm} = \semes{D}{\subst{x}{\lambda \seq{z}_{1,\ldots,\maxar}.\TRUE}\nf}\).
Therefore we can assume that \(\fm\) is a normal form
without loss of generality.

Then we can directly check
\(\semes{D}{\subst{x}{\lambda \seq{z}_{1,\ldots,\maxar}.\TRUE}\nf} = 
\semes{D'}{\exists \seq{z}.\,\nf_1\,\seq{z}}\)
by induction on
the structure of normal forms \eqref{eq:normalFormRedd}
in \cref{sec:redd}.
%
%If 
%\(\semes{D}{\subst{x}{\TRUE}\fm}= \top\),
%by \cref{lem:adequacyES},
%\(\nf\) contains \(x\).
%Then we can check that
%\(\semes{D'}{\nf_1} = \top\),
%and hence \(\semes{D'}{\fm_1} = \semes{D'}{\nf_1} = \top\).
%If \(\semes{D}{\subst{x}{\TRUE}\fm}= \bot\),
%again by \cref{lem:adequacyES},
%every leaf (with respect to \(\lor\)-branching) of \(\nf\) is of the form \(\FALSE \land \fa\).
%Then we can check that
%\(\semes{D'}{\nf_1} = \bot\),
%and hence \(\semes{D'}{\fm_1} = \semes{D'}{\nf_1} = \bot\).
%Thus we have \(\semes{D}{\subst{x}{\TRUE}\fm} = \semes{D'}{ \fm_1}\).
\end{proof}

\section{Applications}
\label{sec:app}

As mentioned already, the translation from order-\(n\) reachability games to
order-(\(n+1\)) may-reachability enables us to use automated (un)reachability checkers
for solving the reachability game problem, and
the translation in the other direction enables us to use, for example, reachability game solvers for
non-higher-order programs as a may-reachability checker for order-1 programs.

As a direct application of the former translation, we have applied it to
the \nuhflz{} solver \rethfl{}~\cite{DBLP:conf/aplas/KatsuraIKT20},
which is a refinement-type-based validity checker for formulas of
\nuhflz{}, the fragment of \hflz{} without least fixpoint operators (but with greatest fixpoint operators).
The fragment \nuhflz{} is dual to \muhflz{}, in the sense that,
for every closed formula \(\form\) of type \(\Prop\) of \muhflz{}, there exists a \nuhflz{} formula
\(\dual{\form}\) such that \(\form\) is valid if and only if \(\dual{\form}\) is invalid, and vice versa;
\(\dual{\form}\) is obtained from \(\form\) by just replacing each logical operator (including
fixpoint operators) with its de Morgan
dual, and \(e_1\le e_2\) with \(e_1>e_2\).
Using a refinement type system, \rethfl{} reduces the validity of a given \nuhflz{} formula
%%(thus, disprove the validity of the corresponding \muhflz{} formula)
in a sound (but incomplete) manner
to an extended CHC (constraint Horn clauses) problem, where disjunction is allowed in
the head of each clause, and passes the problem to an extended CHC solver called \pcsat{}~\cite{DBLP:conf/aaai/Satake0Y20}. For a fragment of \nuhflz{} corresponding to \emph{disjunctive} \muhflz{},
however, the reduced problem is actually an ordinary CHC problem, for which
more efficient tools~\cite{DBLP:journals/fmsd/KomuravelliGC16,Eldarica,DBLP:journals/jar/ChampionCKS20}
can be invoked. Thus, we can use the translation in Section~\ref{sec:todisj} to improve the
efficiency of \rethfl{}.

From the benchmark suite of \rethfl{}~\cite{DBLP:conf/aplas/KatsuraIKT20}
\ifshort
(which originates from 
\cite{DBLP:conf/sas/IwayamaKST20}),
\else
(which originates from 
\cite{DBLP:conf/sas/IwayamaKST20}, \url{https://github.com/Hogeyama/hfl-benchmark/tree/master/inputs/hfl/HO-nontermination}),
\fi
we picked the ``non-termination'' benchmark set, which consists of formulas
obtained from non-termination verification of higher-order programs. All the formulas
in that benchmark set do not belong to (the dual of) disjunctive \muhflz{}
(in contrast, the problems in the other benchmark sets belong to disjunctive \muhflz{},
hence our translation is not required).
We have implemented the translation in Section~\ref{sec:todisj}, applied it
to the problems in the ``non-termination'' benchmark set, and then ran
\rethfl{} with a CHC solver \hoice{}~\cite{DBLP:journals/jar/ChampionCKS20,DBLP:conf/aplas/Champion0S18} as the back-end solver. We have compared the result
with plain \rethfl{} (without the transformation), which uses the extended CHC solver \pcsat{}.
%%and \horus{}~\cite{DBLP:journals/pacmpl/BurnOR18}, a HoCHC solver which may also be used as a \nuhflz{} solver.

\newcommand\TO{\textrm{timeout}}
\newcommand\UN{\textrm{unknown}}
\begin{table}
  \caption{Experimental results. Times are %measured
    in seconds, with the timeout of 180 seconds.}
\label{tab:rethfl}
\begin{center}
\begin{tabular}{|l|r|r|r|}
  \hline
  input & \rethfl{} & {\small \rethfl{}+i.s.} & {\small \rethfl{}+ tr.} \\
  \hline
  \hline
fixpoint\_nonterm & 11.579 & 0.054 &  0.102 \\\hline
unfoldr\_nonterm & \TO{} & \UN{} & 4.22  \\\hline
indirect\_e & 16.832 & 0.035 & 0.066  \\\hline
alternate & \UN{}& \UN{}& \UN{} \\\hline
fib\_CPS\_nonterm & \TO{} & 0.047 & 0.075 \\\hline
foldr\_nonterm & 8.447 & \UN{} & 0.122\\\hline
passing\_cond & 116.423 & \UN{} & 0.444\\\hline
indirectHO\_e & 11.582 & 0.044 & 0.073 \\\hline
inf\_closure & \TO{} & 20.171 & 9.080 \\\hline
loopHO & \TO{} & 0.026 & 0.121\\
 \hline
  \end{tabular}
\end{center}
\end{table}

The results are summarized in Table~\ref{tab:rethfl}.
The column '\rethfl{}' shows the result of plain \rethfl{} with
\pcsat{} as the back-end extended CHC solver (since ordinary CHC solvers
are inapplicable to this benchmark set, as explained above).
The column '\rethfl{}+i.s.'
show the result of \rethfl{} where the subtyping relation
has been replaced by the imprecise one (equivalent to that of \horus{}~\cite{DBLP:journals/pacmpl/BurnOR18},
a HoCHC solver
that can also be viewed as a \nuhflz{} solver)
so that the type checking problem is reduced to ordinary CHC solving.
The column '\rethfl{}+tr.' shows the result of
\rethfl{} with our translation. In both
'\rethfl{}+i.s.' and '\rethfl{}+tr.', \hoice{} was used as the back-end CHC solver.
The entry ``\UN{}'' indicates that the solver terminated with
the answer ``ill-typed'', in which case, we do not know whether
the formula is valid or invalid, due to the incompleteness of
the underlying refinement type system.\footnote{Although the understanding of the refinement type systems
\rethfl{}
is not required below, interested readers may wish to consult 
\cite{DBLP:conf/aplas/KatsuraIKT20}.}
The refinement type system used in
'\rethfl{}+i.s.' is less precise than the one used in \rethfl{}; hence,
it returns more \UN{}s.
As clear from the table, our translation significantly improved the efficiency
of \rethfl{}.

The translation in the other direction given in Section~\ref{sec:fromdisj}
also %sometimes
helps \rethfl{}, especially for relaxing the limitation
caused by the incompleteness of the underlying refinement type system.
For example, consider the formula \(S\,\TRUE\), where:
\newcommand\App{\mathit{App}}
\begin{align*}
  S\,t &=_\mu \App\,(\lambda x.x\ne 0\land t)\,0\qquad
  \App\,p\,y =_\mu p\,y \lor \App\,(\lambda z.p(z-1))\,(y+1).
\end{align*}
The formula is invalid, but \rethfl{} (nor \horus{}~\cite{DBLP:journals/pacmpl/BurnOR18},
a higher-order CHC solver based on a refinement type system)
  cannot prove the validity
  of the dual formula, due to the incompleteness of the refinement type
\ifshort
system.
\else
system
(which is related to
the incompleteness of a refinement type system addressed by \cite{Unno18POPL}
by inserting extra arguments). \fi
The translation in Section~\ref{sec:fromdisj}
  yields 
the following order-0 formula:\footnote{We have implemented a
  prototype translator, but have not yet integrated it into \rethfl{}.
  For readability,
  here we show the formula obtained by
  some manual simplification of the automatically generated formula.}
\begin{align*}
&  S_1 =_\mu \exists x.\App_1\,0\,x \land x\ne 0\\
& \App_1\,y\,z =_\mu y=z \lor \exists w.\App_1\,(y+1)\,w\land w-1=z.
\end{align*}
Here, \(\App_1\,y\,z\) intuitively means that \(\App\,p\,y\) can be reduced to
\(p\,z\).
The underlying type system of \rethfl{} is complete for order-0 formulas,
and indeed, the order-0 formula above
can automatically be proved invalid by \rethfl{}.

\section{Related Work}
\label{sec:related}

The relationship between order-\(n\) reachability games and order-(\(n+1\)) may-reachability
has some deep connection to the relationship between order-\(n\) tree languages and
order-(\(n+1)\) word languages~\cite{DammTCS,Asada16,DBLP:conf/fscd/Asada020},
intuitively because the may-reachability problem is concerned about the set of ``paths'' of the execution tree of
a given program, whereas the reachability game problem
is also concerned about the branching structures of the execution tree.
Indeed, our translations (especially, the use of \(\form_\Mark\) and \(\form_0\) components
in the translation in Section~\ref{sec:fromdisj}) have been inspired by Asada and Kobayashi's
translations between tree and word languages~\cite{DBLP:conf/fscd/Asada020}. Kobayashi et al.~\cite{KDG19LICS}
have also used a similar idea for a
characterization of termination probabilities of higher-order probabilistic programs.

For finite-data programs (programs in Section~\ref{sec:reachability-as-hfl} without integers),
according to the complexity results on HORS model checking~\cite{Ong06LICS,KO11LMCS},
both the order-\(n\) reachability game problem and the order-(\(n+1\)) may-reachability game problem
are \(n\)-EXPTIME complete, which imply that there are mutual translations between them. Concrete
translations have, however, not been given (except unnatural translations through Turing machines).
Also, the complexity-theoretic argument for the existence of translations does not apply in the
presence of integers.

For HORS model checking, Parys~\cite{DBLP:conf/icalp/Parys21} developed an
order-decreasing transformation for higher-order grammars,
which shares some ideas with our translation in Section~\ref{sec:fromdisj}.
The details of the translations are however quite different. His translation makes use of
finiteness in a crucial manner, and is not applicable in the presence of integers. Also,
his translation is not size-preserving.

For order-1 programs, Kobayashi et al.~\cite{DBLP:conf/sas/0001NIU19} have shown that
linear-time omega regular properties can be translated to order-0 \hflz{} formulas.
Our translation in Section~\ref{sec:fromdisj} may be viewed as a higher-order extension
of their translation, while the properties are restricted to may-reachability.

The fragment \muhflz{} (or its dual fragment \nuhflz{}) is essentially (modulo the restriction
of data domains to integers) equivalent to HoCHC~\cite{DBLP:journals/pacmpl/BurnOR18},
a higher-order extension of CHC. Therefore, the result of this paper should be useful also
for improving HoCHC solvers.
%%%\begin{itemize}
%%%\item word vs tree languages
%%%
%%%\item PHORS
%%%
%%%\item SAS 19
%%%\item reachability of pushdown systems
%%%  
%%%\item Parys' order-decreasing transformation
%%%  
%%%\item complexity of higher-order model checking
%%%\end{itemize}

\section{Conclusion}
\label{sec:conc}

We have shown translations between order-\(n\) reachability games and
order-(\(n+1\)) may-reachability, and proved their correctness. We have applied
the translations to higher-order program verification, and obtained promising
results in preliminary experiments.
As mentioned in Section~\ref{sec:related}, our results are closely related
to the correspondence between higher-order word and tree languages~\cite{DBLP:conf/fscd/Asada020}.
A deeper investigation of the relationship and generalization of the translations
that subsume the related translations~\cite{DBLP:conf/fscd/Asada020,KDG19LICS} are left for future work.

\subsection*{Acknowledgments}
  We would like to thank anonymous referees for useful comments.
This work was supported by
JSPS KAKENHI Grant Number JP20H05703, JP24K14814 and
JST SPRING Grant Number JPMJSP2108.
%\bibliographystyle{fundam}
%\bibliography{abbrv,koba}

\ifshort\else
\newpage
\appendix
\section*{Appendix}
\section{Proof of Lemma~\ref{lem:subjectReduction}}
\label{sec:proof-subj}
We first prepare two substitution lemmas
that correspond to
the application rules \rnp{Tr-App} and \rnp{Tr-AppI}.

The following is the substitution lemma corresponding to \rnp{Tr-AppI}:
\begin{lemma}[Substitution lemma (integer)]
\label{lem:substitutionInt}
Let \(\fm\) be a formula that does not contain an explicit substitution,
\(e\) be a closed integer expression, and
\[
 \stenv,\,
 z\COL\INT;\,
 \penv
 \pn \fm:\fty \tr (\fm_\Mark,\fm_0,\ldots,\fm_{k+m}).
\]
where %\(k=|\Delta|\) and 
\(m=\gar(\fty)\).
Then we have
\[
 \stenv;
 \penv
 \pn [e/z]\fm:\fty \tr 
([e/z]\fm_\Mark,[e/z]\fm_0,\ldots,[e/z]\fm_{k+m}).
\]
\end{lemma}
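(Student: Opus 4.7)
The proof is by induction on the derivation of
\(\stenv, z\COL\INT;\penv \pn \fm:\fty \tr (\fm_\Mark,\fm_0,\ldots,\fm_{k+m})\).
The guiding observation is that since \(z\) has type \(\INT\), the well-typedness of \(\fm\) forces \(z\) to occur only inside integer subexpressions—namely the operands \(e_1, e_2\) of \rnp{Tr-Le} and the argument \(e\) of \rnp{Tr-AppI}—and never as a standalone (predicate-typed) formula position. Consequently the three variable rules \rnp{Tr-Var}, \rnp{Tr-VarF}, \rnp{Tr-VarG} cannot see \(z\) at all, so their outputs contain no occurrence of \(z\) and trivially commute with \([e/z]\).

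For the remaining, syntax-directed cases I apply the IH to each premise and then inspect the output tuple. The case \rnp{Tr-AppI} is the most informative: each component of the output has the form \(\fm_j\, e'\), and \([e/z](\fm_j\, e') = ([e/z]\fm_j)\,([e/z]e')\) matches exactly the translation of \(([e/z]\fm)\,([e/z]e')\) by the IH. \rnp{Tr-Le} is analogous, with \(e_1\le e_2\) substituted pointwise inside the guard. The rules \rnp{Tr-Disj}, \rnp{Tr-App}, and \rnp{Tr-AppG} simply build the new components by combining components of the translated subformulas, so commutation with \([e/z]\) drops out of repeated applications of the IH.

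The only subtlety, though purely notational, is capture avoidance: \rnp{Tr-Le}, \rnp{Tr-Disj}, \rnp{Tr-AppG}, and \rnp{Tr-VarG} each introduce fresh bound variables denoted \(\seq{z}_{1,\ldots,\maxar}\), \(\seq{w}_{1,\ldots,\maxar}\), \(\seq{u}_{1,\ldots,\maxar}\), whose chosen letters collide with our substitution target \(z\). Adopting the Barendregt convention, I would assume all such binders are renamed to avoid \(\set{z}\cup \FV(e)\); since \(e\) is assumed closed, this is automatic. With that renaming in place, \([e/z]\) passes transparently through every \(\lambda\)-abstraction and existential quantifier introduced by the rules, and each inductive step reduces to a direct syntactic identity. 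I do not expect any case to present a genuine mathematical obstacle; the lemma is essentially a compatibility check asserting that the compositional translation \(\tr\) commutes with integer substitution.
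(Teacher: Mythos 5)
Your proposal is correct and takes essentially the same route as the paper, which disposes of this lemma with the one-line ``straightforward induction on \(\fm\)''; your derivation-directed case analysis (the translation is syntax-directed, so induction on the derivation and on \(\fm\) coincide) simply spells out what ``straightforward'' elides. Your two supporting observations---that \(z\COL\INT\) can occur only inside the integer expressions copied verbatim by \rnp{Tr-Le} and \rnp{Tr-AppI}, and that closedness of \(e\) makes capture under the binders \(\seq{z}\), \(\seq{w}\), \(\seq{u}\) a non-issue modulo \(\alpha\)-renaming---are exactly the points one must check, and both are handled correctly.
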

\begin{proof}
By straightforward induction on \(\fm\).
\end{proof}

\ifdraft\clearpage\fi

Next we show the substitution lemma corresponding to \rnp{Tr-App}.
First we prepare some definitions and a lemma.

For a formula \(\fm\), we write
\(\caa{\fm}{k}{m}\) for \((\fm_0,\fm_{k+1},\dots,\fm_{k+m})\).
Note that the translation result
of \(\fm\,\fa\) in \rnp{Tr-App} in \cref{fig:tr-n-1}
can be written as the following:
\begin{alignat*}{10}
\big(\fm_\Mark(\fa_\Mark,\caa{\fa}{k}{m'}),\,
\fm_0(\fa_0,\caa{\fa}{k}{m'}),\,
&
\fm_1(\fa_1,\caa{\fa}{k}{m'}),
&&\ldots,\,&&
\fm_k(\fa_k,\caa{\fa}{k}{m'})&&,
\\&
\fm_{k+1}(\caa{\fa}{k}{m'}),
&&\ldots,\,&&
\fm_{k+m}(\caa{\fa}{k}{m'})&&\big)
\end{alignat*}
%\begin{alignat*}{10}
%\big(\fm_\Mark(\fa_\Mark,\caa{\fa}{k}{m'}),\,
%\fm_0(\fa_0,\caa{\fa}{k}{m'}),\,
%&\fm_1(\fp_1,\caa{\fa}{k}{m'}),
%&&\ldots,\,&&
%\fm_k(\fp_k,\caa{\fa}{k}{m'})&&,
%\\&
%\fm_{k+1}(\caa{\fa}{k}{m'}),
%&&\ldots,\,&&
%\fm_{k+m}(\caa{\fa}{k}{m'})&&\big)
%\end{alignat*}

The following can be shown easily by induction on \(\fm\).
\begin{lemma}[weakening]
\label{lem:weakening}
If 
\begin{align*} &
\stenv;
\penv
\pn 
\fm: \fty \tr 
%(\form_\Mark,\form_0,\ldots,\form_{k+m})
(\form_\Mark,\form_0,
\form_{1},\ldots,\form_{k},
\form_{k+1},\ldots,\form_{k+m}
)
\end{align*}
then
\begin{align*} &
\stenv;\penv,x
\pn 
\fm: \fty \tr 
(\form_\Mark,\form_0,
\form_{1},\ldots,\form_{k},
\form_0,
\form_{k+1},\ldots,\form_{k+m}
).
\end{align*}
\end{lemma}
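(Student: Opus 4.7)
The plan is to prove both clauses by induction on the derivation of $\stenv;\penv\pn \fm:\fty\tr (\form_\Mark,\form_0,\ldots,\form_{k+m})$, checking each translation rule. The key invariant to maintain, for the first clause, is that adding a fresh order-0 variable $x$ to $\penv$ shifts the higher-order-argument slots $\form_{k+1},\ldots,\form_{k+m}$ one position to the right and inserts a copy of $\form_0$ into the newly vacated slot for $x$. Intuitively this reflects the fact that, since $x$ does not occur free in $\fm$, the only way a reduction of $\fm$ applied to arguments can ever reach a call of the form $x\,\seq{w}$ is by routing $x$ through higher-order machinery of the environment; and that is exactly what $\form_0$ records.

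For the base cases \rnp{Tr-VarG}, \rnp{Tr-Var}, \rnp{Tr-VarF}, the claim is immediate by inspection of the rules: in \rnp{Tr-VarG} the new slot becomes $\lambda\seq{z}.\lambda\seq{w}.\FALSE$, which is literally the $\form_0$ of that rule; in \rnp{Tr-Var} and \rnp{Tr-VarF} the $k+1$ middle entries are already all copies of $y_0$ (respectively $F_0$), and adding one more copy matches the required extra $\form_0$. For \rnp{Tr-Le}, \rnp{Tr-Disj}, \rnp{Tr-AppI}, the rules apply componentwise, so the inductive hypothesis transfers directly and the new slot inherits the correct $\form_0$ form automatically.

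The main work is in the two application rules. For \rnp{Tr-App} applied to $\fm'\,\fa'$, I invoke the IH on both $\fm'$ and $\fa'$; after weakening, the tuple $\caa{\fa'}{k}{m'} = (\fa'_0,\fa'_{k+1},\ldots,\fa'_{k+m'})$ becomes $\caa{\fa'}{k+1}{m'} = (\fa'_0,\fa'_{k+2},\ldots,\fa'_{k+m'+1})$ in the weakened numbering, but by the shift-by-one structure given by the IH these are syntactically identical tuples of formulas. Hence the new $(k+1)$-st component of the result is $\form'_0(\fa'_0,\caa{\fa'}{k}{m'})$, which is exactly $\form_0$ of the original $\fm'\,\fa'$; the remaining entries likewise match. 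The case \rnp{Tr-AppG} is analogous: the component constructed at index $k+1$ in the weakened derivation uses $\fm'_0$ (via the IH-induced copy) in place of $\fm'_i$ together with the same existential pattern in $\formaltp_j$, and a direct computation shows it coincides with $\formaltp_0$ of the original output.

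The second clause---that $y_\Mark$ for $(y\COL\fty)\in\stenv$ does not occur free in $\form_0,\dots,\form_{k+\gar(\fty)}$---is proved in parallel by the same induction. The only rule that introduces $y_\Mark$ at all is \rnp{Tr-Var} for the variable $y$ itself, and there $y_\Mark$ appears solely in the $\Mark$-slot, never in the other slots; every other rule only reuses components from its premises, so by IH $y_\Mark$ can never migrate into the non-$\Mark$ slots. The one subtlety is \rnp{Tr-App} and \rnp{Tr-AppG}, where one must verify that the non-$\Mark$ slots of the conclusion are built only from non-$\Mark$ slots of the premises; inspection of the rules confirms this (the $\Mark$-slot of $\form'$ or $\formalt$ is used only for constructing the $\Mark$-slot of the result). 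I expect no substantive obstacle: the bookkeeping for \rnp{Tr-App} with the index shift is the most error-prone step, but it is purely a matter of careful indexing.
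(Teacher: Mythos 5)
Your proof is correct and takes essentially the same route as the paper, which disposes of this lemma with a one-line ``straightforward induction on \(\fm\)''; your rule-by-rule case analysis, including the index-shift bookkeeping in \rnp{Tr-App} and \rnp{Tr-AppG} showing that the tuple \(\caa{\fa'}{k+1}{m'}\) coincides syntactically with \(\caa{\fa'}{k}{m'}\) and that the new slot \(k+1\) reproduces the original \(\form_0\)-component, is exactly that induction spelled out. One small remark: your ``second clause'' about \(y_\Mark\) not occurring free in the non-\(\Mark\) slots is not part of this lemma at all---it is the second half of \cref{lem:well-typedAndOccurrence}---so it is harmless but extraneous here.
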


Now we show the substitution lemma.
Here we consider simultaneous substitution,
because we cannot apply this lemma repeatedly since
\([\seq{\fa} / \seq{y}] \fm\) below may contain an explicit substitution.
\begin{lemma}[Substitution lemma (higher-order)]
\label{lem:substitutionHigher}
Let \(\fm\) be a formula that does not contain an explicit substitution,
and
\begin{align*}
&
 \seq{y}_{1,\dots,q}\COL\seq{\sty};\
\seq{x'}_{1,\dots,m}
 \pn 
\fm:\fty\tr (\fm_\Mark,\fm_0,\ldots,\fm_{m+\gar(\fty)})
\\&
m'_i=\gar(\sty_i)
\quad
\decomp(\sty_i) = (\seq{\sty_i},m'_i,p_i)
\quad(i=1,\dots,q)
\\&
\penv \pn \fa_i : \sty_i \tr
(\fa_{i,\Mark},\fa_{i,0},\dots,\fa_{i, k + m'_i})
\quad(i=1,\dots,q)
\\&
\seq{y}_i = (y_{i,\Mark},y_{i,0},\dots,y_{i,\gar(\sty_i)})
\qquad
\seq{y}^\circ_i = (y_{i,0},\dots,y_{i,\gar(\sty_i)})
\\&
%\fa_{i,j} =
%\lambda\seq{u}\COL\fromdisjT{\seq{\sty_i}}{(x_\Mark\COL\INT^{\maxar}\to\Prop,
%    x_{0}\COL \INT^{\maxar}\to\Prop)}.\lambda \seq{z}_{1,\ldots,p_i + \maxar}.
%    \fa_{i,j}\,\seq{u}\,\seq{z}_{1,\ldots,p_i+\ell_j}
%\\&
%\mspace{240mu}(i=1,\dots,q,\ j=1,\dots,k)
%\\&
%%
\begin{alignedat}{3}
\theta^{j}
&=
\big[& (\fa_{i,j},\caa{\fa_{i}}{k}{m'_{i}}) / \seq{y}_{i} &\big]_{i=1}^{q}
\qquad
&&(j=\Mark,0,\dots,k)
\\
\theta^\circ
&=
\big[& \caa{\fa_{i}}{k}{m'_{i}} / \seq{y}^\circ_{i} &\big]_{i=1}^{q}
\end{alignedat}
%%%
%\\&
%\begin{alignedat}{3}
%\fa''_{\Mark}
%&=
%\theta^{\Mark} \fm_\Mark
%\\
%\fa''_0
%&=
%\theta^{\Mark} \fm_0
%\\
%\fa''_j
%&=
%\theta^{j} \fm_\Mark
%\qquad&&(j=1,\dots,k)
%\\
%\fa''_{k+j}
%&=
%\theta^\circ \fm_j
%&&(j=1,\dots,m+\gar(\fty)).
%\end{alignedat}
\\&
\penv, \seq{x'}_{1,\dots,m} \pn 
[\seq{\fa} / \seq{y}] \fm
:\fty\tr
(\fm^{\circ}_{\Mark},\fm^{\circ}_{0},\dots,\fm^{\circ}_{k+m+\gar(\fty)}).
\end{align*}
Then we have:
\begin{enumerate}
\item
\label{item:subOfSubstLemma}
\(
\theta^0 \fm_\Mark
=
\theta^0 \fm_0
\).
\asdm{This equation is the reason why we treat nonterminals separately in the definition of the
     transformations rules.
Without the separate treatment, we need to replace this equality with
logical relation of \(\approx\), which seems to work(?).}\asdm{If \(\subst{\seq{y}}{\seq{\fm}}\fp\) contains a non-ground free variable \(y\) other than
     nonterminals, this equation does not hold (even with the separate treatment for
     nonterminals and even with relaxing \(=\) to \(\approx\)).}
\item
\label{item:mainOfSubstLemma}
\(
%\negphantom{\beeq}
\mspace{-20mu}
\begin{alignedat}[t]{20}
(&\fm^{\circ}_{\Mark}&&,
\fm^{\circ}_{0}&&,
\fm^{\circ}_{1}&&,\dots&&,\fm^{\circ}_{k}&&,
%\fm^{\circ}_{1}&&,
%\dots&&,
%\fm^{\circ}_{k}&&,
\fm^{\circ}_{k+1}&&,
\dots&&,
\fm^{\circ}_{k+m+\gar(\fty)}
&&)
\\
\beeq
(&\theta^{\Mark}\fm_{\Mark}&&,
\theta^{0}\fm_{\Mark}&&,
\theta^{1}\fm_{\Mark}&&,
\ldots&&,
\theta^{k}\fm_{\Mark}&&,
%\theta^{1}\fm_{\Mark}&&,
%\ldots&&,
%\theta^{k}\fm_{\Mark}&&,
\theta^{\circ}\fm_{1}&&,
\ldots&&,
\theta^{\circ}\fm_{m+\gar(\fty)}
&&).
\end{alignedat}
\)
\asdm{Here \(\beeq\) is used just for \(\beta\eta\)
of auxiliary \(\lambda\)-abstraction,
and in fact tex-macro is \(\backslash{}\)beeq rather than \(\backslash{}\)eqdp,
while in the proof of subject reduction we use \(\beeq\)
for reduction of nonterminal etc.}%
\asdm{In this proof we use the other item
in the case of \rnp{Tr-App} (not \rnp{Tr-AppG}).}
\end{enumerate}
\end{lemma}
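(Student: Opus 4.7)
The plan is to prove parts (1) and (2) together by structural induction on the derivation of $\seq{y}\COL\seq{\sty};\seq{x'}\pn \fm:\fty \tr (\fm_\Mark,\fm_0,\ldots,\fm_{m+\gar(\fty)})$. The key auxiliary fact, used repeatedly, is the second clause of \cref{lem:well-typedAndOccurrence}: each marked variable $y_{i,\Mark}$ occurs only in the $\Mark$ component of a translation. This implies that $\theta^\Mark$ and $\theta^j$ (for $j \in \set{0,\ldots,k}$) act identically on any non-$\Mark$ component, since the two substitutions agree away from $y_{i,\Mark}$.

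Part (1) amounts to showing that the syntactic difference between $\fm_\Mark$ and $\fm_0$ is annihilated by $\theta^0$. In the variable cases, \rn{Tr-Var} has $\fm_\Mark = y_{i,\Mark}$ and $\fm_0 = y_{i,0}$, both sent to $\fa_{i,0}$ by $\theta^0$; \rn{Tr-VarF} and \rn{Tr-VarG} give $\fm_\Mark = \fm_0$ syntactically. The remaining cases follow by applying the IH and noting that $\theta^0$ distributes over the shared outer structure of $\fm_\Mark$ and $\fm_0$.

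Part (2) is the core induction. The variable cases \rn{Tr-Var}, \rn{Tr-VarF}, \rn{Tr-VarG} reduce to bookkeeping: for \rn{Tr-Var} I apply \cref{lem:weakening} iteratively to adjust $\fa_i$'s translation from environment $\penv$ to $\penv,\seq{x'}$, and then verify component-by-component that the weakened tuple agrees with the stated RHS using the definitions of $\theta^j$ and $\theta^\circ$. The logical cases \rn{Tr-Le} and \rn{Tr-Disj} follow by distributing $\theta^j$ and $\theta^\circ$ over $\lor$ and $\land$; \rn{Tr-AppI} follows by combining the IH with \cref{lem:substitutionInt} to commute the integer substitution with $\theta^{\bullet}$.

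The principal difficulty lies in the application rule \rn{Tr-App}. Writing $\fm = \form\,\formalt$, applying \rn{Tr-App} to $[\seq{\fa}/\seq{y}]\fm$ and unfolding by the IH yields, for the $\Mark$ component, $\fm^\circ_\Mark = (\theta^\Mark\form_\Mark)(\theta^\Mark\formalt_\Mark,\theta^0\formalt_\Mark,\theta^\circ\formalt_{m+1},\ldots,\theta^\circ\formalt_{m+m'})$, while pushing $\theta^\Mark$ inside gives $\theta^\Mark\fm_\Mark = (\theta^\Mark\form_\Mark)(\theta^\Mark\formalt_\Mark,\theta^\Mark\formalt_0,\theta^\Mark\formalt_{m+1},\ldots,\theta^\Mark\formalt_{m+m'})$. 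Reconciling the second argument requires two moves: $\theta^\Mark\formalt_0 = \theta^0\formalt_0$ because $y_{i,\Mark}$ does not occur in $\formalt_0$ (by \cref{lem:well-typedAndOccurrence}), and then $\theta^0\formalt_0 = \theta^0\formalt_\Mark$ by part (1) of the IH applied to $\formalt$. Similarly each $\theta^\Mark\formalt_{m+j}$ equals $\theta^\circ\formalt_{m+j}$ since $y_{i,\Mark}$ does not appear there and the two substitutions otherwise coincide on $\seq{y}^\circ_i$. The analogous calculation handles the $j$-components for $j \in \set{0,1,\ldots,k}$ and the $\theta^\circ$-images of $\fm_{1},\ldots,\fm_{m+\gar(\fty)}$. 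The rule \rn{Tr-AppG} proceeds along the same lines but additionally requires pushing substitutions past the fresh existential $\exists\seq{u}$ and its auxiliary conjuncts, which is handled by $\alpha$-renaming; here $\beeq$ absorbs the $\beta\eta$-normalizations introduced by the $\lambda\seq{z}.\lambda\seq{w}.(\ldots)$ wrappers in the rule.
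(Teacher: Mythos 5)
Your proposal is correct and follows essentially the same route as the paper's proof: the same induction on the translation derivation with the weakening lemma (\cref{lem:weakening}) handling \rnp{Tr-Var}, the occurrence part of \cref{lem:well-typedAndOccurrence} used to identify the action of $\theta^{j}$ and $\theta^{j'}$ on non-$\Mark$ components, and \cref{item:subOfSubstLemma} invoked in the \rnp{Tr-App} case to reconcile $\theta^{0}\formalt_{\Mark}$ with $\theta^{\Mark}\formalt_{0}$, exactly as the paper does. The only blemish is your appeal to \cref{lem:substitutionInt} for \rnp{Tr-AppI}: since the environment $\seq{y}_{1,\dots,q}\COL\seq{\sty}$ in this lemma contains no integer variables, every integer expression occurring in $\fm$ is closed, so the higher-order substitution commutes with it trivially and no auxiliary lemma is needed.
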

\begin{proof}
We can show \cref{item:subOfSubstLemma} easily
by induction on \(\fm\) and case analysis
on the last rule used for the derivation
\(
 \seq{y}_{1,\dots,q}\COL\seq{\sty};\
\seq{x'}_{1,\dots,m}
 \pn 
\fm:\fty\tr (\fm_\Mark,\fm_0,\ldots,\fm_{m+\gar(\fty)})
\).
We show \cref{item:mainOfSubstLemma} by the same induction and case analysis.
Basically the proof is straightforward,
where we use the latter part of Lemma~\ref{lem:well-typedAndOccurrence}.
Here we show only the cases of \rnp{Tr-Var} and \rnp{Tr-App};
in the latter case, we use \cref{item:subOfSubstLemma}.

\myitem
Case of \rnp{Tr-Var}:
Let the last rule be the following,
where \(i \in \set{1,\dots,q}\):
\[
\infer{
\begin{aligned}[t] &
\seq{y}\COL\seq{\sty}; 
\seq{x'}_{1,\dots,m}
 \pn 
(\fm =)\ y_i: \sty_i 
\\&\quad
\tr\big(
\begin{alignedat}[t]{10}
&(\fm_\Mark&&,\fm_0&&,
\fm_{1}&&,\ldots&&,\fm_{m}&&,
\fm_{m+1}&&,\ldots&&,\fm_{m+\gar(\fty)}&&) =\big)
\\
&(y_{i,\Mark}&&, y_{i,0}&&,
 y_{i,0}&&,\ldots&&,y_{i,0}&&,
 y_{i,1}&&,\ldots&&,y_{i,m'_i}&&)
\end{alignedat}
\end{aligned}
}{
\begin{aligned}[t] &
\decomp(\sty_i) = (\seq{\sty_i}, m'_i ,p_i)
%\\&
%y_{i,0}^j = y_{i,0}
%\begin{aligned}[t] &
%\lambda\seq{u}\COL\fromdisjT{\seq{\sty_i}}{(x_\Mark\COL\INT^{\maxar}\to\Prop,
%    x_{0}\COL \INT^{\maxar}\to\Prop)}.\lambda \seq{z}_{1,\ldots,p_i+\ell_{k+j}}.
%\\&
%y_{i,0}\,\seq{u}\,\seq{z}_{1,\ldots,p_i+\ell_{k+j}}\,{0}^{\maxar-\ell_{k+j}}
%\end{aligned}
\end{aligned}
}
\]

By the weakening lemma (Lemma~\ref{lem:weakening}),
we have 
\begin{align*} &
\penv, \seq{x'}_{1,\dots,m}
 \pn 
([\seq{\fa} / \seq{y}] \fm =)\ \fa_i : \sty_i \tr
\\&
\begin{alignedat}{20}
\big(
&(\fm^{\circ}_{\Mark}&&,\fm^{\circ}_{0}&&,
\fm^{\circ}_{1}&&,\dots&&,\fm^{\circ}_{k}&&,
\fm^{\circ}_{k+1}&&,\dots&&,\fm^{\circ}_{k+m}&&,
\fm^{\circ}_{k+m+1}&&,\dots&&,\fm^{\circ}_{k+m+m'_i}&&) =\big)
\\
&(\fa_{i,\Mark}&&,\fa_{i,0}&&,
\fa_{i,1}&&,\dots&&,\fa_{i,k}&&,
\fa_{i,0}&&,\dots&&,\fa_{i,0}&&,
\fa_{i,k+1}&&,\dots&&,\fa_{i, k + m'_i}&&)
\end{alignedat}
\end{align*}
Then we can check the required equation component-wise.

\myitem
Case of \rnp{Tr-App}:
Let the last rule be the following:
%\asd{\begin{itemize}
%\item \(m' \to m''\)
%\item \(m \to m'\)
%\item \(\ell_{k+i} \to \ell'_{m+i}\) (\(i \le m' (\to m'')\))
%\item \(k \to m\)
%\item \(\ell_i (i \le k) \to \ell_{k+i} (i \le m)\)
%%
%\item \(\stenv \to \seq{y}\COL\seq{\sty}\)
%\item \(\seq{\sty} \to \seq{\sty'}\)
%\item \(p \to p'\)
%\item \(\sty' = \sty'\)
%\item \(\fty' \to \fty\)
%\item \(x_i \to x'_i\)
%\item \(\fm \to \fm'\)
%\item \(\fa \to \fa'\)
%\item \(\fp_i \to \fp'_i\)
%%\item \( \to \)
%\end{itemize}
%}
\[
\infer{\begin{aligned}[t] &
\seq{y}\COL\seq{\sty}; \seq{x'}_{1,\dots,m}
            \pn (\fm=)\ \fm'\,\fa':\fty \tr
\\&\quad
\big((\fm_\Mark,\fm_0,
\fm_{1}\ldots,\fm_{m},
\fm_{m+1}\ldots,\fm_{m+\gar(\fty)}) =\big)
\\&\quad\phantom{\big(}
\big(\fm'_\Mark(\fa'_\Mark,\caa{\fa'}{m}{m''}),\,
\fm'_0(\fa'_0,\caa{\fa'}{m}{m''}),\,
\\&\quad
\begin{alignedat}{10}
\phantom{\big(\big(}&\fm'_1(\fa'_1,\caa{\fa'}{m}{m''}),
&&\ldots,\,&&
\fm'_m(\fa'_m,\caa{\fa'}{m}{m''})&&,
\\\phantom{\big(\big(}&
\fm'_{m+1}(\caa{\fa'}{m}{m''}),
&&\ldots,\,&&
\fm'_{m+m'}(\caa{\fa'}{m}{m''})&&\big)
\end{alignedat}
\end{aligned}
}{
\begin{aligned}[t] &
%\decomp(\sty') = (\seq{\sty'},m'',p')
%\\&
 \ord(\sty'\to\fty)>1\andalso   \gar(\sty'\to\fty)=m'
    \andalso
\gar(\sty') = m''
\\&
    \seq{y}\COL\seq{\sty};
    \seq{x'}_{1,\dots,m}
    \pn
 \fm':\sty'\to \fty
 \tr (\fm'_\Mark,\fm'_0,\ldots,\fm'_{m+m'})
\\&
    \seq{y}\COL\seq{\sty};
    \seq{x'}_{1,\dots,m} \pn
 \fa':\sty'\tr
    (\fa'_\Mark,\fa'_0,\ldots,\fa'_{m+m''})
\end{aligned}}
\]
Here note that we have \(\gar(\fty) = \gar(\sty'\to\fty) = m'\)
since \(\ord(\sty'\to\fty)>1\).

By induction hypothesis,
there exist 
\(\fm'^{\circ}_{\Mark},\fm'^{\circ}_{0},\dots,\fm'^{\circ}_{k+m+m'}\)
such that
\begin{align*} &
\penv, \seq{x'}_{1,\dots,m}
 \pn 
[\seq{\fa} / \seq{y}]
\fm'
:\sty'\to\fty\tr
\\&
\begin{alignedat}{20}
%(&\fm'^{\circ}_{\Mark}&&,
%\fm'^{\circ}_{0}&&,
%\fm'^{\circ}_{1}&&,
%\dots&&,
%\fm'^{\circ}_{k}&&,
%\fm'^{\circ}_{k+1}&&,
%\dots&&,
%\fm'^{\circ}_{k+m+m'}
%&&),
%\\
%\text{and }
(&\fm'^{\circ}_{\Mark}&&,
\fm'^{\circ}_{0}&&,
\fm'^{\circ}_{1}&&,\dots&&,\fm'^{\circ}_{k}&&,
%\fm'^{\circ}_{1}&&,
%\dots&&,
%\fm'^{\circ}_{k}&&,
\fm'^{\circ}_{k+1}&&,
\dots&&,
\fm'^{\circ}_{k+m+m'}
&&)
\\
\beeq
(&\theta^{\Mark}\fm'_{\Mark}&&,
\theta^{0}\fm'_{\Mark}&&,
\theta^{1}\fm'_{\Mark}&&,
\ldots&&,
\theta^{k}\fm'_{\Mark}&&,
%\theta^{1}\fm'_{\Mark}&&,
%\ldots&&,
%\theta^{k}\fm'_{\Mark}&&,
\theta^{\circ}\fm'_{1}&&,
\ldots&&,
\theta^{\circ}\fm'_{m+m'}
&&),
\end{alignedat}
\end{align*}
and there exist
\(\fa'^{\circ}_{\Mark},\fa'^{\circ}_{0},\dots,\fa'^{\circ}_{k+m+m''}\)
such that
\begin{align*} &
\penv, \seq{x'}_{1,\dots,m}
 \pn 
[\seq{\fa} / \seq{y}]
\fa'
:\sty'\tr
\\&
\begin{alignedat}{20}
%(&\fa'^{\circ}_{\Mark}&&,
%\fa'^{\circ}_{0}&&,
%\fa'^{\circ}_{1}&&,
%\dots&&,
%\fa'^{\circ}_{k}&&,
%\fa'^{\circ}_{k+1}&&,
%\dots&&,
%\fa'^{\circ}_{k+m+m''}
%&&),
%\\
%\text{and }
(&\fa'^{\circ}_{\Mark}&&,
\fa'^{\circ}_{0}&&,
\fa'^{\circ}_{1}&&,\dots&&,\fa'^{\circ}_{k}&&,
%\fa'^{\circ}_{1}&&,
%\dots&&,
%\fa'^{\circ}_{k}&&,
\fa'^{\circ}_{k+1}&&,
\dots&&,
\fa'^{\circ}_{k+m+m''}
&&)
\\
\beeq
(&\theta^{\Mark}\fa'_{\Mark}&&,
\theta^{0}\fa'_{\Mark}&&,
\theta^{1}\fa'_{\Mark}&&,
\ldots&&,
\theta^{k}\fa'_{\Mark}&&,
%\theta^{1}\fa'_{\Mark}&&,
%\ldots&&,
%\theta^{k}\fa'_{\Mark}&&,
\theta^{\circ}\fa'_{1}&&,
\ldots&&,
\theta^{\circ}\fa'_{m+m''}
&&).
\end{alignedat}
\end{align*}

For any \(j\) and \(j'\in\{*,0,1,\dots,k,\circ\}\),
by the latter part of Lemma~\ref{lem:well-typedAndOccurrence},
\(\theta^{j}\fp = \theta^{j'}\fp\)
for
any formula \(\fp\) that occurs in
\begin{alignat*}{10}
\big(
%\fm'_\Mark(\fa'_\Mark,\caa{\fa'}{m}{m''}),\,
\fm'_0(\fa'_0,\caa{\fa'}{m}{m''}),\,
&\fm'_1(\fa'_1,\caa{\fa'}{m}{m''})&&,
\ldots&&,\,
\fm'_m(\fa'_m,\caa{\fa'}{m}{m''})&&,
\\&
\fm'_{m+1}(\caa{\fa'}{m}{m''})&&,
\ldots&&,\,
\fm'_{m+m'}(\caa{\fa'}{m}{m''})&&\big).
\end{alignat*}
Especially, for any \(j\in\{*,0,1,\dots,k,\circ\}\), we have 
\begin{align}
\caa{\fa'^{\circ}}{k+m}{m''}
=\ &
(
\fa'^{\circ}_{0},
\fa'^{\circ}_{k+m+1},
\dots,
\fa'^{\circ}_{k+m+m''},
)
\notag\\\beeq\ &
(\theta^{0}\fa'_{\Mark},
\theta^{\circ}\fa'_{m+1},
\ldots,
\theta^{\circ}\fa'_{m+m''})
\notag\\=\ &
(\theta^{0}\fa'_{0},
\theta^{\circ}\fa'_{m+1},
\ldots,
\theta^{\circ}\fa'_{m+m''})
\tags{\text{by \cref{item:subOfSubstLemma}}}
\\=\ &
(\theta^{j}\fa'_{0},
\theta^{j}\fa'_{m+1},
\ldots,
\theta^{j}\fa'_{m+m''})
=
\theta^{j}\caa{\fa'}{m}{m''}.
\notag
\end{align}

Now, with \rnp{Tr-App}, we have
\begin{align*} &
(\fm^{\circ}_{\Mark},
\fm^{\circ}_{0},
\fm^{\circ}_{1},\dots,\fm^{\circ}_{k},
%\fm^{\circ}_{1},
%\dots,
%\fm^{\circ}_{k},
\fm^{\circ}_{k+1},
\dots,
\fm^{\circ}_{k+m},
\fm^{\circ}_{k+m+1},
\dots,
\fm^{\circ}_{k+m+m'}
)
\\=\ &
\begin{aligned}[t]
\big(&\fm'^{\circ}_\Mark(\fa'^{\circ}_\Mark,\caa{\fa'^{\circ}}{k+m}{m''}),\,
\fm'^{\circ}_0(\fa'^{\circ}_0,\caa{\fa'^{\circ}}{k+m}{m''}),\,
\\
&\begin{alignedat}[t]{10} &
\fm'^{\circ}_1(\fa'^{\circ}_1,\caa{\fa'^{\circ}}{k+m}{m''})
&&,\ldots,\,&&
\fm'^{\circ}_{k}(\fa'^{\circ}_{k},\caa{\fa'^{\circ}}{k+m}{m''})&&,
\\&
\fm'^{\circ}_{k+1}(\fa'^{\circ}_{k+1},\caa{\fa'^{\circ}}{k+m}{m''})
&&,\ldots,\,&&
\fm'^{\circ}_{k+m}(\fa'^{\circ}_{k+m},\caa{\fa'^{\circ}}{k+m}{m''})&&,
\\&
\fm'^{\circ}_{k+m+1}(\caa{\fa'^{\circ}}{k+m}{m''})
&&,\ldots,\,&&
\fm'^{\circ}_{k+m+m'}(\caa{\fa'^{\circ}}{k+m}{m''})&&\big)
\end{alignedat}
\end{aligned}
\\\beeq\ &
\begin{aligned}[t]
\big(&\theta^{\Mark}\fm'_{\Mark}(\theta^{\Mark}\fa'_\Mark,\theta^{\Mark}\caa{\fa'}{m}{m''}),\,
\theta^{0}\fm'_{\Mark}(\theta^{0}\fa'_{\Mark},\theta^{0}\caa{\fa'}{m}{m''}),\,
\\
&\begin{alignedat}[t]{10} &
\theta^{1}\fm'_{\Mark}
(\theta^{1}\fa'_{\Mark}
,\theta^{1}\caa{\fa'}{m}{m''})&&,
\ldots,\,&&
\theta^{k}\fm'_{\Mark}
(\theta^{k}\fa'_{\Mark}
,\theta^{k}\caa{\fa'}{m}{m''})&&,
\\&
\theta^{\circ}\fm'_{1}(\theta^{\circ}\fa'_{1},\theta^{\circ}\caa{\fa'}{m}{m''})&&,
\ldots,\,&&
\theta^{\circ}\fm'_{m}(\theta^{\circ}\fa'_{m},\theta^{\circ}\caa{\fa'}{m}{m''})&&,
\\&
\theta^{\circ}\fm'_{m+1}(\theta^{\circ}\caa{\fa'}{m}{m''})&&,
\ldots,\,&&
\theta^{\circ}\fm'_{m+m'}(\theta^{\circ}\caa{\fa'}{m}{m''})&&\big)
\end{alignedat}
\end{aligned}
\\=\ &
\begin{aligned}[t]
\big(&
\theta^{\Mark}\big(\fm'_{\Mark}(\fa'_\Mark,\caa{\fa'}{m}{m''})\big),\,
\theta^{0}\big(\fm'_{\Mark}(\fa'_{\Mark},\caa{\fa'}{m}{m''})\big),\,
\\
&\begin{alignedat}[t]{10} &
\theta^{1}\big(\fm'_{\Mark}
(\fa'_{\Mark}
,\caa{\fa'}{m}{m''})\big)&&,
\ldots,\,&&
\theta^{k}\big(\fm'_{\Mark}
(\fa'_{\Mark}
,\caa{\fa'}{m}{m''})\big)&&,
\\&
\theta^{\circ}\big(\fm'_{1}(\fa'_{1},\caa{\fa'}{m}{m''})\big)&&,
\ldots,\,&&
\theta^{\circ}\big(\fm'_{m}(\fa'_{m},\caa{\fa'}{m}{m''})\big)&&,
\\&
\theta^{\circ}\big(\fm'_{m+1}(\caa{\fa'}{m}{m''})\big)&&,
\ldots,\,&&
\theta^{\circ}\big(\fm'_{m+m'}(\caa{\fa'}{m}{m''})\big)&&\big)
\end{alignedat}
\end{aligned}
\\=\ &
(\theta^{\Mark}\fm_{\Mark},\theta^{0}\fm_{\Mark},
\theta^{1}\fm_{\Mark},
\ldots,
\theta^{k}\fm_{\Mark},
\theta^{\circ}\fm_{1},
\ldots,
\theta^{\circ}\fm_{m+m'}),
\end{align*}
as required.

\end{proof}

Now we are ready to prove Lemma~\ref{lem:subjectReduction}.
\begin{pfof}{Lemma~\ref{lem:subjectReduction}}
For the convenience of the proof,
we rename the metavariables
\(\fm, \fa, \fm_i, \fa_i\) with
\(\fm', \fa', \fm'_i, \fa'_i\):
so we suppose
\(\fm' \redd \fa'\)
and 
\(
\seq{x}_{1,\ldots,k}
\pn \fm':\Prop\tr (\fm'_\Mark,\fm'_0,\ldots,\fm'_k)
\), 
and prove that
we have
\(
\seq{x}_{1,\ldots,k}
\pn \fa':\Prop\tr (\fa'_\Mark,\fa'_0,\ldots,\fa'_k)
\)
and 
\(\fm'_i \eqdp \fa'_i\).
The proof proceeds by induction on \(\fm'\).

Let \(\fm'\) be of the form \(E[\fm'']\) where \(\fm''\) is a redex of \(\redd\).
The case where \(E\neq \hole\) can be easily proved by using induction hypothesis.\asd{todo}
So we consider only the case where \(E=\hole\).
Then we perform case analysis on \(\fm' \redd \fa'\),
but we focus only on the non-trivial case
where we use the substitution lemmas.

\myitem
Case where
\(\fm' \redd \fa'\)
is of the form
\[
F\,\fea_1\,\cdots\,\fea_{h'} \redd 
\esubst{\seq{x''}}{\seq{\fp}}
[\seq{e''} / \seq{z''}]
[\seq{\fm''} / \seq{y''}]\fm
\]
with the following conditions:
\begin{align*}
&(F\,\seq{w'}= \fm) \in D
\\&
\decomp(\NT(F))=(\seq{\sty''}_{1,\dots,h''},m, p)
\\&
\decompA(\seq{\alpha},\NT(F)) =
(\seq{\fm''},
\seq{\fp},
\seq{e''})
\\&
\decomparg(\seq{w'},\NT(F)) =
%\notag\\&\quad
(\seq{y''}\COL\seq{\sty''},
 \seq{x''},
% (x''_1\COL\INT^{\ell_{k+1}}{\to}\Prop,\ldots,x''_m\COL\INT^{\ell_{k+m}}{\to}\Prop),
 \seq{z''})
\\&
\text{\(\seq{x''}\) do not occur in \(F\,\fea_1\,\cdots\,\fea_m\).}
\end{align*}
By the last condition above,
we can assume \(\{x_i\}_{i}\cap
\{x''_i\}_{i} = \emptyset\).
%\asd{The metavariables rules here are:
%(i)
%\(\decomparg(\seq{w'},\NT(F))\)
%should match with \(\decomp(\NT(F))\) below,
%which shouldn't be changed;
%(ii)
%fresh variables here are basically of the form \((-)''\).
%(iii) single-prime is used for: the statement of this lemma
%and the uses in the translation figure, 
%and \(\circ\) is used for avoiding the
%crash of \rnp{Tr-App} and \rnp{Tr-AppG}.
%
%Change:
%\begin{itemize}
%%\item \(y \to y\) because we define \(\seq{y}'\)
%%\item \(\sty \to \sty''\)
%%\item \(m \to m'' \to h''\)
%%\item \(x \to x''\)
%%\item \(\seq{z} \to \seq{z}_{1,\dots,p}\)
%%\item \(k \to m\)
%%\item \(\maar{} \to \ell_{k+i}\)
%\item \( \to \)
%\end{itemize}
%
%Change at other places:
%\begin{itemize}
%%\item \(\sty' \to \sty''\)
%%\item \(p'_i \to p^\circ_i\)
%%\item \(m' \to h\)
%%\item Move \rnp{Tr-Def} to the back
%\item \( \to \)
%\end{itemize}
%}

Now there exist \(q\),
\(r_1 , \dots , r_{q+m+1}\),
\(\seq{\fa}_{1,\dots,q}\),
%\(\seq{\fp}_{1,\dots,m}\),
\(\seq{e}_{1,\dots,r_{q+m+1}}\)
that satisfy the following conditions,
where we write \(\seqe{i}\)
(or simply \(\seq{e}\) if \(i\) is clear)
for \(\seq{e}_{r_{i-1}+1,\dots,r_{i}}\)
(\(i=1,\dots,q+m+1\)) and \(r_{0} \defe 0\):
\begin{align*}
&r_1 \le \dots \le r_{q+m+1}
\\&
(\fea_1,\dots,\fea_{h''})
=(
\seqe{1},\fa_1,
%\seqe{2},
\dots,
\seqe{q},
\fa_q
)
\\&
(\fea_{h''+1},\dots,\fea_{h'})=
(
\seqe{q+1}, \fp_1,
%\seqe{q+1},
\dots,
\seqe{q+m}, \fp_m,
\seqe{q+m+1}
)
\\&
p = r_{q+m+1} - r_{q}.
\end{align*}
Let \(\sty_i\) be the type of \(\fa_i\)
(i.e., \(\sty_i \defe \sty''_{r_{i}+i}\)).
Then we also have
\begin{align*}
\NT(F) =\ 
&\INT^{r_1} \to \sty_1 \to 
%\INT^{r_2-r_1} \to
\cdots
\INT^{r_{q}-r_{q-1}} \to \sty_q \to 
\\&
\INT^{r_{q+1}-r_{q}} \to (\INT^{\maar{}} \to \Prop) \to 
%\INT^{r_{q+2}-r_{q+1}} \to
\\&
\cdots
\INT^{r_{q+m}-r_{q+m-1}} \to
(\INT^{\maar{}} \to \Prop) \to 
\\&
\INT^{r_{q+m+1}-r_{q+m}} \to \Prop.
\end{align*}

In the derivation tree of
\[
\seq{x}_{1,\ldots,k}
\pn (\fm'=)\
F\,\fea_1\,\cdots\,\fea_{h'}
:\Prop\tr (\fm'_\Mark,\fm'_0,\ldots,\fm'_k),
\]
the leftmost path from the head position \(F\)
consists of:
(i) 
\rnp{Tr-VarF} at the leaf,
then 
(ii) 
repeated applications of 
either \rnp{Tr-App} or \rnp{Tr-AppI},
and then
(iii) 
repeated applications of 
either \rnp{Tr-AppG} or \rnp{Tr-AppI}.
More specifically,
at the leaf of \rnp{Tr-VarF} we have
\[
\envx \pn F:\NT(F)\tr
(F_0, F_0, (F_0)^{k}, F_1,\ldots,F_m)
\]
where \((F_0)^{k}\) denotes the sequence of length \(k\)
whose all components are \(F_0\).
Then by \rnp{Tr-AppI} we have
\begin{align}
&\envx \pn F\,\seqe{1}: \btype{\NT(F)}{r_0} \tr
\notag\\&\quad
(F_0\,\seqe{1}, F_0\,\seqe{1}, 
(F_0\,\seqe{1})^k,
F_1\,\seqe{1},\ldots,F_m\,\seqe{1}).
\notag
\end{align}
Then by \rnp{Tr-App} (and \rnp{Tr-AppI}) we have
\begin{align}
&
\envx \pn 
\fa_i : \sty_i \tr
(\fa_{i,\Mark},\fa_{i,0},\dots,\fa_{i, k + m'_i})
\quad(i=1,\dots,q)
\notag\\&
\envx \pn 
F\,\seqe{1}\,\fa_1\cdots\seqe{q}\,\fa_q
: \btype{\NT(F)}{r_{q-1}+q} \tr
\notag\\&\quad
%(\fm^\circ_{1,\Mark},\fm^\circ_{1,0},
%\fm^\circ_{1,1},\dots,\fm^\circ_{1,k},
%\fm^\circ_{1,k+1},\dots,\fm^\circ_{1,k+m})
%\defe
%\notag\\&\quad
\big(
F_0\,\seqe{1}\,(\fa_{1,\Mark},\caa{\fa_1}{k}{m'_1})
\cdots\seqe{q}\,(\fa_{q,\Mark},\caa{\fa_q}{k}{m'_q}),
\notag\\&\quad\phantom{\big(}
F_0\,\seqe{1}\,(\fa_{1,0},\caa{\fa_1}{k}{m'_1})
\cdots\seqe{q}\,(\fa_{q,0},\caa{\fa_q}{k}{m'_q}),
\notag\\&\quad\phantom{\big(}
\varf{F_0}{1}\,\seqe{1}\,(\fa_{1,1},\caa{\fa_1}{k}{m'_1})
\cdots\seqe{q}\,(\fa_{q,1},\caa{\fa_q}{k}{m'_q}),
\notag\\&\quad\phantom{\big(}
\ldots,
\varf{F_0}{k}\,\seqe{1}\,(\fa_{1,k},\caa{\fa_1}{k}{m'_1})
\cdots\seqe{q}\,(\fa_{q,k},\caa{\fa_q}{k}{m'_q}),
\notag\\&\quad\phantom{\big(}
F_1\,\seqe{1}\,(\caa{\fa_1}{k}{m'_1})
\cdots\seqe{q}\,(\caa{\fa_q}{k}{m'_q}),
\notag\\&\quad\phantom{\big(}
\ldots,
F_m\,\seqe{1}\,(\caa{\fa_1}{k}{m'_1})
\cdots\seqe{q}\,(\caa{\fa_q}{k}{m'_q})
\big)
\notag\\&
%%%
m'_i \defe \gar(\sty_i)
\quad(i=1,\dots,q)
\notag\\&
\decomp(\sty_i) = (\seq{\sty_i},m'_i,p_i)
\quad(i=1,\dots,q).
\notag
\end{align}
And then by \rnp{Tr-AppG} (and \rnp{Tr-AppI}) we have
\begin{align}
&
p^\circ_i \defe r_{q+m} - r_{q-1+i}
\quad(i=1,\dots,m)
\notag\\&
\fm^\circ_1 \defe
F\,\seqe{1}\,\fa_1\cdots\seqe{q}\,\fa_q\seqe{q+1}
\notag\\&
\fm^\circ_{i+1} \defe \fm^\circ_{i}\,\fp_{i}\,\seqe{q+i+1}
\quad(i=1,\dots,m)
\notag\\&
\fty_i \defe \btype{\NT(F)}{r_{q+i}+q+i}
\quad(i=1,\dots,m)
\notag\\&
\envx \pn \fm^\circ_{i}:(\INT^{\maar{}}{\to}\Prop) \,{\to}\, \fty_i\tr 
(\fm^\circ_{i,\Mark},\fm^\circ_{i,0},\ldots,\fm^\circ_{i,k+m+1-i})
\notag\\&
\tag{\(i=1,\dots,m+1\)}
\\&
\envx \pn \fp_{i}:{\INT}^{\maar{}}{\to}\Prop\tr (\fp_{i,\Mark},\fp_{i,0},\ldots,\fp_{i,k})
\tag{\(i=1,\dots,m\)}
\\&
\begin{aligned}[t]
\overline{\fp_{i,j}} \defe\ &
    \lambda \seq{z}_{1,\ldots,p^\circ_i}.\lambda \seq{w}_{1,\ldots,\maar{}}.
\\
   &\fm^\circ_{i,j}\,\seq{z}\,\seq{w} \, \lor
 \exists \seq{u}_{1,\ldots,\maar{}}.
    (\fm^\circ_{i,k+1}\,\seq{z}\,\seq{u}
    \land \fp_{i,j}\,\seq{u}\,\seq{w})
\end{aligned}
\notag\\&
\tag{\(i=1,\dots,m,\ j=\Mark,0,1,\dots,k\)}
\\&
\envx \pn 
\fm^\circ_{i+1}
(= \fm^\circ_{i}\,\fp_{i}\,\seqe{q+i+1}) :
(\INT^{\maar{}}{\to}\Prop) \,{\to}\, \fty_{i+1} \tr
\notag\\&
\quad(
\overline{\fp_{i,\Mark}}\,\seq{e},
\overline{\fp_{i,0}}\,\seq{e},\dots,\overline{\fp_{i,k}}\,\seq{e},
\fm^\circ_{i,k+2}\,\seq{e},\dots,\fm^\circ_{i,k+m+1-i}\,\seq{e}
)
\notag\\&
\mspace{320mu}(i=1,\dots,m).
\notag
%\\&
%\envx \pn 
%F\,\seqe{}\,\fa_1\cdots\seqe{}\,\fa_q
%\,\seqe{}\,\fp_1\cdots\seqe{}\,\fp_m : 
%\INT^{r_{q+m}-r_{q+m-1}} \to \Prop \tr
%\notag\\&
%\quad(
%\overline{\fp_{m,\Mark}},
%\overline{\fp_{m,0}},
%\dots,
%\overline{\fp_{m,k}}
%)
%\notag
\end{align}
Then, for each \(i=2,\dots,m+1\), 
%by comparing the two translation results of \(\fm^\circ_i\) above,
we have
\begin{alignat*}{20}
&(\fm^\circ_{i,\Mark}&&\,,\ &&
\fm^\circ_{i,0}&&\,,\ &&\ldots,\fm^\circ_{i,k}&&\,,\ &&
\fm^\circ_{i,k+1}&&\,,\ &&\ldots,\fm^\circ_{i,k+m+1-i}
&&\,)
\\=\ &
(
\overline{\fp_{i-1,\Mark}}\,\seq{e}&&\,,\ &&
\overline{\fp_{i-1,0}}\,\seq{e}&&\,,\ &&\dots,
\overline{\fp_{i-1,k}}\,\seq{e}&&\,,\ &&
\fm^\circ_{i-1,k+2}\,\seq{e}&&\,,\ &&\dots,\fm^\circ_{i-1,k+m+2-i}\,\seq{e}
&&\,)
\end{alignat*}
where \(\seq{e} = 
\seqe{q+i}\).
Hence, for each \(i=1,\dots,m\),
\begin{align*}
\fm^\circ_{i,k+1}
&=
\fm^\circ_{i-1,k+2}\,\seqe{q+i}
=
\fm^\circ_{i-2,k+3}\,\seqe{q+i-1}\,\seqe{q+i}
=
\dots
\\
&=
\fm^\circ_{1,k+i}\,\seqe{q+2}\cdots\seqe{q+i}
\\
&=
F_i\,\seqe{1}\,(\caa{\fa_1}{k}{m'_1})
\cdots\seqe{q}\,(\caa{\fa_q}{k}{m'_q})
\,\seqe{q+1}\cdots\seqe{q+i}
\end{align*}
where the last equality follows from
the calculation result of \rnp{Tr-App} above.
Also, for each \(i=2,\dots,m\) and \(j=\Mark,0,\dots,k\), we have
%\begin{align*}
%\overline{\fp_{i,j}} =\ &
%    \lambda \seq{z}_{1,\ldots,p^\circ_i}.\lambda \seq{w}_{1,\ldots,\maar{}}.
%\\&
%\overline{\fp_{i-1,j}}\,\seqe{q+i}
%\,\seq{z}\,\seq{w} \, \lor
% \exists \seq{u}_{1,\ldots,\maar{}}.
%    (\fm^\circ_{i,k+1}\,\seq{z}\,\seq{u}
%    \land \fp_{i,j}\,\seq{u}\,\seq{w})
%\end{align*}
\begin{align*}
&\overline{\fp_{i,j}}
\,\seq{z}_{1,\ldots,p^\circ_i}\,\seq{w}_{1,\ldots,\maar{}}
\\\eqdp\ &
\overline{\fp_{i-1,j}}\,\seqe{q+i}
\,\seq{z}\,\seq{w} \, \lor
 \exists \seq{u}_{1,\ldots,\maar{}}.
    (\fm^\circ_{i,k+1}\,\seq{z}\,\seq{u}
    \land \fp_{i,j}\,\seq{u}\,\seq{w})
\end{align*}

Now, since \(\fm' = \fm^\circ_{m+1}\),
for each \(j=\Mark,0,\dots,k\), we have
\begin{align}
&\fm'_j\,\seq{w}_{1,\ldots,\maar{}}
= \fm^\circ_{m+1,j}\,\seq{w}
=
\overline{\fp_{m,j}}\,\seqe{q+m+1}\,\seq{w}
\notag\\\eqdp\ &
\overline{\fp_{m-1,j}}\,\seqe{q+m}
\,\seqe{q+m+1}\,\seq{w}
\notag\\&
\lor \exists \seq{u}_{1,\ldots,\maar{}}.
    (\fm^\circ_{m,k+1}\,\seqe{q+m+1}\,\seq{u}
    \land \fp_{m,j}\,\seq{u}\,\seq{w})
\notag\\\eqdp\ &
\overline{\fp_{m-2,j}}\,\seqe{q+m-1}
\,\seqe{q+m}
\,\seqe{q+m+1}\,\seq{w} 
\notag\\&
\lor \exists \seq{u}_{1,\ldots,\maar{}}.
    (\fm^\circ_{m-1,k+1}\,\seqe{q+m}
\,\seqe{q+m+1}\,\seq{u}
    \land \fp_{m-1,j}\,\seq{u}\,\seq{w})
\notag\\&
\lor \exists \seq{u}_{1,\ldots,\maar{}}.
    (\fm^\circ_{m,k+1}\,\seqe{q+m+1}\,\seq{u}
    \land \fp_{m,j}\,\seq{u}\,\seq{w})
\notag\\\eqdp\ & \cdots
\notag\\\eqdp\ &
\overline{\fp_{1,j}}\,\seqe{q+2}
\cdots
\seqe{q+m+1}\,\seq{w} 
\notag\\&
\lor \exists \seq{u}_{1,\ldots,\maar{}}.
    (\fm^\circ_{2,k+1}\,\seqe{q+3}\cdots\seqe{q+m+1}\,\seq{u}
    \land \fp_{2,j}\,\seq{u}\,\seq{w})
\notag\\&
\lor \cdots
\notag\\&
\lor \exists \seq{u}_{1,\ldots,\maar{}}.
    (\fm^\circ_{m,k+1}\,\seqe{q+m+1}\,\seq{u}
    \land \fp_{m,j}\,\seq{u}\,\seq{w})
\notag\\\eqdp\ &
 \fm^\circ_{1,j}\,\seqe{q+2}\cdots\seqe{q+m+1}\,\seq{w}
\notag\\&
\lor \exists \seq{u}_{1,\ldots,\maar{}}.
 (\fm^\circ_{1,k+1}\,\seqe{q+2}\cdots\seqe{q+m+1}\,\seq{u}
 \land \fp_{1,j}\,\seq{u}\,\seq{w})
\notag\\&
\lor \exists \seq{u}_{1,\ldots,\maar{}}.
    (\fm^\circ_{2,k+1}\,\seqe{q+3}\cdots\seqe{q+m+1}\,\seq{u}
    \land \fp_{2,j}\,\seq{u}\,\seq{w})
\notag\\&
\lor \cdots
\notag\\&
\lor \exists \seq{u}_{1,\ldots,\maar{}}.
    (\fm^\circ_{m,k+1}\,\seqe{q+m+1}\,\seq{u}
    \land \fp_{m,j}\,\seq{u}\,\seq{w})
\notag\\\eqdp\ &
 \fm^\circ_{1,j}\,\seqe{q+2}\cdots\seqe{q+m+1}\,\seq{w}
\notag\\&
\lor \textstyle\bigvee_{i=1}^{m}
 \exists \seq{u}_{1,\ldots,\maar{}}.
    (\fm^\circ_{i,k+1}\seqe{q+i+1}\cdots\seqe{q+m+1}\,\seq{u}
    \land \fp_{i,j}\,\seq{u}\,\seq{w})
\notag\\\eqdp\ &
 \fm^\circ_{1,j}\,\seqe{q+2}\cdots\seqe{q+m+1}\,\seq{w}
\notag\\&
\lor \textstyle\bigvee_{i=1}^{m}
 \exists \seq{u}_{1,\ldots,\maar{}}.
 (
\begin{aligned}[t]
&\begin{aligned}[t]
F_i\,&\seqe{1}\,(\caa{\fa_1}{k}{m'_1})
\cdots\seqe{q}\,(\caa{\fa_q}{k}{m'_q})
\\
&\seqe{q+1}\cdots\seqe{q+m+1}\,\seq{u}
\end{aligned}
\\
& \land \fp_{i,j}\,\seq{u}\,\seq{w}).
\end{aligned}
\label{eq:subRedMain1}
%%%%%%%%%%%%%%%%%%%%%%
% other presentation
%%%%%%%%%%%%%%%%%%%%%%
% \fm^\circ_{1,j}\,\seqe{q+2}\cdots\seqe{q+m+1}\,\seq{w}
%\\
%\lor\,&\!\textstyle\bigvee_{i=1}^{m}
%\begin{aligned}[t]
%&\exists \seq{u}_{1,\ldots,\maar{}}.
%\\
%&
%\begin{aligned}[t]
%\big(\,&
%F_i\,\seqe{1}\,(\caa{\fa_1}{k}{m'_1})
%\cdots\seqe{q}\,(\caa{\fa_q}{k}{m'_q})
%\,\seqe{q+1}\cdots\seqe{q+m+1}\,\seq{u}
%\\
%\land &\fp_{i,j}\,\seq{u}\,\seq{w}
%\,\big)
%\end{aligned}
%\end{aligned}
%%%%%%%%%%%%%%%%%%%%%%
\end{align}

To calculate \(\fm^\circ_{1,j}\) and \(F_i\) above,
let us consider the rules of \(F_0,\dots,F_m\),
which are given by \rnp{Tr-Def} as follows.
Recall
\[
\decomparg(\seq{w'},\NT(F)) =
(\seq{y''}\COL\seq{\sty''},
\seq{x''},
 \seq{z''})
\]
and let
\begin{align}
&
    \seq{y''}\COL\seq{\sty''},
%y''_1\COL\sty''_1,\ldots,y''_{h''}\COL\sty''_{h''},
    \seq{z''}\COL\seq{\INT};\,
\seq{x''}_{1,\dots,m}
 \pn 
\fm:\Prop\tr (\fm_\Mark,\fm_0,\ldots,\fm_m)
\notag\\&
 \seq{y''}_i \defe (y''_{i,\Mark},y''_{i,0},\ldots,y''_{i,\gar(\sty''_i)})
\quad
 \seq{y''}^\circ_i \defe (y''_{i,0},\ldots,y''_{i,\gar(\sty''_i)})
\notag\\
\tag{\(i\in\set{1,\dots,h''}\) and \(\sty''_i\neq\INT\)}
\\&
 \seq{y''}_i \defe y''_i
\quad
 \seq{y''}^\circ_i \defe y''_i
\mspace{47mu}
(i\in\set{1,\dots,h''}\text{ and }\sty''_i=\INT).
\notag
\end{align}
Then we obtain
\begin{align*}
  \pn (F\,\seq{w'}= \fm)\tr\,
    &\set{F_0\,\seq{y''}_1\,\cdots\,\seq{y''}_{h''}\,\seq{z''}=\fm_\Mark}
\\
\cup&\set{F_i\,\seq{y''}^\circ_1\,\cdots\,\seq{y''}^\circ_{h''}\,\seq{z''}=\fm_i\mid
     i\in\set{1,\ldots,m}}.
\end{align*}
Recall that \(\sty_i \defe \sty''_{r_{i}+i}\) (\(i=1,\dots,q\)),
and let
\begin{align*}
&y_{i} \defe y''_{r_{i}+i}
\quad (i=1,\dots,q)
\\&
y_{i,j} \defe y''_{r_{i}+i,j}
\quad (i=1,\dots,q,\ j=*,0,\dots,\gar(\sty_i))
\\&
\seq{y}_i \defe \seq{y''}_{r_{i}+i}
= (y_{i,*},y_{i,0},\dots,y_{i,\gar(\sty_i)})
\\&
\seq{y}^\circ_i \defe \seq{y''}^\circ_{r_{i}+i}
= (y_{i,0},\dots,y_{i,\gar(\sty_i)}).
\end{align*}
Then let \(\seq{z}_{1,\dots,r_{q+m+1}}\)
%\(y_{i,j}\) \((i=1,\dots,q,\ j=*,0,\dots,\gar(\sty_i))\),
be a sequence of variables of type \(\INT\)
that satisfies the following equations,
where we write \(\seqz{i}\)
(or simply \(\seq{z}\) if \(i\) is clear)
for \(\seq{z}_{r_{i-1}+1,\dots,r_{i}}\)
(\(i=1,\dots,q+m+1\)):
\begin{align*} &
(w'_1,\dots,w'_{h''}) =
(y''_1,\dots,y''_{h''}) =
(\seqz{1},y_1,\dots,\seqz{q},y_q)
\\&
(w'_{h''+1},\dots,w'_{h'}) =
(\seqz{q+1},x''_1,\dots,\seqz{q+m},x''_{m},\seqz{q+m+1})
\\&
(\seq{y''}_1,\dots,\seq{y''}_{h''},\seq{z''})
=(
\seqz{1},
\seq{y}_{1},
\dots,
\seqz{q},
\seq{y}_{q},
\seq{z}_{r_{q}+1,\dots,r_{q+m+1}}
)
\\&
(\seq{y''}^\circ_1,\dots,\seq{y''}^\circ_{h''},\seq{z''})
=(
\seqz{1},
\seq{y}^\circ_{1},
\dots,
\seqz{q},
\seq{y}^\circ_{q},
\seq{z}_{r_{q}+1,\dots,r_{q+m+1}}
).
\end{align*}

Now, for \(j=\Mark,0,\dots,k\), we have
\begin{align}
&\fm^\circ_{1,j}\,\seqe{q+2}\cdots\seqe{q+m+1}\,\seq{w}_{1,\ldots,\maar{}}
\notag\\\eqdp\ &
F_0\,\seqe{1}(\fa_{1,j},\caa{\fa_1}{k}{m'_1})
\cdots\seqe{q}(\fa_{q,j},\caa{\fa_q}{k}{m'_q})
\seqe{q+1}\cdots\seqe{q+m+1}\seq{w}
\notag\\\eqdp\ &
\big(
\big[ (\fa_{i',j},\caa{\fa_{i'}}{k}{m'_{i'}}) / \seq{y}_{i'} \big]_{i'=1}^{q}
\big[ e_{j'} / z_{j'} \big]_{j'=1}^{r_{q+m+1}}
\fm_\Mark
\big)
\,\seq{w}.
%\,0^{\maxar-\maar{}}.
\label{eq:subRedMain2-1}
\end{align}
Also for each \(i=1,\dots,m\), we have
\begin{align}
&F_i\,\seqe{1}\,(\caa{\fa_1}{k}{m'_1})
\cdots\seqe{q}\,(\caa{\fa_q}{k}{m'_q})
\,\seqe{q+1}\cdots\seqe{q+m+1}\,\seq{u}
\notag\\\eqdp\ &
\big(
\big[ \caa{\fa_{i'}}{k}{m'_{i'}} / \seq{y}^\circ_{i'} \big]_{i'=1}^{q}
\big[ e_{j'} / z_{j'} \big]_{j'=1}^{r_{q+m+1}}
\fm_i
\big)\,\seq{u}.
\label{eq:subRedMain2-3}
\end{align}

Next, let us consider \(\fa'\).
Now we have
\begin{align*}
\fa'
&=
\esubst{\seq{x''}}{\seq{\fp}}
[\seq{e''} / \seq{z''}]
[\seq{\fm''} / \seq{y''}]\fm
\\&=
\esubst{\seq{x''}}{\seq{\fp}}
[\fa_{i'} / y_{i'}]_{i'=1}^{q}
[e_{j'} / z_{j'}]_{j'=1}^{r_{q+m+1}}
\fm.
\end{align*}
Recall
\begin{align*} &
 \seq{y''}\COL\seq{\sty''},\
 \seq{z''}\COL\seq{\INT};\
 \seq{x''}_{1,\dots,m}
 \pn 
\fm:\Prop\tr (\fm_\Mark,\fm_0,\ldots,\fm_m),
\\&
(\seq{y''}, \seq{z''})
=
(\seqz{1},y_1,\dots,\seqz{q},y_q,\seqz{q+1},\dots,\seqz{q+m+1}),
\end{align*}
and let
\begin{align*} &
\seq{y}:\seq{\sty};\ 
\seq{x''}_{1,\dots,m}
 \pn 
[e_{j'} / z_{j'}]_{j'=1}^{r_{q+m+1}}
\fm
:\Prop\tr
\\&\quad
(\fa'''_\Mark,\fa'''_0,\ldots,\fa'''_{m}),
\\&
\envx,\
 \seq{x''}_{1,\dots,m}
 \pn 
[\fa_{i'} / y_{i'}]_{i'=1}^{q}
[e_{j'} / z_{j'}]_{j'=1}^{r_{q+m+1}}
\fm
:\Prop\tr
\\&\quad
(\fa''_\Mark,\fa''_0,\ldots,\fa''_{k+m}).
\end{align*}
Then, by applying Lemma~\ref{lem:substitutionHigher}
\cref{item:mainOfSubstLemma},
and then by applying Lemma~\ref{lem:substitutionInt},
we obtain:
\begin{align}
%\label{eq:subRedMain3-1}
\fa''_j
&\beeq
\big[ (\fa_{i',j},\caa{\fa_{i'}}{k}{m'_{i'}}) / \seq{y}_{i'} \big]_{i'=1}^{q}
\fa'''_\Mark
\notag\\&=
\big[ (\fa_{i',j},\caa{\fa_{i'}}{k}{m'_{i'}}) / \seq{y}_{i'} \big]_{i'=1}^{q}
\big[ e_{j'} / z_{j'} \big]_{j'=1}^{r_{q+m+1}}
\fm_\Mark
\tags[,]{j=\Mark,0,\dots,k}
\\
%\label{eq:subRedMain3-3}
\fa''_{k+i}
&\beeq
\big[ \caa{\fa_{i'}}{k}{m'_{i'}} / \seq{y}^\circ_{i'} \big]_{i'=1}^{q}
\fa'''_i
\notag\\&=
\big[ \caa{\fa_{i'}}{k}{m'_{i'}} / \seq{y}^\circ_{i'} \big]_{i'=1}^{q}
\big[ e_{j'} / z_{j'} \big]_{j'=1}^{r_{q+m+1}}
\fm_i
\tags[.]{i=1,\dots,m}
\\
\label{eq:subRedMain3}
\end{align}

Now, for \(j=\Mark,0,\dots,k\), let
\begin{align}
\fa'_j=
\lambda \seq{w}_{1,\ldots,\maar{}}. %&
\fa''_j\,\seq{w}
\notag %\\&
\lor \textstyle\bigvee_{i=1}^{m}
 \exists \seq{u}_{1,\ldots,\maar{}}.
\Big(
%\begin{aligned}[t]
%&
\fa''_{k+i}\,\seq{u}
%\\
%&
\land \fp_{i,j}\,\seq{u}\,\seq{w}
\Big).\\
\label{eq:subRedMain5}
%\end{aligned}
\end{align}
Then by \rnp{Tr-ESub}, we have
\begin{align*}
&\envx \pn 
(\fa'=)
\esubst{\seq{x''}}{\seq{\fp}}
[\fa_{i'} / y_{i'}]_{i'=1}^{q}
[e_{j'} / z_{j'}]_{j'=1}^{r_{q+m+1}}
\fm
:\Prop\tr
\\&\quad
(\fa'_\Mark,\fa'_0,\ldots,\fa'_{k}).
\end{align*}
Also, by \cref{%
eq:subRedMain1,%
eq:subRedMain2-1,%
%eq:subRedMain2-2,%
eq:subRedMain2-3,%
%eq:subRedMain3-1,%
%eq:subRedMain3-2,%
%eq:subRedMain3-3,%
eq:subRedMain3,%
%eq:subRedMain4,%
eq:subRedMain5},
we have \(\fm'_j \eqdp \fa'_j\)
for \(j=\Mark,0,\dots,k\),
as required.
\ifdraft\newpage
\asd{The following cases are shown only in the draft mode.}
\myitem
Case where 
\(\fm' \redd \fa'\)
is of the form
\[
\fm_1\lor\fm_2 \redd \fm'_1\lor\fm_2
\]
with
\(
\fm_1 \redd \fm'_1
\):

\myitem
Case where 
\(\fm' \redd \fa'\)
is of the form
\[
\fm_1\lor\fm_2 \redd \fm_1\lor\fm'_2
\]
with
\(
\fm_2 \redd \fm'_2
\):
This case is similar to the previous case.

\myitem
Case where 
\(\fm' \redd \fa'\)
is of the form
\[
e_1\le e_2 \land \fm\redd \FALSE\land \fm
\]
with
\(
\sem{\pST e_1:\INT}  >  \sem{\pST e_2:\INT}
\) and \(
(e_1\le e_2) \neq \FALSE
\):

\myitem
Case where 
\(\fm' \redd \fa'\)
is of the form
\[
e_1\le e_2 \land \fm\redd \fm
\]
with
\(
\sem{\pST e_1:\INT} \le \sem{\pST e_2:\INT}
\):

\fi
\end{pfof}

\fi
\end{document}
\endinput
%%
%% End of file `sample-sigplan.tex'.